\newcommand{\rnote}[1]{}
\newcommand{\knote}[1]{}
\newcommand{\CC}[2]{\mathcal{C}\mathcal{C}(#1;#2)}
\newcommand{\Esc}[2]{\ol{\mathrm{G}}(#1,#2)}
\newcommand{\Gu}[2]{\ol{\mathrm{G}}(#1,#2)}
\newcommand{\Gad}[2]{\mathrm{G}(#1,#2)}
\newcommand{\G}[2]{\mathrm{G}(#1,#2)}
\newcommand{\Ggeneric}[2]{\mathrm{G}(#1,#2)}
\newcommand{\Center}[2]{\mathrm{Z}(#1,#2)}
\newcommand{\Const}[4]{C^{#1,#2}_{#3 #4}}
\newcommand{\hgt}{{\mathrm{ht}}}
\newcommand{\rank}{{\mathrm{rank}}}
\newcommand{\SL}{{\mathrm{SL}}}
\newcommand{\SP}{{\mathrm{Sp}}}
\newcommand{\PSp}{{\mathrm{PSp}}}
\newcommand{\PSL}{{\mathrm{PSL}}}
\newcommand{\Exp}{{\mathrm{Exp}}}
\newcommand{\Cay}{\mathrm{Cay}}
\newcommand{\Link}{\mathrm{Link}}
\newcommand{\complex}{\mathfrak{K}}
\def\row#1/#2!{#1_{\IfStrEq{#2}{}{n}{#2}} & \dynkin{#1}{#2}\\}
\title{High-Dimensional Expanders from Chevalley Groups}
\author{Ryan O'Donnell\footnotemark[1]${~}^{\tiny{\textcircled{r}}}$ \and Kevin Pratt%
\thanks{Computer Science Department, CMU. \{\texttt{odonnell},\texttt{kpratt}\}\texttt{cs.cmu.edu}.
\quad {\tiny \textcircled{r}} The author ordering was randomized.}${~}^{\tiny{\textcircled{r}}}$}
\date{}
\begin{document}

\maketitle

\begin{abstract}
Let $\Phi$ be an irreducible root system (other than $G_2$) of rank at least $2$, let $\F$ be a finite field with $p = \chr \F > 3$, and let $\Ggeneric{\Phi}{\F}$ be the corresponding Chevalley group.
We describe a strongly explicit high-dimensional expander (HDX) family of dimension~$\rank(\Phi)$, where $\Ggeneric{\Phi}{\F}$ acts simply transitively on the top-dimensional faces; these are $\lambda$-spectral HDXs with $\lambda \to 0$ as $p \to \infty$.
This generalizes a construction of Kaufman and Oppenheim (STOC 2018), which corresponds to the case $\Phi = A_d$.
Our work gives three new families of spectral HDXs of any dimension $\ge 2$, and four exceptional constructions of dimension $4$,~$6$,~$7$, and $8$.
\end{abstract}

\section{Introduction}
In 1989, Babai, Kantor, and Lubotzky made a conjecture that significantly guided research on expander graphs:
\begin{conjecture}      \label{conj:1}
    (\cite{BKL89}.) There are constants $k \in \N$ and $\lambda < 1$ such that for every nonabelian finite simple group~$G$, there is a symmetric set $S \subset G$ of $2k$ generators such that the Cayley graph $\Cay(G,S)$ is a $\lambda$-spectral expander graph.
\end{conjecture}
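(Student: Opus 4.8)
The plan is to prove Conjecture~\ref{conj:1} by invoking the classification of finite simple groups and exhibiting, for each family, a bounded-size symmetric generating set whose Cayley graph has a spectral gap bounded below by one absolute constant valid across the whole family. The finitely many sporadic simple groups contribute nothing: any generating set makes their Cayley graphs connected, and the worst of these finitely many gaps is folded into the final $\lambda$. What remains are the groups of Lie type and the alternating groups, and these fall into three regimes that need genuinely different tools --- Lie type of bounded rank, alternating groups, and classical groups of unbounded rank over bounded fields.

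For a group of Lie type of \emph{bounded} rank I would use Kazhdan's property (T). When $\rank(\Phi)\ge 2$ is fixed, the real Chevalley group $\G{\Phi}{\mathbb{R}}$ is a higher-rank simple Lie group, hence has property (T), and so does its lattice $\G{\Phi}{\mathbb{Z}}$; consequently the finite simple quotients of $\G{\Phi}{\F_p}$ are uniform expanders for the images of one fixed generating set, and the fields $\F_q$ with $q=p^f$ are reached the same way using property (T) of universal lattices such as $\SL_n(\mathbb{Z}[x])$ ($n\ge 3$), which surject onto $\SL_n(\F_q)$ for every~$q$. The rank-$1$ groups are the delicate point, since property (T) fails: for $\PSL_2(\F_p)$ one substitutes Selberg's $\tfrac{3}{16}$ bound --- property $(\tau)$ of $\SL_2(\mathbb{Z})$ with respect to its congruence subgroups --- while the twisted rank-$1$ families ($\mathrm{PSU}_3$, small Ree) call for a Bourgain--Gamburd-type spectral-gap argument, and the Suzuki groups ${}^2B_2(q)$, which live only in characteristic~$2$ and are not congruence quotients of an arithmetic group in any obvious way, are a genuine leftover (ultimately handled by Breuillard--Green--Tao).

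The crux is the alternating groups $\mathrm{Alt}(n)$, whose obvious small generating sets are nothing like higher-rank lattices, so property (T) is unavailable and a bounded-size configuration must be \emph{engineered}. Following Kassabov, the plan is to realize $\mathrm{Alt}(n)$ --- for $n$ of a suitable shape, with small $n$ dispatched by hand --- as generated by a bounded number of subgroups, each a product of copies of $\SL_k$ over a finite ring (with $k$ bounded and the rings growing), chosen so that the ambient matrix group over a polynomial ring $\F_p[x_1,\dots,x_\ell]$ has the \emph{relative} property (T), the model case being the pair $\big(\SL_k\ltimes R^{k},\,R^{k}\big)$. Since each $\SL_k$ over a finite ring carries a bounded symmetric generating set whose own Cayley graph has a gap, the union of these subgroups uses only boundedly many generators; one then certifies a spectral gap for this union by coupling the relative (T) --- which gives each piece's action on the relevant coordinates a gap --- to a bounded-generation statement controlling how the pieces overlap, and finally embeds the whole configuration into $\mathrm{Alt}(n)$ through natural permutation actions with carefully matched parameters.

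The classical groups of unbounded rank over bounded fields --- $\PSL_n(\F_q)$, $\PSp_{2n}(\F_q)$, and the orthogonal and unitary analogues with $n\to\infty$ and $q$ fixed --- can then be handled by transporting this argument: in a suitable basis such a group contains a large grid of pairwise commuting copies of $\SL_2(\F_q)$ (or of $\SL_k(\F_q)$ for bounded~$k$), and a bounded number of these together with a permutation element shuffling the grid generate with a gap by the same mechanism of relative (T) plus bounded generation. I expect the main obstacle to be exactly this alternating-group step (and, through it, the unbounded-rank classical case): proving a \emph{uniform} spectral gap for the engineered configurations requires first establishing relative property (T) / property $(\tau)$ for the relevant matrix groups over rings and then gluing it to an explicit combinatorial embedding with controlled parameters --- a substantial undertaking. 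The Suzuki family is a second, more isolated obstacle, since the arithmetic-group toolbox does not reach it and one is forced to argue directly.
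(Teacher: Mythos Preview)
The paper does not prove this statement. Conjecture~\ref{conj:1} is quoted purely as background and motivation; the paper's only ``proof'' is the sentence immediately following it, which attributes the resolution to the combination of Kassabov~\cite{Kas07} (alternating groups), Kassabov--Lubotzky--Nikolov~\cite{KLN06} (non-Suzuki groups of Lie type), and Breuillard--Green--Tao~\cite{BGT11} (Suzuki groups), together with the Classification of Finite Simple Groups. There is nothing further to compare against.

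Your write-up is not a proof either, and you say so yourself: it is a plan that correctly identifies the three regimes and the tools historically used for each, but then explicitly flags the alternating-group engineering and the Suzuki case as ``a substantial undertaking'' and ``a genuine leftover.'' As a road map it is broadly faithful to how the conjecture was actually settled --- CFSG reduction, property~(T)/$(\tau)$ for bounded rank, Kassabov's construction for $\mathrm{Alt}(n)$ and its transport to unbounded-rank classical groups, and a separate argument for Suzuki --- but none of the hard steps are carried out. If the assignment was to reproduce the paper's treatment, the correct answer is simply to cite \cite{Kas07,KLN06,BGT11,Asc04} as the paper does; if the assignment was to give a self-contained proof, then what you have is an outline, not a proof, and the acknowledged gaps (Kassabov's relative-(T)-plus-bounded-generation machinery, and the Suzuki case) are each paper-length arguments that cannot be waved through.
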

\noindent (Here we say that a graph is a $\lambda$-spectral expander if all the eigenvalues of its random walk matrix, excluding the largest, are at most~$\lambda$.)

Notable achievements toward the conjecture include: Kassabov's proof~\cite{Kas07} for the alternating groups; work of Lubotzky and Nikolov~\cite{KLN06} proving the conjecture for non-Suzuki groups of Lie type (the Chevalley groups and their twisted versions); and, the Breuillard--Green--Tao~\cite{BGT11} proof for the Suzuki groups.
In light of the Classification of Finite Simple Groups~\cite{Asc04}, these completed the proof of \Cref{conj:1}.
An immediate consequence is that for every nonabelian simple group~$G$, there is a $2k$-regular $\lambda$-spectral expander~$\complex$ such that $G$ acts transitively on the vertices of~$\complex$.

Having expander graphs with such nontrivial symmetry properties (or even stronger ones) has played an important role in applications to computer science.
For example, motivated by the search for locally testable codes (see~\cite{KS08}), Kaufman and Wigderson~\cite{KW16} made substantial progress on finding so-called ``highly symmetric'' LDPC codes of constant rate and relative distance (``good'') using expanding Cayley graphs of nonabelian groups; at the same time, they showed that highly symmetric LDPC codes arising from abelian --- or even solvable --- groups cannot work.
Later, notable work of Kaufman and Lubotzky~\cite{KL12} (see also~\cite{Bec16}) positively resolved the problem, giving explicit, highly symmetric, good LDPC codes; the main tool was the use of explicit \emph{edge-transitive} (not just vertex-transitive), highly expanding (indeed, Ramanujan) Cayley graphs of $\PSL_2(\F_q)$ (for $q = 4093$).
In turn, the existence of these highly-symmetric expanders arose from the construction of Ramanujan \emph{high-dimensional expanders} (HDXs)~\cite{Bal00,CSZ03,Li04,LSV05,LSV05b,Sar04} from Bruhat--Tits buildings, relying on the Lafforgue's work~\cite{Laf02} on the Langlands correspondence.

High-dimensional expanders --- defined, say, as simplicial complexes where the $1$-skeleton of every link is a $\lambda$-spectral expander --- have been crucial in many new works in theoretical computer science, either through inspiration, their spectral analysis, or their direct construction. 
Example applications include results in analysis of Boolean functions~\cite{DDFH18}, computational geometry~\cite{FGLNP12}, inapproximability~\cite{AJT20,DFHT21}, list-decoding~\cite{AJQST20,DHKNT21}, Markov chain mixing~\cite{AL20}, property testing~\cite{DK17,DD19,KM20,KO20}, and quantum codes~\cite{EKZ20,KT21}; particularly notable examples including the resolution of the Mihail--Vazirani Conjecture on the bases-exchange walk for matroids~\cite{ALOV19} and the construction of locally testable codes of constant rate, distance, and locality~\cite{DELLM21,PK21}.


\subsection{Our goal}   \label{sec:goal}
In this paper, we investigate a problem similar to \Cref{conj:1} for high-dimensional expanders.
Namely, for nonabelian finite simple groups~$G$, we seek:
\begin{enumerate}
    \item bounded-degree $\lambda$-spectral HDXs whose top-dimensional faces are acted on transitively by~$G$,
    \item with $\lambda$ arbitrarily close to~$0$, as opposed to merely bounded away from~$1$.
\end{enumerate}
(Recall that existence of highly symmetric good LDPC codes was resolved by obtaining \emph{one}-dimensional HDXs --- i.e., expander graphs --- with both properties.)
The aforementioned HDXs built from Bruhat--Tits buildings~\cite{Li04,LSV05,LSV05b,Sar04} have property~(2) above, and the work of Kaufman and Lubotzky~\cite{KL12} also verified property~(1) for $G = \PSL_3(\F)$ (for  $\chr \F$ sufficiently large).
Later, Kaufman and Oppenheim~\cite{KO18} gave a new (and elementary) construction of HDX families of any dimension~$d$ satisfying both (1)~and~(2) with $G = \PSL_{d+1}(\F)$.
These two constructions are the only previous examples of bounded-degree $\lambda$-spectral HDXs of which we are aware.  To quote the final remark from~\cite{LSV05}: ``\emph{Of course one hopes eventually to define and construct Ramanujan complexes as quotients of the Bruhat--Tits buildings of other simple groups as well.}''

\paragraph{Results.}
We give strongly explicit constructions of $d$-dimensional HDX families satisfying properties (1)~and~(2) above, for any rank-$d$ \emph{Chevalley group}~$G$ (except for ``$G_2$'') over any field~$\F$ of characteristic exceeding~$3$.\footnote{In fact, we can show that our construction works for characteristic~$3$ when one excludes the case of~$G_2$.  But for simplicity of presentation we will just assume the characteristic exceeds~$3$.\label{foot:5}}
Informally, Chevalley groups (also known as the untwisted groups of Lie type) are the finite-field analogues of continuous Lie groups.
These groups are specified by two pieces of data: a \emph{root system} $\Phi$, consisting of a set of vectors in~$\R^d$ with certain symmetry properties, and a finite field $\F$.
Our work gives a general recipe that produces HDX families from Chevalley groups with $\Phi$~and the characteristic of~$\F$ being fixed, and with $|\F|$ growing.
Our approach generalizes that of~\cite{KO18}, which corresponds to the case $\Phi = A_d$.
As with their work, our construction incidentally gives new families of strongly explicit $\Delta$-degree-bounded $\lambda$-spectral expander graphs, with $\lambda \to 0$ as $\Delta \to \infty$.\rnote{Some comments about graphs of constant link?}

\subsection{Our approach}
As in~\cite{KO18}, we associate to $G$ a \emph{coset complex}, a kind of $d$-dimensional simplicial complex determined by~$G$ and a choice of subgroups $H_1, \ldots, H_{d+1}$ of~$G$.
A few challenges arise in generalizing the construction of~\cite{KO18} to Chevalley groups $G$ of type other than~$A_d$.
One immediate question is: what is a ``good" choice of $H_1, \ldots, H_{d+1}$?
We give one such choice, which has an elegant description in terms of the root system~$\Phi$ associated with $G$: the $H_i$'s are certain unipotent subgroups of~$G$ (these are essentially groups of upper unitriangular matrices), obtained from a set of fundamental roots of $\Phi$. While all of our constructions can be realized with matrices (see the examples in the next section), it is more convenient in our analysis to work with a set of generators and relations of $G$ known as the \emph{Steinberg presentation}. In particular, the \emph{Chevalley commutator formula} gives us workable descriptions of the subgroups~$H_i$, and the links of our complexes.

As in~\cite{KO18}, we apply the \emph{trickling down} theorem of \cite{Opp18} (originating in work of Garland~\cite{Gar73}) to show that these coset complexes have expanding links.
This theorem says that under a mild connectivity condition, it suffices to show that the links of the $(d-2)$-dimensional faces are good expander graphs.
The connectivity condition will follow from some calculations using the properties of Chevalley groups and root systems. In their case of $\Phi = A_d$, Kaufman--Oppenheim establish expansion of links by appealing to a general result of Ershov--Jaikin-Zapirain~\cite{EJ10} on expansion in certain groups of nilpotency class two.
Unfortunately, to handle root systems~$\Phi$ that are not ``simply-laced'', one would need an analogous result for groups of nilpotency class three (and higher, when $\Phi = G_2$).
Related results were given in \cite{EJK17} (see its Sec.~10.3), but these are not strong enough for our setting.
An alternative, and much simpler, proof of expansion of the Kaufman--Oppenheim complexes was given by Harsha and Saptharishi~\cite{HS19}; their proof was quite specific to the $\Phi = A_d$ case, but we were much inspired its elementary nature.

We prove expansion by observing that, when $\Phi \neq G_2$, the squares of the links of the $(d-2)$-dimensional faces are Cayley graphs of abelian groups. This allows us to express their eigenvalues as character sums, which we bound with an elementary argument that ultimately boils down to the Schwartz--Zippel lemma.  (In the $G_2$ case, the squared links are not abelian Cayley graphs, but we discuss some approach that might be used ito show their expansion.)

\subsection{Example constructions}  \label{sec:eg}
In this section we explicitly give the easiest new HDX family implied by our work.
We start by recalling the basic construction of \cite{KO18}, arising from the group $G = \SL_3(\F)$.\footnote{In \cite{KO18} they work over the ring $\F_p[x]/(x^m)$ rather than the field~$\F_{p^m}$, but this does not materially change their result, and we prefer to work with the field.  Also, regarding the distinction between $\SL_3(\F)$ and $\PSL_3(\F)$, see \Cref{foot:universal}.}
Let $\F = \F_p[x]/( f )$ where $p$ is prime and $f \in \F_p[x]$ is irreducible of degree~$m$.
Now define the following three subgroups of $G$:
\begin{align*}
H_1 &= \left \{\begin{bmatrix}
1 & \ell_1 & Q \\
 & 1 & \ell_2 \\
 &  & 1\\
\end{bmatrix} : \deg(\ell_1) , \deg(\ell_2) \le 1, \deg(Q) \le 2 \right \},\\
H_2 &= \left \{ \begin{bmatrix}
1 & \ell_1 &  \\
 & 1 &  \\
\ell_2 & Q & 1\\
\end{bmatrix}: \deg(\ell_1) , \deg(\ell_2) \le 1, \deg(Q) \le 2 \right \},\\
H_3 &= \left \{ \begin{bmatrix}
1 &  & \\
Q & 1 & \ell_1 \\
\ell_2 &  & 1\\
\end{bmatrix} : \deg(\ell_1) , \deg(\ell_2) \le 1, \deg(Q) \le 2 \right \}.
\end{align*}
Let $\complex(p,m)$ be the $2$-dimensional simplicial complex whose vertices are the cosets of these subgroups inside $\SL_3(\F)$, and where a ``triangle'' ($2$-dimensional face) is added between a triple of cosets $g_1 H_1,g_2 H_2,g_3 H_3$ whenever $g_1 H_1 \cap g_2 H_2 \cap g_3 H_3 \neq \emptyset$. Edges are included between any two cosets contained in a common triangle. This is an example of a \emph{coset complex}, a well-studied construction from the theory of algebraic groups.

In \cite{KO18} it was shown that for any fixed $\lambda > 0$, and for sufficiently large $p$, the complex $\complex(p,m)$ is a bounded-degree $\lambda$-spectral HDX.
(Here ``bounded-degree'' means that each vertex of $\complex(p,m)$ is contained in a number of triangles depending only on~$p$.)
Moreover, $\SL_3(\F)$ acts simply transitively on the set of triangles in $\complex(p,m)$.
In a similar manner, Kaufman and Oppenheim show how a $d$-dimensional HDX family can be associated to~$\SL_{d+1}(\F)$.

Following this, the most basic new construction provided by our work is as follows.
Again, we form a coset complex, but this time we will consider cosets of subgroups of the $4 \times 4$ \emph{symplectic group}, $\SP_4(\F)$,\footnote{%
Technically, this group is not simple; it only becomes the simple group $\PSp_4(\F)$ upon identifying the matrices $A$~and~$-A$.
This is an example of the (very minor) distinction between ``universal'' and ``adjoint'' Chevalley groups that is explained in \Cref{def:adj}.\label{foot:universal}} defined by
\[\SP_4(\F) = \left \{ A \in \F^{4 \times 4} : A\begin{bmatrix}
0 & I_{2\times 2}\\
-I_{2\times 2} & 0
\end{bmatrix}A^\intercal = \begin{bmatrix}
0 & I_{2\times 2}\\
-I_{2\times 2} & 0
\end{bmatrix} \right \} .\]
The vertices of our coset complex $\complex_{\SP_4}(p,m)$ will be the cosets of the following subgroups of $\SP_4(\F)$:
\begin{align*}
H_1 &=
\left \{ \begin{bmatrix}
1 & \ell_1 & C & \ell_1 \ell_2 + Q\\
 & 1 & Q & \ell_2\\
 &  & 1& \\
 &  & -\ell_1 & 1
\end{bmatrix} : \deg(\ell_1),\deg(\ell_2) \le 1, \deg(Q) \le 2, \deg(C) \le 3 \right \},\\
H_2 &=
\left \{ \begin{bmatrix}
1 & \ell_1 &  & \\
 & 1 &  & \\
\ell_2 & Q & 1& \\
\ell_1 \ell_2 + Q & C & -\ell_1 & 1
\end{bmatrix}  : \deg(\ell_1),\deg(\ell_2) \le 1, \deg(Q) \le 2, \deg(C) \le 3 \right \},\\
H_3 &= \left \{
\begin{bmatrix}
1 &  &  & \\
 & 1 &  & \ell_1\\
\ell_2 &  & 1& \\
 &   &  & 1
\end{bmatrix} : \deg(\ell_1), \deg(\ell_2) \le 1 \right \}.
\end{align*}
The triangles in $\complex_{\SP_4}(p,m)$ are again added between triples of cosets whenever they have a nontrivial intersection.
Our work shows that for any $\lambda > 0$, provided~$p \geq 2\frac{(1+\lambda)^2}{\lambda^2}$, the $2$-dimensional complexes $(\complex_{\SP_4}(p,m))_{m}$ form a (strongly explicit) $\lambda$-spectral HDX family of size $\Theta(p^{10m-4})$ in which each vertex participates in at most $p^{22}$ triangles.
Moreover, the group $\SP_4(\F_{p^m})$ acts on $\complex_{\SP_4}(p,m)$, with the action being transitive on triangles.
Finally, we remark that the underlying skeleton of $\complex_{\SP_4}(p,m)$ is a (strongly explicit) $\lambda$-spectral expander graph of degree at most $p^{11}$ and with $\Theta(p^{10m-4})$ vertices.
Since this graph is tripartite, its smallest eigenvalue is at least $-1/2$, and it is therefore also a \emph{two}-sided $1/2$-spectral expander.

\subsection{Outline}
In \Cref{sec:prelim} we give an overview of high-dimensional spectral expansion and coset complexes. We then briefly discuss Chevalley groups and root systems, making explicit all facts about Chevalley groups that we will need.

In \Cref{sec:construct} we give the choice of subgroups used in our coset complex construction. We show in \Cref{cor:connectedlinks} that these have the connectivity properties needed to apply the trickling down \Cref{thm:trickle}. In \Cref{sec:explink} we show that the links of the $(d-2)$-dimensional faces in these complexes are good expander graphs. By \Cref{fact:link}, this conveniently reduces to studying the expansion of vertex links in three different $2$-dimensional complexes, two of which are the examples in \Cref{sec:eg}.

We conclude with further questions. It is interesting to ask if our analogue of \Cref{conj:1} has an affirmative answer when~$G$ is a more ``combinatorial" group; for example, the symmetric/alternating group. We also leave open the case of the Chevalley group based on root system~$G_2$; we conjecture it has the desired expansion properties, and suggest an approach to proving this.

\section{Preliminaries}\label{sec:prelim}
Let $\N = \{0,1,2,\ldots\}$, $\Z_+ = \{1,2,3,\ldots\}$. We identify elements in a finite field $\F$ of size $p^m$ (where $p$~is prime) with polynomials in $\F_p[x]/(f)$ for some irreducible polynomial $f \in \F_p[x]$ of degree~$m$. When we write $\deg(t)$ for $t \in \F$ we mean the degree of the corresponding polynomial in the quotient ring.
If $g$ and $h$ are elements of a group, we use the notation $[g,h] = g^{-1}h^{-1}gh$ for their commutator. 

\subsection{High-dimensional spectral expansion}
In this section we recall the notion of spectral HDX families. Let $\complex(0)$ be a finite set. A \emph{simplicial complex} $\complex$ with vertex set $\complex(0)$ is a collection of subsets of $\complex(0)$ satisfying the following conditions:
\begin{enumerate}
\item $\{v\} \in \complex$ for all $v \in \complex(0)$;
\item If $\sigma \in \complex$, then $\tau \in \complex$ for all $\tau \subseteq \sigma$.
\end{enumerate}
Said differently, $\complex$ is a downward-closed hypergraph on the set $\complex(0)$.  For $i = -1, 0, 1, 2, \dots$, we denote by $\complex(i)$ the set of subsets of size $i+1$ in $\complex$. An element of $\complex(i)$ is called an \emph{$i$-dimensional face}. $\complex$~is said to be \emph{pure} if all maximal faces are $d$-dimensional for some~$d$; in this case we say that $d$~is the \emph{dimension} of $\complex$, denoted $\dim \complex$. (In this work, all simplicial complexes will be pure.)
Note that a $1$-dimensional simplicial complex can be identified with an ordinary graph.
We say that $\complex$ is of \emph{$\Delta$-bounded degree} if every vertex participates in at most~$\Delta$ maximal faces; and, we say that $\complex$ is \emph{$k$-partite} if there is a partition of $\complex(0)$ into $k$ parts such that each face has intersection size at most~$1$ with each part. (Pure $(d+1)$-partite complexes are sometimes called \emph{balanced}, or \emph{numbered}.)

The \emph{link} of a face $\sigma \in \complex$ is the simplicial complex $\Link_\sigma(\complex) = \{\tau \setminus \sigma: \tau \in \complex, \sigma \subseteq \tau\}$. In particular, the link of the $(-1)$-dimensional face $\emptyset$ is $\complex$.
For a pure $d$-dimensional complex~$\complex$, we define the \emph{$1$-skeleton} of~$\complex$ to be the \emph{multigraph} on vertex set~$\complex(0)$ in which $j,k \in \complex(0)$ are connected by a number of edges equal to the number of $d$-dimensional faces containing $\{j,k\}$.  We will say that $\complex$ is \emph{connected} if its $1$-skeleton is a connected (multi)graph.  Finally, for a face~$\sigma$, we introduce the notation $K_\sigma$ for the $1$-skeleton of $\Link_\sigma(\complex)$, and we will write $\lambda_2(K_\sigma)$ for the second largest eigenvalue of the standard random walk matrix of~$K_\sigma$. (This refers to the walk on the vertices of~$K_\sigma$ in which a random out-edge is taken at each step.)\\

By now, the most common definition of expansion for HDXs is probably the following:
\begin{definition}
    (\cite{Opp18,KO18}.) A $d$-dimensional pure simplicial complex $\complex$ is a \emph{$\lambda$-spectral HDX} (also known as \emph{$\lambda$-link} or \emph{$\lambda$-local-spectral} HDX) if $\lambda_2(K_\sigma) \le \lambda$ for all faces $\sigma$ of dimension at most~$d-2$.
\end{definition}
\noindent (Note that the $d = 1$ case yields the usual notion of a $\lambda$-expander graph, one in which the second eigenvalue of the random walk matrix is at most~$\lambda$.)  The trickling down theorem~\cite{Opp18} essentially shows that a $d$-dimensional complex is an HDX provided the links of its $(d-2)$-dimensional faces are $\lambda$-expander graphs for $\lambda < \frac{1}{d}$:
\begin{theorem}\label{thm:trickle}
    \textnormal{(\cite{Opp18}.)}
     Let $\complex$ be a $d$-dimensional pure simplicial complex in which $\Link_\sigma(\complex)$ is connected for all $\sigma \in \complex(i)$, $i \leq d-2$.  Further suppose that $\lambda_2(K_\sigma) \leq \gamma \leq \frac{1}{d}$ for all $\sigma \in \complex(d-2)$.
    Then $\complex$ is a $\left (\frac{\gamma}{1-(d-1)\gamma} \right)$-spectral HDX.
\end{theorem}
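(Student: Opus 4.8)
The plan is to strip the complex of one dimension at a time, reducing the whole statement to a single ``trickling-down'' lemma about $2$-dimensional-and-higher complexes (in the spirit of Garland's method), and then iterating that lemma via a short recursion. Put $\gamma_i := \max_{\sigma \in \complex(i)} \lambda_2(K_\sigma)$ for $-1 \le i \le d-2$, so that $\gamma_{-1} = \lambda_2$ of the $1$-skeleton of $\complex$ and $\gamma_{d-2} \le \gamma$ by hypothesis. Fix $i \le d-3$ and $\sigma \in \complex(i)$, and let $Y := \Link_\sigma(\complex)$, a pure complex of dimension $d-1-i \ge 2$. By the connectivity hypothesis of the theorem, $Y$ is connected and so is every vertex link of $Y$ (these are the complexes $\Link_\tau(\complex)$ for $\tau$ of dimension at most $d-2$); moreover the $1$-skeleton of the link of a vertex $v$ of $Y$ is $K_{\sigma\cup\{v\}}$, hence a $\gamma_{i+1}$-expander. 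Thus the theorem reduces to the following one-step lemma: \emph{if $Y$ is a connected pure complex of dimension $\ge 2$, all of whose vertex links are connected and have $\mu$-expanding $1$-skeletons, then the $1$-skeleton of $Y$ is a $\tfrac{\mu}{1-\mu}$-expander.} Granting it, $\gamma_i \le \tfrac{\gamma_{i+1}}{1-\gamma_{i+1}}$ for every $i \le d-3$; since $t \mapsto \tfrac{t}{1-t}$ is increasing on $(-\infty,1)$, an induction from $\gamma_{d-2} \le \gamma$ gives $\gamma_{d-2-k} \le \tfrac{\gamma}{1-k\gamma}$ for $0 \le k \le d-1$ (the assumption $\gamma \le \tfrac1d$ keeps every denominator $\ge \tfrac1d$). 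In particular $\gamma_{-1} \le \tfrac{\gamma}{1-(d-1)\gamma}$, and as $\tfrac{\gamma}{1-k\gamma}$ is increasing in $k$ we get $\gamma_i \le \tfrac{\gamma}{1-(d-1)\gamma}$ for all $i$; by definition $\complex$ is then a $\tfrac{\gamma}{1-(d-1)\gamma}$-spectral HDX.

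For the one-step lemma I would run the standard local-to-global computation. Weight each face of $Y$ in proportion to the number of top-dimensional faces containing it, normalized to a probability measure; on the vertices this is the measure $m_0$ that is stationary for the standard walk $M$ on the $1$-skeleton $K := K_\emptyset$ of $Y$, and it restricts to the corresponding vertex measure $m_0^v$ on each $\Link_v(Y)$. Given $f : Y(0) \to \R$, write $\bar f_v := \mathbb{E}_{m_0^v}[f]$ for the average of $f$ over the vertices of $\Link_v(Y)$. Consider the experiment: pick a uniformly random top face $\eta$ of $Y$, then a uniform $v \in \eta$, then a uniform pair $\{a,b\} \subseteq \eta$. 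A double-counting check (for a fixed pair $\{a,b\} \subseteq \eta$ there are equally many admissible apexes $v$) yields the Dirichlet-form identity $\mathcal{E}_K(f) = \mathbb{E}_{v\sim m_0}[\mathcal{E}_{K_v}(f)]$, where $K_v$ is the $1$-skeleton of $\Link_v(Y)$ and the same weights are used throughout. The law of total variance applied to the same experiment gives $\mathrm{Var}_{m_0}(f) = \mathbb{E}_{v\sim m_0}[\mathrm{Var}_{m_0^v}(f)] + \mathrm{Var}_{v\sim m_0}(\bar f_v)$, and one checks that the conditional law of the ``other vertex'' given $v$ is one step of $M$ from $v$, so $v \mapsto \bar f_v$ equals $Mf$. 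Applying the Poincaré inequality on each (connected, $\mu$-expanding) vertex link, $\mathrm{Var}_{m_0^v}(f) \le \tfrac{1}{1-\mu}\,\mathcal{E}_{K_v}(f)$, and averaging over $v$ and plugging in the two identities gives $\mathrm{Var}_{m_0}(f) \le \tfrac{1}{1-\mu}\,\mathcal{E}_K(f) + \|Mf\|_{m_0}^2$. Finally, take $f$ to be an eigenfunction of $M$ for the second-largest eigenvalue $\lambda$ (orthogonal to the constants, since $K$ is connected): this inequality becomes $1 \le \tfrac{1-\lambda}{1-\mu} + \lambda^2$, i.e.\ $(1-\lambda)(1+\lambda) \le \tfrac{1-\lambda}{1-\mu}$, and dividing by $1-\lambda > 0$ gives $\lambda \le \tfrac{\mu}{1-\mu}$.

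The recursion and the verification that the link weights really are the induced ones are routine bookkeeping. The one genuinely load-bearing step is the Garland decomposition: one must check the two sampling identities hold \emph{exactly} — that the ``other vertex'' has marginal $m_0$, that its conditional law given $v$ is precisely a step of $M$ (so $\bar f_v = (Mf)(v)$ with no fudge factor), and that no stray constant creeps into the Dirichlet-form identity — so that summing the local Poincaré inequalities costs nothing. The connectivity hypotheses are essential in exactly two places: they make each $\lambda_2(K_\sigma)$ the relevant quantity (rather than $1$), and at the final step they supply $\lambda < 1$, without which the division would be illegitimate.
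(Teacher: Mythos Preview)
The paper does not prove this theorem; it is quoted from \cite{Opp18} and used as a black box, so there is no in-paper argument to compare against. Your sketch is essentially Oppenheim's original proof: the one-step Garland decomposition $\mathrm{Var}(f)=\mathbb{E}_v[\mathrm{Var}_{m_0^v}(f)]+\|Mf\|^2_{m_0}$ together with $\mathcal{E}_K(f)=\mathbb{E}_v[\mathcal{E}_{K_v}(f)]$ and the local Poincar\'e inequality give $\lambda\le\mu/(1-\mu)$, and iterating yields $\gamma_{d-2-k}\le\gamma/(1-k\gamma)$. The bookkeeping is right, including the recursion and the use of connectivity to ensure $\lambda<1$ before dividing by $1-\lambda$.

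One small imprecision: in your sampling experiment you write ``a uniform pair $\{a,b\}\subseteq\eta$'' but for the Dirichlet-form identity to match $\mathcal{E}_{K_v}$ you need $\{a,b\}\subseteq\eta\setminus\{v\}$ (as your parenthetical ``admissible apexes'' remark already implicitly assumes). With that correction the two identities are exact with no constants, since for a top face of size $D+1$ each edge $\{a,b\}$ has exactly $D-1$ valid apexes $v$, and $m_0(v)$ is proportional to the number of top faces through~$v$.
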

\noindent (We remark that in the case $d = 2$, if $\complex$ is a Cayley graph then the conclusion of this theorem can be improved by a factor of~$2/\sqrt{3}$; see~\cite{Zuk03}.)\\

The objects we seek are (highly symmetric versions of) the following:
\begin{definition}
    A \emph{$d$-dimensional, $\Delta$-bounded degree, $\lambda$-spectral HDX family} is a sequence $(\complex_n)_{n \in \N}$ of pure $d$-dimensional, $\Delta$-bounded degree complexes, with $\complex_n$ having some $n' = \Theta(n)$ vertices, such that $\complex_n$ is a $\lambda$-spectral HDX for sufficiently large~$n$.
    We also say the family is \emph{explicit} if there is a $\poly(n)$-time algorithm for computing the description of~$\complex_n$, and \emph{strongly explicit} if there is a $\polylog(n)$-time algorithm.  (See the proof of \Cref{thm:main}, item~1 for more details.)
\end{definition}

\subsection{Coset complexes}
The following notion has been studied since at least the 1950 PhD thesis of Lann\'{e}r~\cite{Lan50}:
\begin{definition}
    Let $G$ be a finite group and let $\calH = (H_1, \dots, H_{d+1})$ be a sequence of subgroups.
    The associated \emph{coset complex} $\CC{G}{\calH}$ is the pure $d$-dimensional, $(d+1)$-partite simplicial complex with vertices being the cosets $\bigsqcup_i G/H_i$, and with maximal faces $\{gH_1, \dots, gH_{d+1} : g \in G\}$.
    Equivalently, a set of cosets forms a face if all cosets have an element in common.
\end{definition}
Some well-studied instances of coset complexes are \emph{Coxeter complexes} and \emph{Tits buildings}~\cite{Bjo84}.
\begin{definition}                                        \label{fact:types}
    The $i$th part of the $(d+1)$-partite coset complex $\CC{G}{\calH}$ is the coset~$G/H_i$, and the \emph{type} of a face~$\sigma$ refers to the subset of parts~$[d+1]$ to which its vertices belong.
\end{definition}
The group $G$ naturally acts on $\CC{G}{\calH}$ by left-multiplication, and it is easy to see the following:
\begin{fact}                                        \label{fact:type-preserving}
    The action of~$G$ on the $(d+1)$-partite complex $\CC{G}{\calH}$ is \emph{type-preserving} (it does not change the type of any face), and transitive on the maximal faces.
    Moreover, the action is simply transitive if $H_1 \cap H_2 \cap \cdots \cap H_{d+1} = \{1\}$.
\end{fact}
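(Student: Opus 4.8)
The plan is to verify all three assertions directly from the definition of the left-multiplication action, $g'\cdot(gH_i) = (g'g)H_i$ for $g,g'\in G$, so the work is essentially bookkeeping. First I would observe that left-multiplication carries a coset of $H_i$ to another coset of $H_i$, and hence fixes the part index $i\in[d+1]$ of every vertex; since the type of a face is by definition the set of part indices of its vertices, the action preserves the type of every face. Along the way one checks that the action genuinely sends faces to faces: if the cosets comprising a face share a common element $g$, then their images under $g'$ share the common element $g'g$, so the image is again a face.

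Next I would handle transitivity on maximal faces. By definition every maximal face has the form $F_g := \{gH_1,\dots,gH_{d+1}\}$ for some $g\in G$ (and this really is a face, since $g$ lies in each $gH_i$), and the action satisfies $g'\cdot F_g = F_{g'g}$. Thus, given two maximal faces $F_g$ and $F_{g'}$, the element $g'g^{-1}$ maps the first to the second, which is exactly transitivity.

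Finally, for simple transitivity it suffices, given transitivity, to show that the stabilizer of one maximal face is trivial. Taking $F_1 = \{H_1,\dots,H_{d+1}\}$, suppose $h\in G$ satisfies $h\cdot F_1 = F_1$ as a set of vertices. Since the action is type-preserving, $hH_i$ has the same type as $H_i$, so $h$ cannot permute the parts, forcing $hH_i = H_i$ for each individual $i$; equivalently $h\in H_i$ for every $i$, i.e.\ $h\in H_1\cap\cdots\cap H_{d+1}$. Under the hypothesis that this intersection is $\{1\}$ we get $h=1$, so the stabilizer is trivial and the action on maximal faces is simply transitive. I do not anticipate any real obstacle; the only step that deserves a moment's care is the invocation of type-preservation in the stabilizer computation, which is precisely what excludes an element that stabilizes the face as a vertex set while shuffling its vertices among different parts.
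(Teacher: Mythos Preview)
Your proposal is correct. The paper does not actually prove this statement; it is stated as an easy fact with the remark that ``it is easy to see the following,'' so your direct verification from the definition of the left-multiplication action is exactly the intended (and standard) argument.
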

\noindent (In fact, Lann\'{e}r~\cite{Lan50} showed that whenever there is a $G$-action on some $(d+1)$-partite complex that is type-preserving and transitive on maximal faces, then the complex must be of the form $\CC{G}{\calH}$ for some subgroups $H_1, \dots, H_{d+1}$.)

We can also easily understand the connectivity and link structure of coset complexes, as the following facts show.

\begin{definition}
    Given $\calH$ and $T \subseteq [d+1]$ we write $H_T = \bigcap_{i \in T} H_i$, with the convention that $H_\emptyset = \la H_1, \dots, H_{d+1} \ra$, the subgroup of~$G$ generated by~$\calH$.
\end{definition}
The following facts are easy to prove:
\begin{fact}                                        \label{fact:connected}
    (\cite{AH93}.)
    $\CC{G}{\calH}$ is connected  if and only if $H_\emptyset = G$.
\end{fact}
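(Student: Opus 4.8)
The plan is to pin down the connected component of a fixed basepoint vertex in the $1$-skeleton of $\CC{G}{\calH}$ and show it equals exactly the set of cosets admitting a representative in $H_\emptyset$.

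First I would record the edge structure. Since $\{gH_1,\dots,gH_{d+1}\}$ is a maximal face for every $g \in G$, two vertices $gH_i$ and $g'H_j$ are joined by an edge precisely when $gH_i \cap g'H_j \neq \emptyset$: for $i = j$ this forces $gH_i = g'H_j$, and for $i \neq j$ it says $g'H_j = \tilde g H_j$ for some $\tilde g \in gH_i$. In particular, for any fixed $g$, the $d+1$ vertices $gH_1,\dots,gH_{d+1}$ are pairwise adjacent since they span a maximal face, so one may freely ``change type'' while sitting at a fixed group element.

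Next, let $C$ be the connected component of the vertex $H_1$ (the identity coset in part~$1$). I claim $C = \{\, gH_i : g \in H_\emptyset,\ i \in [d+1]\,\}$. For the inclusion $C \subseteq (\text{RHS})$: the right-hand set is closed under adjacency, because if $g \in H_\emptyset$ and $gH_i$ is adjacent to $g'H_j$, then $g'H_j = \tilde g H_j$ with $\tilde g \in gH_i \subseteq H_\emptyset$ (using $H_i \subseteq H_\emptyset$), so $g'H_j$ again has a representative in $H_\emptyset$; as $H_1$ lies in the right-hand set, so does its whole component. For the reverse inclusion, write an arbitrary $g \in H_\emptyset$ as a word $g = h_1 h_2 \cdots h_k$ with $h_\ell \in H_{i_\ell}$, and induct on $k$. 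For $k = 0$ we have $g = 1$, and $H_1,\dots,H_{d+1} \in C$ since they span the maximal face through the identity. For the inductive step put $g' = h_1 \cdots h_{k-1}$; by hypothesis every $g'H_i \in C$, in particular $g'H_{i_k} \in C$, and since $g = g'h_k$ with $h_k \in H_{i_k}$ we get $gH_{i_k} = g'H_{i_k} \in C$. Finally $gH_i$ is adjacent to $gH_{i_k}$ for every $i$ (they span the maximal face through $g$), so all $gH_i \in C$, closing the induction.

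Given the claim, the statement follows at once. If $H_\emptyset = G$ then $C$ is the entire vertex set, so $\CC{G}{\calH}$ is connected; conversely, if $H_\emptyset \neq G$, choose $g \in G \setminus H_\emptyset$ and observe $gH_1 \notin C$, since $gH_1 = g'H_1$ with $g' \in H_\emptyset$ would force $g \in g'H_1 \subseteq H_\emptyset$, so the complex is disconnected. I do not expect a real obstacle here; the only point needing slight care is the partite bookkeeping in the induction, namely that moving between parts at a fixed group element is free (those $d+1$ vertices always span a maximal face), whereas moving within a part corresponds exactly to right multiplication by an element of that~$H_i$.
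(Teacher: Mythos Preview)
Your proof is correct. Note, however, that the paper does not actually supply a proof of this fact: it merely states that it is ``easy to prove'' and cites~[AH93], so there is no argument in the paper to compare against. Your argument is the standard one and would serve perfectly well as the omitted verification.
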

\begin{fact}                                        \label{fact:link}
    (\cite[p.~13]{Gar79}, \cite{HS19}.)
    Let $\sigma$ be a face in $\CC{G}{\calH}$ of type~$T \neq \emptyset$.
    Then the link of~$F$ is isomorphic to the coset complex $\CC{H_T}{(H_{T \cup \{i\}} : i \not \in T)}$.
\end{fact}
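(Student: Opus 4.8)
The plan is to construct an explicit isomorphism between $\Link_\sigma(\CC{G}{\calH})$ and $\CC{H_T}{(H_{T\cup\{i\}} : i \notin T)}$ by a direct "coset chasing" argument. First I would reduce to the case where $\sigma$ is the "standard" face of type~$T$, namely the one whose vertices are the cosets $\{1\cdot H_i : i \in T\}$. This reduction is harmless: the group~$G$ acts type-preservingly and transitively on faces of each fixed type (a mild extension of \Cref{fact:type-preserving}), and left-multiplication by any $g \in G$ induces an isomorphism of coset complexes $\CC{G}{\calH} \to \CC{G}{\calH}$ carrying $\Link_\sigma$ isomorphically onto $\Link_{g\sigma}$; so it suffices to treat one representative of type~$T$, and the standard one is the convenient choice since then every maximal face containing $\sigma$ is of the form $\{gH_1,\dots,gH_{d+1}\}$ with $g \in \bigcap_{i\in T} H_i = H_T$.

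Next I would identify the vertex sets. A vertex of $\Link_\sigma(\CC{G}{\calH})$ of type $i \notin T$ is a coset $gH_i$ that lies in some maximal face also containing all of $\sigma$; by the observation above, such a coset is $hH_i$ for some $h \in H_T$, and two elements $h,h' \in H_T$ give the same coset of $H_i$ within the link exactly when $h^{-1}h' \in H_i$, i.e.\ $h^{-1}h' \in H_T \cap H_i = H_{T\cup\{i\}}$. Hence the type-$i$ vertices of the link are naturally in bijection with $H_T / H_{T\cup\{i\}}$, which is precisely the type-$i$ vertex set of $\CC{H_T}{(H_{T\cup\{i\}} : i \notin T)}$. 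Define the map $\varphi$ on vertices by $h H_i \mapsto h H_{T\cup\{i\}}$ (for $h \in H_T$); the previous sentence shows $\varphi$ is well-defined and bijective on vertices of each type.

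Finally I would check that $\varphi$ preserves faces in both directions. A set of link-vertices $\{h_i H_i : i \in S\}$ (with $S \cap T = \emptyset$, $h_i \in H_T$) spans a face of $\Link_\sigma(\CC{G}{\calH})$ iff $\sigma \cup \{h_iH_i : i\in S\}$ is a face of $\CC{G}{\calH}$, i.e.\ iff $\bigcap_{i\in S} h_iH_i \cap \bigcap_{j\in T} H_j \neq \emptyset$; since each $h_i \in H_T$, this intersection is nonempty iff there is a common element in $H_T$, i.e.\ iff $\bigcap_{i\in S}(h_iH_i \cap H_T) \neq \emptyset$, which (using $h_iH_i \cap H_T = h_i H_{T\cup\{i\}}$ because $h_i \in H_T$) is exactly the condition that $\{h_iH_{T\cup\{i\}} : i\in S\}$ spans a face of $\CC{H_T}{(H_{T\cup\{i\}} : i\notin T)}$. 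The only genuinely delicate point is the manipulation $\bigl(\bigcap_{i\in S} h_iH_i\bigr) \cap H_T = \bigcap_{i\in S}\bigl(h_iH_i \cap H_T\bigr) = \bigcap_{i\in S} h_iH_{T\cup\{i\}}$: the first equality is set-theoretically trivial, but the second uses $h_i \in H_T$ to rewrite $h_iH_i \cap H_T$ as the coset $h_i(H_i \cap H_T)$ inside $H_T$, and one should also confirm that a nonempty intersection of such cosets inside $H_T$ is what "face of the smaller coset complex" means — this is immediate from the definition of $\CC{\cdot}{\cdot}$ applied to the group $H_T$. I expect this bookkeeping with intersections of cosets to be the main (though still routine) obstacle; everything else is formal, and the reduction to the standard face together with $G$-equivariance handles the "for all $\sigma$ of type $T$" quantifier uniformly.
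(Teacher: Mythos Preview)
Your proposal is correct and is precisely the standard direct argument. The paper does not supply its own proof of this fact: it is stated with citations and the remark that it is ``easy to prove,'' so there is no alternative approach in the paper to compare against. Your reduction to the standard face via the type-preserving transitive $G$-action, followed by the coset identification $hH_i \cap H_T = hH_{T\cup\{i\}}$ for $h\in H_T$, is exactly the expected verification, and the bookkeeping you flag as ``delicate'' is handled correctly.
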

Note that \Cref{fact:link} says that, up to isomorphism, the link of a face only depends on its type. This will help us apply \Cref{thm:trickle}, as we will only have to consider a small number of cases.
Finally we quote another easy-to-prove fact from Kaufman and Oppenheim, which we can use to pass between the (very slightly different) different universal and adjoint Chevalley groups:
\begin{fact}                                     \label{fact:ko2.4}
    (\cite[essentially Prop.~2.12]{KO18}.)
    Let $\overline{\complex} = \CC{\ol{G}}{\ol{\calH}}$ be a coset complex with $\ol{\calH} = (\ol{H}_1, \dots, \ol{H}_{d+1})$, suppose $Z \triangleleft \ol{G}$ is a normal subgroup (e.g., if $Z$ is the center of~$\ol{G}$), and suppose that $Z \cap \ol{H}_i = \{1\}$ for all $i \in [d+1]$.
    Then for $G = \ol{G} / Z$ and $\calH= (H_1, \dots, H_{d+1})$, where $H_i = \ol{H}_i Z / Z$, the coset complex $\complex = \CC{G}{\calH}$ is ``covered'' by $\overline{\complex}$, and the following property holds: every link in~$\complex$ of type $T \neq \emptyset$ is isomorphic to every link of type~$T$ in $\ol{\complex}$
\end{fact}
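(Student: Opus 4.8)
\emph{Proof plan.} The plan is to exhibit a type-preserving simplicial surjection $\pi\colon\ol\complex\to\complex$ that restricts to an isomorphism on the link of every face of nonempty type; this simultaneously witnesses that $\complex$ is covered by $\ol\complex$ and yields the claimed link isomorphisms. Write $q\colon\ol G\to G=\ol G/Z$ for the quotient map. Since $\ker(q|_{\ol H_i})=Z\cap\ol H_i=\{1\}$, the map $q$ restricts to an isomorphism $\ol H_i\xrightarrow{\sim}q(\ol H_i)=\ol H_iZ/Z=H_i$ for each $i$. Define $\pi$ on vertices by $\ol g\ol H_i\mapsto q(\ol g)H_i$. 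This is well defined ($\ol g\ol H_i=\ol g'\ol H_i$ gives $\ol g^{-1}\ol g'\in\ol H_i$, hence $q(\ol g)^{-1}q(\ol g')\in q(\ol H_i)=H_i$), type-preserving, and simplicial: a face of $\ol\complex$ is a set of cosets admitting a common representative $\ol g$, and $\pi$ sends it to the set of cosets with common representative $q(\ol g)$. It is surjective because $q$ is. Finally $\pi$ intertwines the $\ol G$-action on $\ol\complex$ with the $G$-action on $\complex$ through $q$, so by \Cref{fact:type-preserving} and \Cref{fact:link} it suffices to examine, for each nonempty $T\subseteq[d+1]$, the single type-$T$ face $\ol\sigma_T=\{\ol H_i:i\in T\}$ and its image $\sigma_T=\{H_i:i\in T\}$.

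The heart of the matter is to show that, for every nonempty $T$, $q$ restricts to an isomorphism $\ol H_T\xrightarrow{\sim}H_T$, where $\ol H_T=\bigcap_{i\in T}\ol H_i$ and $H_T=\bigcap_{i\in T}H_i$ as in \Cref{fact:link}. Injectivity is immediate, since $\ol H_T\le\ol H_j$ for any $j\in T$ and $q|_{\ol H_j}$ is injective. For surjectivity, suppose $x\in\bigcap_{i\in T}q(\ol H_i)$ and write $x=q(\ol h_i)$ with $\ol h_i\in\ol H_i$ for each $i\in T$; then for any $i,j\in T$ one has $\ol h_i\ol h_j^{-1}\in\ker q\cap\ol H_i\ol H_j=Z\cap\ol H_i\ol H_j$. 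Thus, provided $Z\cap\ol H_i\ol H_j=\{1\}$, we get $\ol h_i=\ol h_j$, so the $\ol h_i$ share a common value $\ol h\in\ol H_T$ with $q(\ol h)=x$, proving $q(\ol H_T)=H_T$. This hypothesis $Z\cap\ol H_i\ol H_j=\{1\}$ (a mild strengthening of $Z\cap\ol H_i=\{1\}$, which by itself is not quite enough) holds in all our applications: there $Z$ is the center of the relevant Chevalley group, while $\ol H_i\ol H_j$ consists of products of unipotent elements and so contains no nontrivial central element. Granting this, the same isomorphism $\ol H_T\xrightarrow{\sim}H_T$ carries $\ol H_{T\cup\{i\}}=\ol H_T\cap\ol H_i$ onto $H_T\cap H_i=H_{T\cup\{i\}}$, by applying the identity $q(\ol H_S)=H_S$ with $S=T\cup\{i\}$.

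To conclude, \Cref{fact:link} identifies $\Link_{\ol\sigma_T}(\ol\complex)\cong\CC{\ol H_T}{(\ol H_{T\cup\{i\}}:i\notin T)}$ and $\Link_{\sigma_T}(\complex)\cong\CC{H_T}{(H_{T\cup\{i\}}:i\notin T)}$, and the group isomorphism $\ol H_T\xrightarrow{\sim}H_T$ matching up the subgroups $\ol H_{T\cup\{i\}}$ with the $H_{T\cup\{i\}}$ induces an isomorphism of these two coset complexes; one checks directly that $\pi$ itself realizes it on the link of $\ol\sigma_T$. Since all links of a given nonempty type in a coset complex are isomorphic, this gives the asserted link isomorphism and shows that $\pi$ is a covering map, so $\complex$ is covered by $\ol\complex$. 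The only step that is not formal bookkeeping is the identity $q(\ol H_T)=H_T$ above --- equivalently, that intersecting the chosen subgroups commutes with passing to the quotient --- and this is exactly where the concrete structure of $Z$ and the $\ol H_i$ must enter; it fails for a general normal subgroup $Z$, so I expect it to be the main obstacle.
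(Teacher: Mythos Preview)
The paper does not supply its own proof of this fact; it simply attributes it to \cite[Prop.~2.12]{KO18}.  Your approach---construct the type-preserving simplicial surjection $\pi$ induced by the quotient $q\colon\ol G\to G$, and reduce the link statement via \Cref{fact:link} to showing that $q$ restricts to a group isomorphism $\ol H_T\xrightarrow{\sim}H_T$ for each nonempty~$T$---is the natural one, and the bookkeeping you carry out is correct.

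More importantly, you are right to flag that the hypothesis $Z\cap\ol H_i=\{1\}$ alone does \emph{not} suffice for the surjectivity step $q(\ol H_T)=H_T$.  A tiny counterexample: take $\ol G=\Z/2\times\Z/2$, $Z=\langle(1,1)\rangle$, $\ol H_1=\langle(1,0)\rangle$, $\ol H_2=\langle(0,1)\rangle$; then $Z\cap\ol H_i=\{1\}$ for $i=1,2$, yet the link of a type-$\{1\}$ vertex has two points in~$\ol\complex$ and one point in~$\complex$.  Your strengthened hypothesis $Z\cap\ol H_i\ol H_j=\{1\}$ rules this out and is exactly what is needed; it is equivalent (given $Z\cap\ol H_i=\{1\}$) to the condition $(\ol H_iZ)\cap(\ol H_jZ)=(\ol H_i\cap\ol H_j)Z$, which is precisely what the paper verifies for its specific subgroups in \Cref{obs:centerint}.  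So your diagnosis matches the paper's actual usage: the Fact is stated a bit loosely, and the missing hypothesis is supplied in the application via \Cref{obs:centerint}.
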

\begin{remark}  \label{rem:get-to-adjoint}
    A consequence of \Cref{fact:ko2.4} is that if $\overline{\complex}$ is a $\Delta$-bounded degree, $\lambda$-spectral HDX, then so too is $\complex$; moreover, provided $H_1 Z \cap H_2 Z \cap \cdots \cap H_{d+1} Z = Z$, the group $G$ acts simply transitively on the maximal faces of~$\complex$. We note that our complexes satisfy this condition in \Cref{obs:centerint}.
\end{remark}

\subsection{Root systems}

Killing and Cartan~\cite{Car94} classified simple Lie algebras over~$\C$ via root systems:
\begin{definition}
    A \emph{(reduced) root system} of rank~$d$ is a finite set~$\Phi$ of nonzero vectors spanning a $d$-dimensional real vector space such that for each $\alpha \in \Phi$:
    \begin{itemize}
        \item $\Phi$ is closed under~$w_\alpha$, where $w_\alpha$ is the reflection through the hyperplane orthogonal to~$\alpha$;
        \item $w_\alpha(\beta) - \beta$ is an integer multiple of~$\alpha$ for all $\beta \in \Phi$;
        \item for $\lambda \in \R$ we have $\lambda \alpha \in \Phi$ (if and) only if $\lambda \in \{\pm 1\}$.
    \end{itemize}
    The root system $\Phi$ is \emph{irreducible} if it cannot be written as $\Phi_1 \sqcup \Phi_2$ with $\Phi_1, \Phi_2$ nonempty and lying in orthogonal subspaces.
    Root system $\Phi'$ is said to be \emph{isomorphic} to~$\Phi$ if there is bijection between them that preserves inner products up to a fixed positive scalar multiple.
\end{definition}

\Cref{fig:root-eg} shows the three non-isomorphic rank-$2$ root systems (all of which are irreducible).
The irreducible root systems have been completely classified:
\begin{notation}
    Up to isomorphism, the irreducible root systems are classified as the families $A_d$~($d \geq 1$), $B_d$ ($d \geq 2$), $C_d$ ($d \geq 3$), $D_d$ ($d \geq 4$), and the exceptional systems $G_2$, $F_4$, $E_6$, $E_7$,~$E_8$.
    In all cases, the subscript gives the dimension of the root system.
    For explicit descriptions of these root systems, 
    see e.g.~\cite[Sec.~3.6]{Car89}.
\end{notation}

\begin{figure}[H]
\centering

  \begin{tikzpicture}
    \foreach\ang in {30,90,...,330}{
     \draw[->,blue!80!black,thick] (0,0) -- (\ang:3cm);
    }
    \node[anchor=south west,scale=0.6] at (2.5,-1) {$\alpha$};
    \node[anchor=north east,scale=0.6] at (0,3) {$\beta$};
    \node[anchor=north,scale=0.6] at (0,-3.2) {\huge $A_{2}$};
  \end{tikzpicture}
  \begin{tikzpicture}
    \foreach\ang in {90,180,...,360}{
     \draw[->,blue!80!black,thick] (0,0) -- (\ang:2cm);
    }
    \foreach\ang in {45,135,225,315}{
     \draw[->,blue!80!black,thick] (0,0) -- (\ang:2.828427cm);
    }
    \node[anchor=south west,scale=0.6] at (2,0) {$\alpha$};
    \node[anchor=north east,scale=0.6] at (-2.14,2) {$\beta$};
    \node[anchor=north,scale=0.6] at (0,-3.2) {\huge $B_{2} \cong C_2$};
  \end{tikzpicture}
\begin{tikzpicture}
    \foreach\ang in {60,120,...,360}{
     \draw[->,blue!80!black,thick] (0,0) -- (\ang:1.73205cm);
    }
    \foreach\ang in {30,90,...,330}{
     \draw[->,blue!80!black,thick] (0,0) -- (\ang:3cm);
    }
    \node[anchor=south west,scale=0.6] at (2,0) {$\alpha$};
    \node[anchor=north east,scale=0.6] at (-2.5,2) {$\beta$};
    \node[anchor=north,scale=0.6] at (0,-3.2) {\huge $G_{2}$};
  \end{tikzpicture}
\caption{The rank $2$ (irreducible) root systems, with a simple set $\{\alpha, \beta\}$ shown.}     \label{fig:root-eg}
\end{figure}
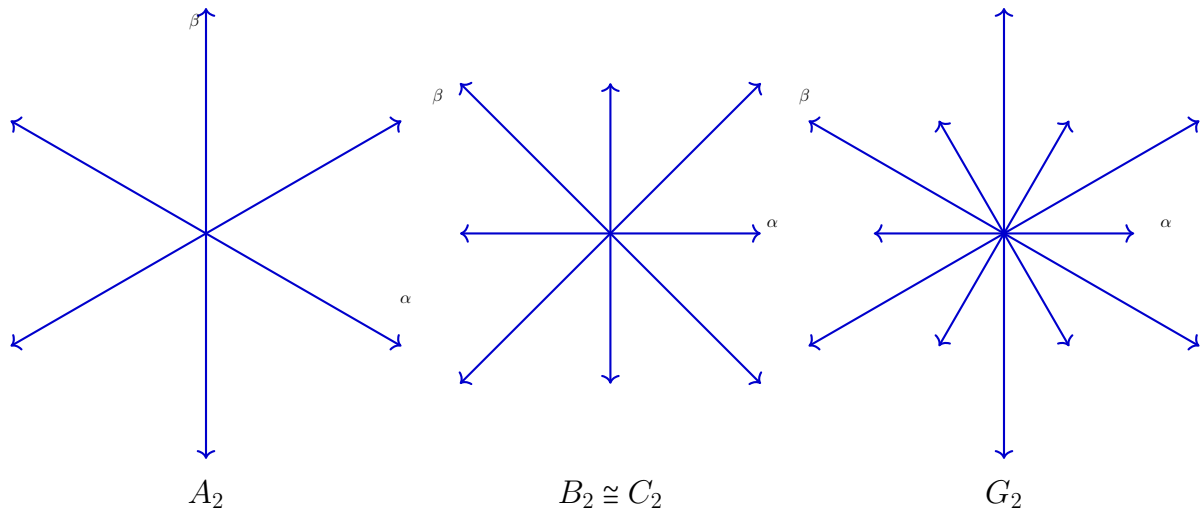

\begin{remark}  \label{rem:2rootspan}
    The restriction of a root system to a subspace is also a root system.
    Thus if $\Phi$ is a root system containing roots $\alpha$, $\beta$, and $\alpha + \beta$, then the restriction of $\Phi$ to the subspace spanned by $\alpha$ and~$\beta$ must be (isomorphic to) $A_2$, $B_2$, or~$G_2$.
    In fact, since $G_2$ is the only irreducible root system containing vectors at an angle of~$30^\circ$ (see, e.g.,~\cite[Sec.~3.6]{Car89}), an irreducible root system containing~$G_2$ as a subsystem must in fact be isomorphic to~$G_2$.
\end{remark}
%
%

Let us now record a handy fact involving the inner product $\alpha \cdot \beta$ of two roots:
\begin{fact}    \label{rootsum}
    (\cite[p.~45, Lem.~9.4]{Hum72}.)
    Let $\alpha, \beta$ be roots.  If $\alpha \cdot \beta < 0$ then $\alpha+\beta \in \Phi \cup \{0\}$, and if $\alpha \cdot \beta > 0$ then $\alpha-\beta \in \Phi \cup \{0\}$.
\end{fact}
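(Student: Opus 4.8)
The plan is to run the standard Cartan-integer argument. For roots $\alpha,\beta$, set $n_{\beta\alpha}=\frac{2(\alpha\cdot\beta)}{\alpha\cdot\alpha}$, so that the reflection $w_\alpha$ acts by $w_\alpha(\beta)=\beta-n_{\beta\alpha}\alpha$; the second root-system axiom, applied to $w_\alpha$, says exactly that $n_{\beta\alpha}\in\Z$, and by symmetry $n_{\alpha\beta}\in\Z$. First I would record the two consequences that drive the proof: (i) $n_{\beta\alpha}$ and $n_{\alpha\beta}$ both carry the sign of $\alpha\cdot\beta$ (since $\alpha\cdot\alpha,\beta\cdot\beta>0$), and (ii) $n_{\beta\alpha}\,n_{\alpha\beta}=\frac{4(\alpha\cdot\beta)^2}{(\alpha\cdot\alpha)(\beta\cdot\beta)}=4\cos^2\theta$, where $\theta$ is the angle between $\alpha$ and $\beta$; by Cauchy--Schwarz this is a nonnegative integer that equals $4$ precisely when $\alpha,\beta$ are parallel, i.e.\ (using the third axiom, together with $w_\alpha(\alpha)=-\alpha$, which also shows $\Phi=-\Phi$) when $\alpha=\pm\beta$.

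Next I would clear away the degenerate cases: if $\alpha=\beta$ then $\alpha-\beta=0$, and if $\alpha=-\beta$ then $\alpha+\beta=0$, so the conclusion holds through the ``$\cup\{0\}$'' alternative in either case (and the sign hypotheses $\alpha\cdot\beta<0$ versus $\alpha\cdot\beta>0$ pick out the correct one). So assume $\alpha\neq\pm\beta$; then $n_{\beta\alpha}\,n_{\alpha\beta}\in\{0,1,2,3\}$, so $n_{\beta\alpha}$ and $n_{\alpha\beta}$ cannot both have absolute value $\geq 2$. Since the two statements to be proved are symmetric in $\alpha$ and $\beta$ --- $\alpha+\beta$ manifestly so, and $\alpha-\beta\in\Phi\cup\{0\}$ iff $\beta-\alpha\in\Phi\cup\{0\}$ because $\Phi=-\Phi$ --- I may assume after a harmless swap that $|n_{\alpha\beta}|\le 1$.

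Finally, the conclusion: if $\alpha\cdot\beta>0$, then by (i) $n_{\alpha\beta}$ is a positive integer of absolute value at most $1$, hence $n_{\alpha\beta}=1$, and closure of $\Phi$ under $w_\beta$ gives $w_\beta(\alpha)=\alpha-n_{\alpha\beta}\beta=\alpha-\beta\in\Phi$. The case $\alpha\cdot\beta<0$ follows by applying the $\alpha\cdot\beta>0$ case to the pair $(\alpha,-\beta)$ --- legitimate since $-\beta\in\Phi$ and $\alpha\cdot(-\beta)>0$ --- which yields $\alpha-(-\beta)=\alpha+\beta\in\Phi$. I do not expect a genuine obstacle here: this is a textbook computation, and the only points needing a little care are the $\alpha=\pm\beta$ edge cases and checking that the symmetry-in-$\alpha$-and-$\beta$ reduction is valid.
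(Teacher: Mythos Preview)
Your argument is correct and is exactly the standard proof (indeed, the one in the cited reference~\cite{Hum72}). The paper itself does not prove this fact; it simply records it with a citation to Humphreys, so there is nothing further to compare.
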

This fact can be used to prove another simple result (which is surely well known, though we could not find a reference):
\begin{fact}    \label{fact:rootdecomp}
    Let $\Phi$ be an irreducible root system of rank at least~$2$, and let $\alpha \in \Phi$.  Then $\alpha$ is the sum of two other roots.
\end{fact}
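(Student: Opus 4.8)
The plan is to show that $\alpha$ can be written as $\beta + \gamma$ with $\beta, \gamma \in \Phi$ and $\beta, \gamma \neq \alpha$. By \Cref{rootsum}, it suffices to produce a single root $\beta \neq \alpha$ with $\alpha \cdot \beta > 0$: then $\gamma := \alpha - \beta$ lies in $\Phi \cup \{0\}$, and in fact $\gamma \in \Phi$ since $\gamma = 0$ would force $\beta = \alpha$, while $\gamma = \alpha$ would force $\beta = 0$. So the whole statement reduces to finding such a $\beta$.

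First I would note it is enough to find \emph{any} root $\beta_0$ with $\alpha \cdot \beta_0 \neq 0$ and $\beta_0 \neq \pm \alpha$: replacing $\beta_0$ by $-\beta_0$ if needed (which preserves $\beta_0 \neq \pm\alpha$), we may take $\alpha \cdot \beta_0 > 0$, and then automatically $\beta_0 \neq \alpha$. To see such a $\beta_0$ exists, suppose not, so that every root is either $\pm\alpha$ or orthogonal to $\alpha$. Then $\Phi = \Phi_1 \sqcup \Phi_2$ where $\Phi_1 = \{\pm\alpha\}$ and $\Phi_2 = \{\delta \in \Phi : \delta \cdot \alpha = 0\}$ (these are disjoint since $\alpha \cdot \alpha \neq 0$), and $\Phi_1 \subseteq \mathrm{span}(\alpha)$, $\Phi_2 \subseteq \alpha^{\perp}$ are contained in orthogonal subspaces. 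Because $\rank(\Phi) \geq 2$, the roots span a space of dimension at least $2$, so some root lies outside $\mathrm{span}(\alpha)$; that root is in $\Phi_2$, so $\Phi_2 \neq \emptyset$ and the decomposition $\Phi_1 \sqcup \Phi_2$ contradicts the irreducibility of $\Phi$.

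Combining the two steps finishes the proof: pick $\beta_0$ with $\alpha \cdot \beta_0 \neq 0$, $\beta_0 \neq \pm\alpha$, flip its sign so $\alpha \cdot \beta_0 > 0$, and set $\gamma_0 = \alpha - \beta_0 \in \Phi$; then $\alpha = \beta_0 + \gamma_0$ with $\beta_0, \gamma_0 \neq \alpha$ (and in fact $\beta_0 \neq \gamma_0$, as $\alpha = 2\beta_0$ is impossible in a reduced root system). I expect the only place requiring care is the use of the hypotheses: it is precisely irreducibility together with $\rank(\Phi) \geq 2$ that guarantees a root which is neither orthogonal to $\alpha$ nor a scalar multiple of it, and once that root is in hand, \Cref{rootsum} does the rest with no computation.
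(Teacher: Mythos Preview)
Your proof is correct and follows essentially the same route as the paper: find a root $\beta \neq \pm\alpha$ not orthogonal to $\alpha$ (using irreducibility and rank $\geq 2$), flip its sign to make $\alpha\cdot\beta > 0$, and apply \Cref{rootsum} to get $\alpha - \beta \in \Phi$. If anything, you are slightly more careful than the paper in spelling out why $\Phi_2 \neq \emptyset$ and why $\gamma \neq 0,\alpha$.
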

\begin{proof}
    We claim there must exist a root $\beta \neq \pm \alpha$ with $\alpha \cdot \beta \neq 0$.
    Otherwise, every root is either orthogonal to~$\alpha$ or parallel to~$\alpha$, meaning $\Phi$ is either irreducible or of rank~$1$.
    We may assume $\alpha \cdot \beta > 0$, by replacing $\beta$ by the root $-\beta$, if necessary.
    Thus \Cref{rootsum} tells us that $\alpha - \beta \in \Phi$.
    But now $\alpha - \beta$ and $\beta$ are roots summing to~$\alpha$.
\end{proof}

We now discuss ``simple'' subsets of roots:
\begin{definition}  \label{def:simple}
    Let $\Phi$ be a root system spanning $\R^d$.
    A set of roots $\Pi = \{\alpha_1, \dots, \alpha_d\} \subseteq \Phi$ is called \emph{simple} (or a \emph{base}) if it is a basis for~$\R^d$, and
    every root $\gamma \in \Phi$ may be expressed as
    \[
        \gamma = n_1 \alpha_1 + \cdots + n_d \alpha_d
    \]
    either with $n_1, \dots, n_d \in \N$ or with $-n_1, \dots, -n_d \in \N$.
    (Since $\Pi$ is a basis, there is a unique such expression.)
    In the former case, $\gamma$ is called a \emph{positive root}; in the latter case,  a \emph{negative root}.
    One also defines the \emph{height} of~$\gamma$ (with respect to~$\Pi$, or more generally a set of linearly independent roots whose span contains $\gamma$), denoted $\hgt_\Pi(\gamma)$, to be $\sum_{i=1}^d \abs{n_i}$.
\end{definition}

\noindent (In \Cref{fig:root-eg}, each root system has labeled a simple set $\{\alpha,\beta\}$.)

In a certain sense, up to symmetries there is a unique choice of simple roots for a given root system:
\begin{fact} (\cite[Prop.~2.1.2, Cor~2.2.5]{Car89}.)
    Every root system has a set of simple roots.
    Further, for any two simple sets, there is a unique reflection $w_\alpha$ mapping one to the other.
%
\end{fact}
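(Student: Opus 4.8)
The plan is the standard one: construct a simple system from a generic linear functional, and derive the uniqueness-up-to-symmetry statement from the fact that the Weyl group $W := \la w_\alpha : \alpha \in \Phi\ra$ acts simply transitively on the set of simple systems (the element $w$ it produces being what the statement denotes ``$w_\alpha$''; in small rank it can literally be a single reflection).

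\emph{Existence.} Since $\Phi$ is finite we may pick $v \in \R^d$ with $\alpha \cdot v \neq 0$ for all $\alpha \in \Phi$. Put $\Phi^+ = \{\alpha \in \Phi : \alpha \cdot v > 0\}$, so $\Phi = \Phi^+ \sqcup (-\Phi^+)$; call $\alpha \in \Phi^+$ \emph{decomposable} if $\alpha = \beta + \gamma$ for some $\beta, \gamma \in \Phi^+$, and let $\Pi$ be the set of indecomposable elements of $\Phi^+$. I claim $\Pi$ is simple. First, every $\alpha \in \Phi^+$ is an $\N$-combination of $\Pi$: induct on $\alpha \cdot v$, using that a decomposable $\alpha = \beta + \gamma$ has $\beta\cdot v, \gamma\cdot v < \alpha\cdot v$ (and negative roots are then $-\N$-combinations of $\Pi$, so the expressibility clause of \Cref{def:simple} holds). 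Second, distinct $\alpha, \beta \in \Pi$ satisfy $\alpha \cdot \beta \le 0$: otherwise, since the system is reduced we have $\alpha \neq \pm\beta$, so \Cref{rootsum} gives a nonzero root $\alpha - \beta$; after possibly swapping $\alpha$ and $\beta$ we may assume $\alpha - \beta \in \Phi^+$, and then $\alpha = (\alpha - \beta) + \beta$ makes $\alpha$ decomposable, a contradiction. Third, a finite family of vectors lying in the open half-space $\{x : x \cdot v > 0\}$ with pairwise non-positive inner products is linearly independent: from a dependence, move negative-coefficient terms across to obtain $\mu = \sum_i c_i \alpha_i = \sum_j d_j \beta_j = \nu$ with all $c_i, d_j > 0$ and the $\alpha_i$ disjoint from the $\beta_j$; then $\|\mu\|^2 = \mu\cdot\nu = \sum_{i,j} c_i d_j(\alpha_i\cdot\beta_j) \le 0$, so $\mu = 0$, whence $0 = \mu\cdot v = \sum_i c_i(\alpha_i\cdot v)$ forces no $\alpha_i$ terms, and symmetrically no $\beta_j$ terms. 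Applying this to $\Pi$ and using that $\Phi$ (hence $\Pi$) spans $\R^d$, we get that $\Pi$ is a basis, i.e.\ simple.

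\emph{Uniqueness up to symmetry.} Let $\Pi, \Pi'$ be simple systems with positive systems $\Phi^+, \Phi'^+$. The key elementary observation is that for $\alpha \in \Pi$ the reflection $w_\alpha$ sends $\alpha \mapsto -\alpha$ and permutes $\Phi^+ \setminus \{\alpha\}$: by the root-system axioms $w_\alpha(\beta)$ differs from $\beta$ only in its $\alpha$-coordinate relative to the basis $\Pi$, so a positive root other than $\alpha$, having some other coordinate positive, stays positive. From this one runs a descent argument for transitivity of $W$ on simple systems: if $\Phi'^+ \neq \Phi^+$, pick $\alpha \in \Pi$ with $-\alpha \in \Phi'^+$ (one exists, else every element of $\Phi^+$—being a nonnegative combination of $\Pi \subseteq \Phi'^+$—lies in $\Phi'^+$, forcing $\Phi^+ = \Phi'^+$); replacing $\Phi'^+$ by $w_\alpha(\Phi'^+)$, the observation shows $|w_\alpha(\Phi'^+) \cap \Phi^+| = |\Phi'^+ \cap \Phi^+| + 1$, so iterating drives $\Phi'^+$ to $\Phi^+$ and hence $\Pi'$ to $\Pi$ (a simple system is recovered from its positive system as the set of indecomposables). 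What remains is the genuinely non-formal point: the stabilizer of $\Pi$ in $W$ is trivial, i.e.\ no nontrivial word $w_{\alpha_1}\cdots w_{\alpha_k}$ in the $w_\alpha$, $\alpha \in \Pi$, fixes $\Phi^+$; taking a shortest such word and analyzing $w_{\alpha_1}\cdots w_{\alpha_{k-1}}(\alpha_k)$ via the exchange condition shows this root is negative, contradicting that $\Phi^+$ is fixed. Hence there is exactly one $w \in W$ with $w(\Pi) = \Pi'$, namely the element assembled from the reflections used in the descent.

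\emph{Main obstacle.} Everything except the last paragraph's final claim is bookkeeping once \Cref{rootsum} and the half-space independence lemma are available. The real work is the \emph{uniqueness}—triviality of the stabilizer of a simple system in $W$—which needs the exchange/minimal-length argument for words in the simple reflections. As this is classical (Carter, Prop.~2.1.2 and Cor.~2.2.5), the cleanest route for the paper is to invoke it rather than reproduce the argument in full.
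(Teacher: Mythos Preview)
The paper does not supply its own proof of this fact; it simply cites Carter~\cite[Prop.~2.1.2, Cor.~2.2.5]{Car89}. Your proposal is precisely the standard textbook argument (regular hyperplane / indecomposable positive roots for existence; the $w_\alpha$-permutes-$\Phi^+\setminus\{\alpha\}$ observation plus descent and the exchange condition for simple transitivity of $W$), which is essentially what the cited reference does. You also correctly flag that the paper's phrasing ``a unique reflection $w_\alpha$'' is imprecise---the correct statement is that there is a unique element of the Weyl group $W$ carrying one simple system to the other, and this element need not be a single reflection in general.
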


\begin{definition}
For any subset $\Psi \subseteq \Phi$, we write $\Psi^+ = \Phi \cap \{\sum_{\alpha \in \Psi} n_\alpha \alpha : n_\alpha \in \N\}$, and $\Psi^{-} = -\Psi^+$.
\end{definition}

\begin{fact}\label{fact:simplechoice}
Let $\Psi \subset \Phi$ be a set of linearly independent roots. Then there is set of simple roots $\Pi$ of~$\Phi$ where $\Psi \subseteq \Pi^+$.
\end{fact}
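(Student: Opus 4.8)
The plan is to build the desired base $\Pi$ from a suitably generic linear functional, in the classical way. The first step is to choose $t \in (\R^d)^*$ that is strictly positive on every element of $\Psi$ and nonzero on every root of $\Phi$. Such a $t$ exists: the set $U = \{t \in (\R^d)^* : t(\psi) > 0 \text{ for all } \psi \in \Psi\}$ is open, and it is nonempty precisely because $\Psi$ is linearly independent (extend $\Psi$ to a basis of $\R^d$ and sum the corresponding dual functionals to get a point of $U$); on the other hand $\bigcup_{\alpha \in \Phi}\{t : t(\alpha) = 0\}$ is a finite union of proper subspaces of $(\R^d)^*$, hence nowhere dense, so it cannot contain the nonempty open set $U$. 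Any $t$ lying in $U$ but outside that union does the job.

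The second step is to run the standard ``regular functional $\Rightarrow$ base'' construction on this $t$. Put $\Phi^+ = \{\alpha \in \Phi : t(\alpha) > 0\}$; since $t$ vanishes on no root, $\Phi = \Phi^+ \sqcup (-\Phi^+)$. Let $\Pi$ be the set of $\alpha \in \Phi^+$ that cannot be written as $\beta + \gamma$ with $\beta,\gamma \in \Phi^+$. The classical theory of root systems (e.g.\ \cite[Thm.~10.1]{Hum72}, or \cite[Sec.~2.1--2.2]{Car89}) shows that $\Pi$ is a base of $\Phi$ and that every element of $\Phi^+$ is a nonnegative integer combination of elements of $\Pi$. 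Conversely, any $\N$-combination of $\Pi$ that is a root has strictly positive value under $t$ (a nonnegative value that is nonzero since roots avoid $\ker t$), so it lies in $\Phi^+$; hence $\Pi^+ = \Phi^+$. Since $\Psi \subseteq \Phi^+$ by the choice of $t$, we conclude $\Psi \subseteq \Pi^+$, which is the claim.

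I do not expect a genuine obstacle: the only point requiring care is the elementary Baire-category/``a vector space is not a finite union of proper subspaces'' argument showing that the cone $U$ is not swallowed by the root hyperplanes, and everything after that is an appeal to the standard package the paper already cites. Were one to insist on a self-contained treatment, the price would be re-deriving that the indecomposable elements of $\Phi^+$ form a base --- spanning and the fact that every positive root is an $\N$-sum of them by downward induction on $t(\cdot)$, plus linear independence from the observation that vectors lying strictly on one side of a hyperplane with pairwise nonpositive inner products are independent --- but given the available references this seems unnecessary.
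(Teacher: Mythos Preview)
Your proposal is correct and follows essentially the same approach as the paper: the paper simply asserts that one can find a hyperplane not containing any root with all of $\Psi$ on one side, then cites \cite[Thm.~8.16]{Hal03} to obtain a base whose positive roots are exactly the roots on that side. Your choice of a regular linear functional positive on~$\Psi$ is precisely the dual formulation of this hyperplane, and your explicit construction of~$\Pi$ as the indecomposable positive roots (with a citation to Humphreys/Carter) is the content of the theorem the paper cites; if anything, you have filled in the justification for why such a hyperplane exists, which the paper leaves tacit.
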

\begin{proof}
We can always find a hyperplane $H$ not containing any root, and where all of $\Psi$ is contained on one side of $H$. Then by \cite[Thm.~8.16]{Hal03}, there is a set of simple roots $\Pi$ such that the roots in~$\Phi$ on this side of $H$ are positive with respect to $\Pi$.
\end{proof}

The following fact is very similar to a standard one about root systems, but it is usually only stated when $\{\alpha, \dots, \alpha_m\}$ form a simple set (see, e.g., \cite[Lem.~3.6.2]{Car89}):
\begin{fact}    \label{fact:prefixsum}
    Let $A = \{\alpha_1, \dots, \alpha_\ell\} \subseteq \Phi$ be \emph{any} set of roots, and suppose that $\gamma = \sum_{i=1}^\ell n_i \alpha_i \in \Phi$ for $n_1, \dots, n_\ell \in \N$.
    Then we may express $\gamma = \sum_{j=1}^m \alpha_{i_j}$ for certain $i_1, \dots, i_m \in [\ell]$ in such a way that all the prefix-sums $\sum_{j=1}^k \alpha_{i_j}$ ($1 \leq k \leq \ell$) are in~$\Phi$.
\end{fact}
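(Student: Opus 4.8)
The plan is to prove this by induction on the height $h = \sum_{i=1}^\ell n_i$, generalizing the standard argument for the simple-root case (as in \cite[Lem.~3.6.2]{Car89}), but being careful that $A$ is now an arbitrary set of roots rather than a simple set. The base case $h = 1$ is trivial, since then $\gamma = \alpha_i$ for some $i$ and there is nothing to prove. For the inductive step, suppose $\gamma = \sum_{i=1}^\ell n_i \alpha_i \in \Phi$ with $h \geq 2$. The key claim is that there exists an index $i$ with $n_i > 0$ such that $\gamma - \alpha_i \in \Phi$; once we have this, we apply induction to $\gamma - \alpha_i$ (whose height is $h-1$ and which is still a nonnegative integer combination of $A$), obtaining an ordering $\alpha_{i_1}, \dots, \alpha_{i_{m-1}}$ with all prefix-sums in $\Phi$, and then append $\alpha_i$ at the end: the new prefix-sums are the old ones together with $\gamma$ itself, all of which are in $\Phi$.

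To establish the claim, I would argue as follows. Since $\gamma$ is a nonzero vector and $\gamma = \sum n_i \alpha_i$ with $n_i \geq 0$, we have $\gamma \cdot \gamma > 0$, so $\sum_i n_i (\alpha_i \cdot \gamma) > 0$, and hence there is some index $i$ with $n_i > 0$ and $\alpha_i \cdot \gamma > 0$. By \Cref{rootsum}, since $\alpha_i \cdot \gamma > 0$ we get $\gamma - \alpha_i \in \Phi \cup \{0\}$. If $\gamma - \alpha_i \in \Phi$ we are done. The remaining case is $\gamma - \alpha_i = 0$, i.e. $\gamma = \alpha_i$; but then $h = n_i \cdot 1 = 1$ (using that $\alpha_i$ is a root, so $n_i = 1$ by the reducedness axiom, and all other $n_j = 0$ by linear independence of $A$ — wait, $A$ need not be linearly independent). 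Let me instead handle this directly: if $\gamma = \alpha_i$ then $\gamma$ itself is a root of height... here "height" means $\sum n_j$, which could still be $\geq 2$ if $\alpha_i$ has a nontrivial representation. In that case, though, we can simply take the one-term expression $\gamma = \alpha_i$, whose single prefix-sum is $\gamma \in \Phi$, and we are done — so this degenerate case causes no trouble.

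The main obstacle, and the reason the statement is not literally the textbook lemma, is precisely that $A$ is allowed to be an arbitrary (possibly linearly dependent, possibly non-simple) set of roots, so the usual tools that rely on positivity of simple roots or on uniqueness of the expression are unavailable. The fix is that the argument above only ever uses \Cref{rootsum} together with the positivity $\gamma \cdot \gamma > 0$, neither of which needs $A$ to be simple; one just has to be slightly careful about the degenerate case $\gamma = \alpha_i$ and about the fact that "height" here is defined via the coefficients $n_i$ in the given (not necessarily unique) expression, so that $\gamma - \alpha_i$ inherits a valid nonnegative expression of strictly smaller total. I would also double-check the bookkeeping on the prefix-sum indexing: the statement writes $1 \le k \le \ell$ but the expression has $m$ terms, and presumably $m \le \ell$ is intended (or the bound should read $1 \le k \le m$); I would state it cleanly as: all prefix sums $\sum_{j=1}^k \alpha_{i_j}$ for $1 \le k \le m$ lie in $\Phi$.
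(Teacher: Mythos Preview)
Your proposal is correct and follows essentially the same argument as the paper: both induct on the expression for~$\gamma$, use $0 < \gamma \cdot \gamma = \sum_i n_i(\gamma \cdot \alpha_i)$ to find an index $i$ with $n_i > 0$ and $\gamma \cdot \alpha_i > 0$, apply \Cref{rootsum} to get $\gamma - \alpha_i \in \Phi \cup \{0\}$, and dispose of the degenerate case $\gamma = \alpha_i$ directly (the paper phrases the base case as ``$\gamma \in A$'' rather than ``$h=1$'', but this is the same content). Your observation that the range of~$k$ in the statement should be $1 \le k \le m$ rather than $1 \le k \le \ell$ is also correct.
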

\begin{proof}
    By induction, it suffices to show that if $\gamma$ is not already in~$A$, then there exists $i_0 \in [\ell]$ with $n_{i_0} > 0$ such that $\gamma - \alpha_{i_0} \in \Phi$.
    To do this, note that $0 < \gamma \cdot \gamma = \sum_{i=1}^\ell  n_i (\gamma \cdot \alpha_i)$, and since the $n_i$'s are nonnegative we must have $\gamma \cdot \alpha_{i_0} > 0$ for (at least) one~$i_0$.
    By \Cref{rootsum} we conclude that $\gamma - \alpha_{i_0} \in \Phi \cup \{0\}$, and the case $\gamma - \alpha_{i_0} = 0$ (i.e., $\gamma = \alpha_{i_0}$) is impossible because $\gamma$ is assumed not already in~$A$.
\end{proof}

Finally, we need the following known fact~\cite{Hil16}:
\begin{fact} \label{fact:simplerootsum}
    Let $\Phi$ be an irreducible root system with simple roots $\Pi = \{\alpha_1, \dots, \alpha_d\}$.
    Then $\sum_{i=1}^d \alpha_i \in \Phi$.
\end{fact}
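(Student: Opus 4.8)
The plan is to build up the vector $\sigma := \alpha_1 + \cdots + \alpha_d$ one simple root at a time, staying inside $\Phi$ at every step. The enabling observation is that, because $\Phi$ is irreducible, the simple roots can be \emph{ordered} so that each new one is non-orthogonal to an earlier one. Concretely, let $\Gamma$ be the graph on vertex set $\Pi$ with an edge between $\alpha_i$ and $\alpha_j$ whenever $\alpha_i \cdot \alpha_j \neq 0$ (i.e.\ the underlying graph of the Dynkin diagram of $\Phi$). I would invoke the standard fact that $\Phi$ irreducible implies $\Gamma$ connected --- equivalently, that $\Pi$ admits no partition into two nonempty mutually orthogonal subsets (see, e.g., \cite[\S11]{Hum72}). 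If one prefers a self-contained argument: a partition $\Pi = \Pi_1 \sqcup \Pi_2$ into orthogonal pieces would make every element of $W_2 := \la w_\beta : \beta \in \Pi_2\ra$ fix $\mathrm{span}(\Pi_1)$ pointwise (and vice versa), so, since simple reflections generate the Weyl group and $\Phi = W \cdot \Pi$, one gets $\Phi \subseteq \mathrm{span}(\Pi_1) \cup \mathrm{span}(\Pi_2)$, an orthogonal splitting of $\Phi$ contradicting irreducibility. Connectedness of $\Gamma$ then lets me relabel the simple roots so that for every $k \in \{2, \dots, d\}$ there is an index $j(k) < k$ with $\alpha_{j(k)} \cdot \alpha_k \neq 0$: grow the index set one vertex at a time, always appending a $\Gamma$-neighbor of what has been chosen so far.

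Next I record the elementary fact that distinct simple roots have non-positive inner product: if $\alpha_i \cdot \alpha_j > 0$ for $i \neq j$, then \Cref{rootsum} gives $\alpha_i - \alpha_j \in \Phi \cup \{0\}$, which is nonzero since the $\alpha$'s are linearly independent; but then $\alpha_i - \alpha_j$ is a root whose coordinates in the basis $\Pi$ are of mixed sign, contradicting \Cref{def:simple}. Hence $\alpha_i \cdot \alpha_j \leq 0$ for all $i \neq j$, with strict inequality for the pair $\alpha_{j(k)}, \alpha_k$ by our ordering. Now I induct on $k$ to show $\sigma_k := \alpha_1 + \cdots + \alpha_k \in \Phi$. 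The base case $\sigma_1 = \alpha_1$ is immediate. For the inductive step, assume $\sigma_{k-1} \in \Phi$; then
\[
    \sigma_{k-1} \cdot \alpha_k \;=\; \sum_{i=1}^{k-1} \alpha_i \cdot \alpha_k \;\leq\; \alpha_{j(k)} \cdot \alpha_k \;<\; 0,
\]
using non-positivity of each summand and strict negativity of the $j(k)$-term. By \Cref{rootsum}, $\sigma_k = \sigma_{k-1} + \alpha_k \in \Phi \cup \{0\}$, and $\sigma_k \neq 0$ because its coordinates in the basis $\Pi$ are all equal to $1$. Thus $\sigma_k \in \Phi$; taking $k = d$ gives $\sigma \in \Phi$, as desired. (The rank-$1$ case $\Pi = \{\alpha_1\}$ is the base case with no steps.)

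I expect the only non-routine point to be the connectedness of $\Gamma$ --- that is, converting ``irreducible'' into a combinatorial statement about the base $\Pi$. Given the standard results on bases already quoted in the preliminaries, the cleanest presentation simply cites this; the rest of the argument is a short consequence of \Cref{rootsum} and \Cref{def:simple}. One sanity check worth including is that the argument applies uniformly to all irreducible $\Phi$ of rank $\geq 2$ (including $G_2$), since the only structural input is connectedness of the two- (or more) vertex Dynkin graph.
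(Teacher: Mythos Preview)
Your proof is correct. The paper itself does not prove this fact; it simply records it as ``known'' with a citation to~\cite{Hil16}. Your argument---ordering the simple roots along a spanning tree of the Dynkin graph, using that distinct simple roots pair non-positively, and then repeatedly applying \Cref{rootsum} to adjoin one simple root at a time---is the standard proof, and every step is sound. The only point requiring outside input is $\Phi = W\cdot\Pi$ in your self-contained connectivity sketch, but as you note, citing connectedness of the Dynkin diagram directly is cleaner and is at the same level as other facts already quoted in the preliminaries.
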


\subsection{Chevalley groups}
We may now define the Chevalley groups, via the \emph{Steinberg presentation} (see, e.g.,~\cite[Thm.~12.1.1]{Car89}).
\begin{definition}  \label{def:chevgroup}
    Corresponding to any irreducible root system $\Phi$ of rank at least~$2$, and any finite field $\F$, there is an associated \emph{universal (or simply connected) Chevalley group}, denoted~$\Esc{\Phi}{\F}$. Abstractly, it is generated by symbols $x_\alpha(t)$ for $\alpha \in \Phi$ and $t \in \F$, subject to the relations
    \begin{align*}
    x_\alpha(t)x_\alpha(u) &= x_\alpha(t+u)\\
    \phantom{\qquad \text{if } \alpha+\beta \neq 0 } %
    [x_\alpha(t),x_\beta(u)] &= \prod_{i,j>0} x_{i\alpha+j\beta}(C_{ij}^{\alpha,\beta}t^iu^j) \qquad \text{(for $\alpha+\beta \neq 0$)}\\
    h_\alpha(t)h_\alpha(u) &= h_\alpha(tu) \quad \text{(for $tu \neq 0$)},\\
    \text{where }\hspace{1cm} h_\alpha(t) &= n_\alpha(t)n_\alpha(-1)\\
    \text{and }\hspace{1cm} n_\alpha(t) &= x_\alpha(t)x_{-\alpha}(-t^{-1})x_\alpha(t).
    \end{align*}
    The second relation above is the \emph{Chevalley commutator formula}, and it is elaborated upon in \Cref{fact:chevcom} below.
\end{definition}
\begin{definition}  \label{def:adj}
    Let $\Center{\Phi}{\F}$ denote the center of $\Gu{\Phi}{\F}$.
    The \emph{adjoint Chevalley group}, which we denote by~$\G{\Phi}{\F}$, is the quotient $\Esc{\Phi}{\F}/\Center{\Phi}{\F}$.
    In all cases, $\Center{\Phi}{\F}$ is a constant-sized subgroup (of size $d+1$ for $\Phi = A_d$, and of size at most~$4$ otherwise).\footnote{Specifically, it is isomorphic to $\Z_{d+1}$ when $\Phi = A_d$, to $\Z_2$ when $\Phi \in \{B_d, C_d, E_7\}$, to $\Z_4$ or $\Z_2 \times \Z_2$ when $\Phi = D_d$ (for odd, even $d$ respectively), to $\Z_3$ when $\Phi = E_7$, and is trivial otherwise.~\cite[Sec.~3.3]{Ste16}.}
    It is generated by certain products~$\prod_{\alpha \in \Pi} h_{\alpha}(t_\alpha)$ (i.e., diagonal matrices in the matrix realizations), where $\Pi$ is a simple set of roots and the $t_\alpha$'s are roots of unity in~$\F$.
\end{definition}
\begin{remark}  \label{rem:CFSG}
   The Classification of Finite Simple Groups~\cite{Asc04} states that as $\F$ ranges over all finite fields, the adjoint Chevalley groups (excluding $\G{A_1}{\F_2}$, $\G{A_1}{\F_3}$, $\G{B_2}{\F_2}$, $\G{G_2}{\F_2}$, but including the ``twisted'' versions, which we do not discuss in this work) constitute the finite simple groups, together with the cyclic, alternating, and sporadic simple groups.
\end{remark}
Although, strictly speaking, it is the adjoint Chevalley groups that are the simple ones, it is more convenient to work with the very slightly larger universal Chevalley groups.
If one wants to precisely fulfill the goal concerning simple (adjoint) Chevalley groups described in \Cref{sec:goal}, one may use do so by appealing to \Cref{rem:get-to-adjoint}.
But henceforth we work exclusively with the universal Chevalley groups, and we will drop the adjective ``universal''.

Although we have defined the Chevalley groups abstractly, we have~\cite[Sec.~3.3]{Ste16} the isomorphisms with classical groups shown in \Cref{tab:iso}, for the ``classical" root systems of types $A$, $B$, $C$, and~$D$.

\begin{table}[H]
\centering
\begin{tabular}{cll}\label{table:classical}
Type of $\Phi$     & $\G{\Phi}{\cdot}$     & $\Esc{\Phi}{\cdot}$      \\ \hline
$A_d$      & $\mathrm{PSL}_{d+1}$  & $\mathrm{SL}_{d+1}$    \\
$B_d$      & $\mathrm{SO}_{2d+1}$  & $\mathrm{Spin}_{2d+1}$ \\
$C_d$      & $\mathrm{PSp}_{2d}$   & $\mathrm{Sp}_{2d}$     \\
$D_{2\ell}$   & $\mathrm{PSO}_{4\ell}$   & $\mathrm{Spin}_{4\ell}$\\
$D_{2\ell+1}$ & $\mathrm{PSO}_{4\ell+2}$ & $\mathrm{Spin}_{4\ell+2}$
\end{tabular}
\caption{The Chevalley groups corresponding to classical root systems.}
\label{tab:iso}
\end{table}

Identifications of the root elements $x_\alpha(t)$ of the Chevalley groups of classical type as elements of the corresponding matrix groups can be found in~\cite[Sec.~11.3]{Car89}, and matrix realizations for the exceptional Chevalley groups can be found in~\cite{HRT01}.

As we discuss in \Cref{sec:computation}, we have
\[
    n \coloneqq \abs{\Esc{\Phi}{\F}} = \abs{\F}^{\Theta(1)} = \exp(\Theta(m))
\]
for fixed~$p$ and~$\Phi$; indeed, an exact formula for $\abs{\Esc{\Phi}{\F}}$ is known, and one can compute within $\Esc{\Phi}{\F}$ (and $\Gad{\Phi}{\F}$) in $\poly(m) = \polylog(n)$ time (see \Cref{sec:computation}).

\begin{remark}
    From the first relation of \Cref{def:chevgroup}, it follows that the subgroup $\la x_\alpha(r) : r \in \F\ra$ of $\Esc{\Phi}{\F}$ is isomorphic to the additive group of $\F$. This subgroup is called the \emph{root subgroup} associated to~$\alpha$.
\end{remark}

The second relation in \Cref{def:chevgroup} will be used to give explicit descriptions of the links in our constructions, so we elaborate on it here.
\begin{theorem}\label{fact:chevcom}
    The \emph{Chevalley commutator formula} asserts that within $\Esc{\Phi}{\F}$ and $\G{\Phi}{\F}$, if $\alpha, \beta \in \Phi$ with $\alpha + \beta \neq 0$, and $t,u \in \F$, then 
    \[
        [x_\alpha(t), x_\beta(u)] = \prod_{\substack{i,j \in \Z_+ \\ i\alpha + j\beta \in \Phi}} x_{i\alpha + j\beta}(\Const{\alpha}{\beta}{i}{j} t^i u^j)
    \]
    for certain \emph{structure constants} $\Const{\alpha}{\beta}{i}{j} \in \{\pm 1, \pm 2, \pm 3\}$ that can be found in, e.g.,~\cite[Sec.~5.2]{Car89}. Here the product above is taken in order of increasing $i+j$.\footnote{Ties may be broken arbitrarily, as it turns out that  elements with equal $i+j$ commute.} 
    In addition, the structure constants only depend on  the set $\{(i,j):i\alpha+j\beta \in \Phi \}$.\rnote{I don't really know what this comment means}
\end{theorem}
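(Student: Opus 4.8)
The shape of the asserted identity --- that $[x_\alpha(t),x_\beta(u)]$ is a product of elements $x_{i\alpha+j\beta}(\cdot)$ over roots of the form $i\alpha+j\beta$, $i,j\in\Z_+$ --- is literally one of the defining relations of \Cref{def:chevgroup}, so the substantive content is threefold: that the constants $\Const{\alpha}{\beta}{i}{j}$ may be taken in $\{\pm1,\pm2,\pm3\}$; that ordering the product by increasing $i+j$ is unambiguous; and that the constants depend only on the set $\{(i,j):i\alpha+j\beta\in\Phi\}$. The plan is to obtain all three by passing through Chevalley's linear construction, which along the way reproves the relation itself. Let $\mathfrak g$ be the complex simple Lie algebra of type $\Phi$ with a Chevalley basis $\{e_\gamma:\gamma\in\Phi\}\cup\{h_i\}$ (all structure constants integral), and let $V$ be a faithful finite-dimensional module with an admissible lattice $V_\Z$, so each $e_\gamma^k/k!$ preserves $V_\Z$ and acts nilpotently. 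Reducing mod $p$ and extending to $\F$, put $x_\alpha(t)=\exp(t\,\phi(e_\alpha))=\sum_{k\ge0}(t^k/k!)\,\phi(e_\alpha)^k\in\mathrm{GL}(V_\F)$, a finite sum of operators in $\mathrm{End}(V_\Z)\otimes\F$. Since the first and third Steinberg relations hold here, $\Esc{\Phi}{\F}$ maps onto the group these operators generate, so it suffices to check the commutator identity for them. Both sides lie in the subgroup attached to $\Psi=\Phi\cap(\R\alpha+\R\beta)$, and the product is empty unless some $i\alpha+j\beta$ ($i,j>0$) is a root; by unbrokenness of $\alpha$-root strings this forces $\alpha+\beta\in\Phi$, so $\Psi\cong A_2$, $B_2$, or $G_2$ by \Cref{rem:2rootspan}. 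We are thus reduced to a finite computation in three small Lie algebras.

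Writing $X=\phi(e_\alpha)$ and $Y=\phi(e_\beta)$, every iterated bracket of $X$ and $Y$ is supported on one of the finitely many root spaces $i\alpha+j\beta$ and is nilpotent, so in the associative operator algebra the identity $e^{tX}e^{uY}e^{-tX}e^{-uY}=\prod_{i,j>0}e^{c_{i,j}(t,u)\,\phi(e_{i\alpha+j\beta})}$ holds once the factors are ordered by increasing $i+j$, with each $c_{i,j}$ a polynomial in $t,u$. The weight (root-space) grading then forces $c_{i,j}(t,u)=\Const{\alpha}{\beta}{i}{j}\,t^iu^j$ for a scalar $\Const{\alpha}{\beta}{i}{j}$, which is an \emph{integer} because the whole calculation lives over $\Z$. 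Moreover that scalar is a fixed universal polynomial in the Lie-algebra structure constants $N_{\gamma,\delta}$ of $\Psi$, so it depends only on which $i\alpha+j\beta$ are roots --- this is the last assertion of the theorem.

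Finally, carrying out the previous step in $A_2$, $B_2$, $G_2$ (or just reading off \cite[Sec.~5.2]{Car89}) gives $\Const{\alpha}{\beta}{i}{j}\in\{\pm1,\pm2,\pm3\}$: only $\pm1$ in $A_2$, up to $\pm2$ in $B_2$, up to $\pm3$ in $G_2$. For the ordering claim, one checks by inspection that $(i,j)\mapsto i+j$ is injective on $\{(i,j)\in\Z_+^2:i\alpha+j\beta\in\Phi\}$ except when $\Psi\cong G_2$ and $\{\alpha,\beta\}$ is not a simple pair, where the sole collision is between the two largest such roots; in that case their sum is not a root (a finite check), so those two root subgroups commute and the tie-breaking is immaterial. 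The main obstacle is not conceptual but bookkeeping: normalizing the Chevalley basis so the exponentials are integral, tracking the signs of the $N_{\gamma,\delta}$ (only the \emph{set} of possible values of $\Const{\alpha}{\beta}{i}{j}$ is canonical, not the individual signs), and running the $G_2$ expansion with up to four factors. Since none of the signs matter for this paper, in practice one simply quotes the table rather than recomputing it.
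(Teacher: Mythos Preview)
The paper does not prove this theorem; it is stated as a background fact about Chevalley groups and attributed to \cite[Sec.~5.2]{Car89}, with no argument given. So there is nothing in the paper to compare your proposal against.

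That said, your sketch is along the lines of the standard derivation (reduce to the rank-$2$ subsystem spanned by $\alpha,\beta$, expand the exponentials in a Chevalley basis over an admissible lattice, and read off the integer constants case-by-case in $A_2$, $B_2$, $G_2$), which is essentially what the cited reference does. One small correction: you say the ordering by $i+j$ is injective except for a single $G_2$ collision, but in fact when $\Psi\cong G_2$ and $\alpha,\beta$ are the short roots at $120^\circ$, the pairs $(2,1)$ and $(1,2)$ both have $i+j=3$, and this is not the ``simple pair'' situation you singled out. The footnote in the statement already handles all such ties uniformly by noting that elements with equal $i+j$ commute (their root-sum has height $\ge 2(i+j)$ in the $\{\alpha,\beta\}$-span and is never a root), so there is no need to enumerate the collisions.
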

\begin{remark}
In particular, the commutator formula implies that if $\alpha + \beta \not \in \Phi \cup \{0\}$, then $[x_\alpha(r), x_\beta(r)] = 1$.
\end{remark}
\begin{remark}\label{rmk:signs}
The constants $\Const{\alpha}{\beta}{i}{j}$ are determined uniquely by $\Phi$ up to signs. Different signs can arise from different choices of a \emph{Chevalley basis}. The resulting groups are isomorphic, however. See~\cite[Prop.~4.2.2]{Car89} and the preceding discussion.\rnote{I don't really know what this remark means.}
\end{remark}
Although  not strictly necessary for our work, we give explicit structure constants in the following description of the commutator formula for root systems of rank~$2$:
\begin{proposition}\label{rk2com}
    \textnormal{(\cite[Sec.~33.3--33.5]{Hum95}.)}
    Let $\Phi$ be one of $A_2,B_2,$ or $G_2$ and let $t,u \in \F$. Then:\footnote{For $G_2$ we fix the signs implemented in the GAP~\cite{GAP4} package Unipot~\cite{HH18}, which we used for calculations in \Cref{g2graphs}.}
    \begin{itemize}
        \item If $\Phi = A_2$ with positive roots $\alpha,\beta, \alpha+\beta$, then
        \[[x_\alpha(t),x_\beta(u)] = x_{\alpha + \beta}(tu).\]
        \item If $\Phi = B_2$ with positive roots $\alpha,\beta,\alpha+\beta,2\alpha+\beta$, then
        \begin{align*}
        [x_\beta(t),x_\alpha(u)] &= x_{\alpha + \beta}(tu)x_{2\alpha+\beta}(t^2u)\\
        [x_{\alpha+\beta}(t),x_\alpha(u)] &= x_{2\alpha+\beta}(2tu).
        \end{align*}
        \item If $\Phi = G_2$ with positive roots $\alpha,\beta,\alpha+\beta,2\alpha+\beta,3\alpha+\beta,3\alpha+2\beta$, then
        \begin{align*}
        [x_\beta(t),x_\alpha(u)] &= x_{\alpha+\beta}(tu)x_{2\alpha+\beta}(tu^2)x_{3\alpha+\beta}(tu^3)x_{3\alpha+2\beta}(-t^2u^3) \\
        [x_{\alpha+\beta}(t),x_{\alpha}(u)] &= x_{2\alpha+\beta}(2tu)x_{3\alpha+\beta}(3tu^2)x_{3\alpha+2\beta}(-3t^2u)\\
        [x_{2\alpha+\beta}(t),x_{\alpha}(u)] &= x_{3\alpha+\beta}(3tu)\\
        [x_{3\alpha+\beta}(t),x_{\beta}(u)] &= x_{3\alpha+2\beta}(tu) \\
        [x_{2\alpha+\beta}(t),x_{\alpha+\beta}(u)] &= x_{3\alpha+2\beta}(-3tu).
        \end{align*}
        \end{itemize}
\end{proposition}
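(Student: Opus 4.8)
My plan is to separate the two data carried by \Cref{fact:chevcom}: the \emph{support} --- which root subgroups $x_{i\gamma+j\delta}(\cdot)$ actually appear --- and the \emph{structure constants} $\Const{\gamma}{\delta}{i}{j}$, the monomials $t^iu^j$ then being determined. I would pin down the support first. Fix the simple set $\{\alpha,\beta\}$ drawn in \Cref{fig:root-eg}, so that $\alpha$ is short and $\beta$ long in the $B_2$ and $G_2$ cases, and recall the positive roots: $\{\alpha,\beta,\alpha+\beta\}$ for $A_2$, $\{\alpha,\beta,\alpha+\beta,2\alpha+\beta\}$ for $B_2$, and $\{\alpha,\beta,\alpha+\beta,2\alpha+\beta,3\alpha+\beta,3\alpha+2\beta\}$ for $G_2$ (standard; see e.g.\ \cite[Sec.~3.6]{Car89}). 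For each ordered pair $(\gamma,\delta)$ on the left of one of the stated identities, the set $\{(i,j)\in\Z_+^2 : i\gamma+j\delta\in\Phi\}$ is a one-line lattice check, and it already dictates which factors appear on the right: e.g.\ only $(1,1)$ survives for $A_2$, giving the single factor $x_{\alpha+\beta}$, while for $[x_\beta(\cdot),x_\alpha(\cdot)]$ in $G_2$ one gets precisely $(1,1),(1,2),(1,3),(2,3)$, i.e.\ the four factors $x_{\alpha+\beta},x_{2\alpha+\beta},x_{3\alpha+\beta},x_{3\alpha+2\beta}$. The products are written in increasing order of $i+j$, as in \Cref{fact:chevcom}; in the two $G_2$ commutators where two factors share the same $i+j$, the corresponding roots do not sum to a root, so those factors commute and the order is immaterial, matching the footnote of \Cref{fact:chevcom}.

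Next I would compute the structure constants from the standard root-string formulas for $\Const{\gamma}{\delta}{i}{j}$ (\cite[Sec.~5.2]{Car89}): in particular $\abs{\Const{\gamma}{\delta}{1}{1}}=p+1$, where $p$ is the largest integer with $\delta-p\gamma\in\Phi$, together with the analogous expressions for $\Const{\gamma}{\delta}{2}{1},\Const{\gamma}{\delta}{1}{2},\Const{\gamma}{\delta}{3}{1},\Const{\gamma}{\delta}{3}{2}$. Evaluating these is immediate: in $A_2$, $\beta-\alpha\notin\Phi$, so every constant has magnitude $1$; in $B_2$, $\alpha-(\alpha+\beta)=-\beta\in\Phi$ but $\alpha-2(\alpha+\beta)\notin\Phi$, giving $\abs{\Const{\alpha+\beta}{\alpha}{1}{1}}=2$ with the rest of magnitude $1$; in $G_2$ the magnitudes $2$ and $3$ arise the same way, e.g.\ $\alpha-(2\alpha+\beta)$ and $\alpha-2(2\alpha+\beta)$ lie in $\Phi$ but $\alpha-3(2\alpha+\beta)$ does not, so $\abs{\Const{2\alpha+\beta}{\alpha}{1}{1}}=3$. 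By \Cref{rmk:signs} the signs are meaningful only relative to a fixed Chevalley basis, so I would finish by fixing one and reading off the signs in a concrete model: a faithful matrix representation --- $\SL_3(\F)$ for $A_2$, $\SP_4(\F)$ for $B_2$ (using the explicit elementary matrices $x_\alpha(t)$; cf.\ the matrix realizations of the $H_i$ in \Cref{sec:eg}), and the $7$-dimensional representation for $G_2$ (\cite{HRT01}) --- in which each identity reduces to multiplying out a few matrices. For $G_2$ we adopt the sign conventions of the Chevalley basis implemented in the GAP package Unipot \cite{HH18}, which is the one used in \Cref{g2graphs}.

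The only genuinely delicate step is the sign bookkeeping for $G_2$: with six positive roots, root-strings containing up to four roots, and several constants of magnitude $2$ or $3$, verifying the precise signs as stated is a finite but somewhat tedious computation in a fixed model --- which is exactly why we tie the conventions to a computer-algebra package rather than chasing signs by hand. For $A_2$ and $B_2$, support, magnitudes, and signs all drop out of a handful of root sums and a small matrix product, so those cases are entirely routine, and I expect the $G_2$ sign computation to constitute essentially the whole of the work.
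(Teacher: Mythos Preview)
The paper does not actually prove this proposition: it is stated with a citation to \cite[Sec.~33.3--33.5]{Hum95} and no argument is given, as these commutator formulas are standard and are simply quoted from the literature (with the $G_2$ sign convention tied to the Unipot package). So there is no ``paper's own proof'' to compare your proposal against.

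That said, your plan is a perfectly reasonable way to \emph{derive} or independently verify the formulas. The decomposition into (i)~determining the support from the root lattice, (ii)~computing magnitudes from the root-string formulas in \cite[Sec.~5.2]{Car89}, and (iii)~pinning down signs in a concrete matrix model is exactly how one would reconstruct these identities from first principles, and your assessment that the only nontrivial bookkeeping is the $G_2$ signs is accurate. One small caution: in the $B_2$ line $[x_\beta(t),x_\alpha(u)] = x_{\alpha+\beta}(tu)x_{2\alpha+\beta}(t^2u)$, note the exponent on the second factor is $t^2u$ rather than $tu^2$, reflecting the convention that here $\beta$ is the \emph{first} argument and $\alpha$ the \emph{short} root, so the $(i,j)$ indexing in your support check has to be matched carefully to the order of arguments in each commutator. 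But this is just a matter of care, not a gap in the approach.
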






Finally, we will require two more key facts:
\begin{proposition} \label{prop:steinberg17}
    \textnormal{(\cite[Lem.~17]{Ste16}.}
    In the Chevalley group $\Esc{\Phi}{\F}$, suppose $S \subset \Phi$ is a set of roots with the following two properties: (i)~$\alpha, \beta \in S$ and $\alpha + \beta \in \Phi$ implies $\alpha + \beta \in S$; (ii)~$\alpha \in S$ implies $-\alpha \not \in S$.
    Then each element of the subgroup $\la x_\alpha(t) : \alpha \in S, t \in \F \ra$ can be expressed uniquely as $\prod_{\alpha \in S} x_\alpha(t_\alpha)$ for some $t_\alpha \in \F$, where the product is taken in some fixed order (and this is true for any fixed ordering of~$S$ for the product).
\end{proposition}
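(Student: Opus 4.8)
The plan is to argue in three stages: (a) reduce to the case $S \subseteq \Phi^+$ for a suitable base $\Pi$; (b) show that the set of products in question is a subgroup, which yields existence of the claimed expression; and (c) obtain uniqueness by a height-filtration argument that ultimately reads off a non-degeneracy statement from a concrete realization of $\Esc{\Phi}{\F}$. Stages (a) and (b) are bookkeeping with the root-system facts already in hand and the Chevalley commutator formula; the genuinely non-formal step, and the main obstacle, will be stage (c), since uniqueness cannot be extracted from the abstract Steinberg presentation alone.

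\emph{Stage (a).} First I would show that (i)--(ii) force the existence of a linear functional $\varphi$ with $\varphi(\alpha)>0$ for all $\alpha\in S$; perturbing $\varphi$ so that it is nonzero on all of $\Phi$ and taking the resulting base $\Pi$ (exactly as in the proof of \Cref{fact:simplechoice}) then gives $S\subseteq\Pi^+$. By a theorem of the alternative (Gordan), such a $\varphi$ can fail to exist only if some nontrivial nonnegative combination of $S$ vanishes; since the roots are rational vectors, scaling produces a relation $\sum_i c_i\delta_i=0$ with distinct $\delta_i\in S$ and $c_i\in\Z_+$. Choose one with $\sum_i c_i$ minimal. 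As $S$ has no antipodal pair there are at least three terms, so expanding $0=\lvert\sum_i c_i\delta_i\rvert^2$ forces $\delta_i\cdot\delta_j<0$ for some $i\ne j$; then $\delta_i+\delta_j\in\Phi$ by \Cref{rootsum} and $\delta_i+\delta_j\in S$ by (i), and substituting $c_i\delta_i+c_j\delta_j=(c_i-1)\delta_i+(c_j-1)\delta_j+(\delta_i+\delta_j)$ gives a nontrivial relation of strictly smaller weight, a contradiction. From now on $S\subseteq\Phi^+$, and I order $S$ by height $\hgt=\hgt_\Pi$.

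\emph{Stage (b).} Fix any ordering of $S$ and let $P$ be the set of products $\prod_{\alpha\in S}x_\alpha(t_\alpha)$ taken in that order. Using the Chevalley commutator formula (\Cref{fact:chevcom}) one interchanges two adjacent factors at the cost of inserting new factors $x_{i\alpha+j\beta}(\cdot)$ with $i,j\ge1$ and $i\alpha+j\beta\in\Phi$; any such root is a positive $\N$-combination of $\alpha,\beta\in S$, hence lies in $S$ by \Cref{fact:prefixsum} together with repeated use of (i), and it has height strictly exceeding $\hgt\alpha$ and $\hgt\beta$. Consequently a ``bubble sort'' --- with a measure such as the multiset of factor-heights together with the number of out-of-order adjacent pairs, which strictly decreases at each step --- terminates, showing $P$ is closed under multiplication; closure under inverses is analogous. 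Thus $P=\la x_\alpha(t):\alpha\in S,\ t\in\F\ra=:U_S$, every element of $U_S$ has at least one expression of the stated form, $\lvert U_S\rvert\le\lvert\F\rvert^{\lvert S\rvert}$, and $P$ is independent of the chosen ordering.

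\emph{Stage (c).} For $k\ge1$ put $U_{\ge k}=\la x_\gamma(t):\gamma\in S,\ \hgt\gamma\ge k\ra$, so $U_{\ge1}=U_S$ and $U_{\ge k}=\{1\}$ for large $k$. The commutator formula shows $U_{\ge k}\triangleleft U_S$, that the images of $x_\gamma(t)$ with $\hgt\gamma\ge k+1$ die in the quotient, and that $U_{\ge k}/U_{\ge k+1}$ is abelian; hence it is a quotient of $\bigoplus_{\gamma\in S,\,\hgt\gamma=k}(\F,+)$, and $\lvert U_S\rvert=\prod_k\lvert U_{\ge k}/U_{\ge k+1}\rvert\le\lvert\F\rvert^{\lvert S\rvert}$. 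To see each quotient is as large as possible, I would pass to the adjoint representation of $\Esc{\Phi}{\F}$ on its Lie algebra (or to any matrix realization, whose existence the paper has noted). There $x_\gamma(t)\cdot h_\gamma=h_\gamma-2t\,e_\gamma$, and for $\gamma'\ne\gamma$ of the same height $x_\gamma(t)$ does not alter the $e_{\gamma'}$-component of a vector modulo weights of height $>k$; since commuting height-$k$ factors past one another produces only weights of height $\ge 2k>k$, the rule $u\mapsto\bigl(e_\gamma\text{-component of }u\cdot h_\gamma\bigr)_{\hgt\gamma=k}$ descends to a homomorphism $U_{\ge k}/U_{\ge k+1}\to\bigoplus_{\hgt\gamma=k}\F$ sending $\overline{x_\gamma(t)}\mapsto -2t\,e_\gamma$. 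As $\chr\F>3$ this is surjective, so $\lvert U_{\ge k}/U_{\ge k+1}\rvert=\lvert\F\rvert^{\#\{\gamma\in S:\hgt\gamma=k\}}$, whence $\lvert U_S\rvert=\lvert\F\rvert^{\lvert S\rvert}$. The surjection $(t_\alpha)\mapsto\prod_{\alpha\in S}x_\alpha(t_\alpha)$ (for our ordering, or by stage (b) for any ordering) is then a map between finite sets of equal cardinality, hence a bijection. The crux is this last non-degeneracy input: equivalently one can simply cite the classical normal-form theorem for the full positive unipotent subgroup (e.g.\ \cite[Thm.~5.3.3]{Car89}) and note $U_S$ sits inside it with its $S$-coordinates read off by the same height-triangularity.
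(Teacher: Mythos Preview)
The paper does not supply its own proof of this proposition: it is quoted directly from Steinberg's lecture notes.  The only place the paper touches its content is inside the proof of \Cref{substruct}, where, ``for completeness'', it sketches the \emph{existence} half --- and that sketch is essentially your Stage~(b): rewrite via the Chevalley commutator formula in a height-respecting order (citing \cite[Thm.~5.3.3]{Car89}), then pass to an arbitrary order using normality of the tail subgroups $B_i$ (citing \cite[Lem.~18]{Ste16}).  So on existence you and the paper coincide.

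Your Stages~(a) and~(c) are genuine additions.  Stage~(a) --- reducing to $S\subseteq\Pi^+$ by Gordan's alternative plus a minimal-weight dependence trick using \Cref{rootsum} and closure~(i) --- is correct and pleasant; the paper never states or needs this reduction explicitly.  Stage~(c) --- uniqueness via the height filtration $U_{\ge k}$ and a size count through the adjoint action --- is also correct (your $\chr\F>3$ suffices for the factor $-2$ to be a unit), though your own closing remark is the cleaner route: once $S\subseteq\Pi^+$, embed $U_S$ in the full $U^+$ and read off the $S$-coordinates from the classical normal form \cite[Thm.~5.3.3]{Car89}.  That is in the same spirit as how the paper handles related uniqueness statements (e.g.\ \Cref{prop:uniint}, \Cref{subint}).

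One small wobble: in Stage~(b) your proposed termination measure (``multiset of factor-heights together with the number of out-of-order adjacent pairs'') does not obviously decrease, since each swap \emph{adds} factors of strictly larger height and may create new inversions.  The paper's sketch finesses this by working modulo the normal tail subgroups, and your own filtration in Stage~(c) gives the clean inductive fix (write $u$ as a product over the minimum-height layer times an element of $U_{\ge k+1}$, then recurse); you may want to reroute Stage~(b) through that filtration rather than a raw bubble sort.
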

\begin{proposition}\label{prop:uniint}
Let $\Pi$ be a set of simple roots, and define the two subgroups $U^\pm = \langle x_{\alpha}(t) : \alpha \in \Pi^\pm \rangle$. Then $U^+ \cap U^- = \{1\}$.
\end{proposition}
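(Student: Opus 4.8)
The plan is to use the theory of positive systems and the Bruhat-style uniqueness in the Steinberg presentation. Observe first that both $\Pi^+$ and $\Pi^-$ satisfy the hypotheses (i) and (ii) of \Cref{prop:steinberg17}: closure under root addition holds because $\Pi^+$ (resp.\ $\Pi^-$) is the set of all \emph{positive} (resp.\ \emph{negative}) roots with respect to the simple system $\Pi$, and of course no positive root is the negative of another positive root. Hence every element of $U^+$ has a unique expression as an ordered product $\prod_{\alpha \in \Pi^+} x_\alpha(t_\alpha)$, and likewise for $U^-$ with the $-\alpha$, $\alpha \in \Pi^+$.

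The key step is to exhibit a group homomorphism (or a well-behaved invariant) that separates $U^+$ from $U^-$. The cleanest route is to invoke the isomorphism of $\Esc{\Phi}{\F}$ with a matrix group and use the usual Gaussian-elimination / Bruhat decomposition: in the standard matrix realization (\cite[Sec.~11.3]{Car89}), $U^+$ is the group of upper unitriangular matrices (with the pattern of entries dictated by $\Phi$) and $U^-$ is the group of lower unitriangular matrices; a matrix that is simultaneously upper and lower unitriangular is the identity. Concretely, one can reduce to a faithful representation: the adjoint (or the minuscule/standard) representation sends $x_\alpha(t)$ to $I + t\,e_\alpha + (\text{higher order in }t)$ where $e_\alpha$ raises the weight by $\alpha$; since the weights are linearly ordered by the height function $\hgt_\Pi$, an element of $U^+$ is block-upper-triangular and an element of $U^-$ is block-lower-triangular with respect to the grading of the representation space by $\hgt_\Pi$ of the weight. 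Their intersection must be block-diagonal, hence (being unipotent) the identity.

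An alternative, purely presentation-theoretic argument avoids matrices: take the fixed order on $\Pi^+$ to be by increasing height, so that by \Cref{prop:steinberg17} a nontrivial $g \in U^+$ can be written uniquely as $g = \prod_{\alpha} x_\alpha(t_\alpha)$ with some $t_{\alpha_0} \neq 0$ for a height-minimal such $\alpha_0$; by \Cref{fact:prefixsum}-type reasoning the ``leading term'' $\alpha_0$ is a simple root (a height-one positive root whose coefficient is nonzero cannot be cancelled by products of higher-height generators, using the Chevalley commutator formula to see that commutators only produce roots of strictly larger height). Symmetrically, a nontrivial element of $U^-$ has a leading term at a \emph{negative} simple root. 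Since a simple root cannot be both positive and negative, no nontrivial element lies in both, giving $U^+ \cap U^- = \{1\}$.

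The main obstacle I anticipate is making the ``leading term'' argument fully rigorous in the abstract presentation without circularity: one must be careful that the uniqueness statement of \Cref{prop:steinberg17} is compatible across the two orderings and that cancellation between $U^+$-normal-form and $U^-$-normal-form words genuinely forces triviality. For this reason I would lean on the matrix realization (or equivalently a faithful highest-weight representation graded by $\hgt_\Pi$), where the upper/lower-triangular dichotomy makes the conclusion immediate; the root-system facts already quoted (\Cref{fact:simplechoice}, \Cref{fact:prefixsum}) then suffice to justify that the grading is well-defined and that $U^\pm$ respect it.
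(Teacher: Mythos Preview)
Your primary route---invoking a faithful matrix realization in which $U^+$ is upper-unitriangular and $U^-$ is lower-unitriangular, so that their intersection is the identity---is exactly the paper's proof (the paper cites \cite[Lem.~18, Cor.~3]{Ste16} for this realization). Your alternative ``leading term'' argument is unnecessary, and you correctly flag its potential circularity; stick with the matrix route.
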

\begin{proof}
By \cite[Lem.~18, Cor.~3]{Ste16}, $\Esc{\Phi}{\F}$ can be realized as a group of matrices over $\F$ where the subgroup $U^+$ is upper-unitriangular and $U^{-}$ is lower-unitriangular. The proposition follows.
\end{proof}

\subsection{Computation within the Chevalley groups}    \label{sec:computation}
Given field $\F = \F_q = \F_{p^m}$ and root system $\Phi$ of rank~$d$, let us treat $\Phi$ and $p$ as fixed, and $m \to \infty$ as an asymptotically growing parameter.
Here we recap the known facts that the Chevalley group $\Gu{\Phi}{\F}$ has order $n = \exp(\Theta(m))$ and that one can compute within~$\ol{G}$ in deterministic $\poly(m) = \polylog(n)$ time.
(The same is true for the adjoint Chevalley group $\Gad{\Phi}{\F}$.)\\

First, field arithmetic is efficient, thanks to Shoup:
\begin{theorem}                                     \label{thm:fields}
    \textnormal{(\cite{Sho90}.)}
    For a fixed prime~$p$, there is an deterministic $\poly(m)$-time algorithm for finding an irreducible $f \in \F_p[x]$ of degree~$d$, and thereby ``constructing'' the field $\F = \F_q = \F_{p^m}$.
    The elements of~$\F$ are encoded by bit-strings of length~$\Theta(m)$, and field operations may be computed in deterministic $\poly(m)$ time --- this includes computing all $k$th roots of unity in $\poly(k,m)$ time.
\end{theorem}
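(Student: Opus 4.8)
The plan is to split the statement into three nearly independent pieces --- arithmetic in~$\F$ once a defining polynomial~$f$ is available, the construction of~$f$ itself, and the computation of roots of unity --- and to invoke one nontrivial external result only for the middle piece.

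First, arithmetic once~$f$ is known. Represent an element of $\F = \F_p[x]/(f)$ by the coefficient vector of its unique representative of degree~$<m$; for fixed~$p$ this is a bit-string of length~$\Theta(m)$. Addition and subtraction are componentwise modulo~$p$; multiplication is polynomial multiplication in $\F_p[x]$ (schoolbook suffices for a $\poly(m)$ bound) followed by reduction modulo~$f$; the inverse of a nonzero element is produced by one run of the extended Euclidean algorithm in $\F_p[x]$ against~$f$. Each of these is deterministic and $\poly(m)$, so the entire content of the theorem is in producing~$f$ and in the roots-of-unity claim.

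Second, producing~$f$. Here I would cite, as a black box, Shoup's theorem that for a prime~$p$ there is a deterministic $\poly(p,m)$-time algorithm outputting a monic irreducible $f \in \F_p[x]$ of degree~$m$; fixing~$p$ yields the claimed $\poly(m)$ bound. For a sketch of why such an algorithm exists: (i)~factor $m = \prod_j q_j^{e_j}$ into prime powers ($m$ is tiny, so this step is trivial); (ii)~since the $q_j^{e_j}$ are pairwise coprime, it suffices to build an irreducible polynomial of each prime-power degree $q_j^{e_j}$ and then combine them --- pass to the compositum of the corresponding extensions of~$\F_p$ and compute, by $\poly(m)$ linear algebra, the minimal polynomial of a sum of generators, which is irreducible of degree \emph{exactly}~$m$ precisely because the degrees are coprime; (iii)~for a prime power~$q^e$, deterministically sieve (in time $\poly(q^e)$) for a small auxiliary prime $r \equiv 1 \pmod{q^e}$ such that~$p$ has multiplicative order divisible by~$q^e$ modulo~$r$, so that an $\F_p$-irreducible factor of the $r$-th cyclotomic polynomial $\Phi_r(x)$ has degree a multiple of~$q^e$; from such a factor one descends to the unique subextension of degree~$q^e$ by a further bounded-degree computation, the cases $q = p$ (use an Artin--Schreier polynomial) and $q = 2$ (a short explicit tower) being treated separately. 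Every polynomial factored along the way has degree $\poly(m)$ over~$\F_p$, and deterministic factorization of a degree-$D$ polynomial over~$\F_p$ takes time $\poly(p,D)$ by Berlekamp's algorithm; this fact, together with the number-theoretic input in step~(iii) that a polynomially bounded~$r$ exists, is the one thing I would not reprove.

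Third, roots of unity --- and the main obstacle. The elements of~$\F$ of multiplicative order dividing~$k$ are exactly the roots of $x^{\gcd(k,q-1)}-1$, which divides $x^k-1$; so it suffices to find all roots in~$\F$ of $x^k-1$, and I would obtain these by factoring $x^k-1 \in \F[x]$ (degree~$k$) and collecting its linear factors. Deterministic factorization of a degree-$D$ polynomial over $\F = \F_{p^m}$ takes time $\poly(p,m,D)$: Berlekamp's algorithm reduces factorization to root-finding in the Frobenius-kernel subalgebra, and that root-finding enumerates only over the prime field~$\F_p$ rather than over~$\F$, so the running time here is $\poly(p,m,k) = \poly(k,m)$ for fixed~$p$. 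The only genuinely nontrivial part of the whole argument is step~(iii) above --- equivalently, Shoup's reduction from ``construct an irreducible of degree~$m$'' to ``factor a $\poly(m)$-degree polynomial over~$\F_p$'' --- where the delicate points are the deterministic search for a small auxiliary prime and checking that the prime-power pieces glue to give degree exactly~$m$; everything else is routine once deterministic factorization over~$\F_p$ is granted.
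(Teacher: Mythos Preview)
The paper does not prove this theorem at all: it is stated as a citation to Shoup~\cite{Sho90} (together with standard facts about field arithmetic) and used as a black box. So there is no ``paper's own proof'' to compare against; your proposal already goes well beyond what the paper provides.

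Your sketch is essentially correct and tracks the actual content of Shoup's construction reasonably well. A couple of minor remarks: in step~(iii), the claim that one can ``deterministically sieve (in time $\poly(q^e)$)'' for the auxiliary prime~$r$ hides the real number-theoretic input --- the existence of a suitable $r$ of size polynomial in~$q^e$ is exactly what requires Linnik-type bounds on primes in progressions, and Shoup's paper handles this with some care (you rightly flag this as the point you would not reprove). Also, your roots-of-unity argument via deterministic Berlekamp over $\F_{p^m}$ is fine for fixed~$p$, since Berlekamp's enumeration is over the prime field; this is the correct observation. In short, your proposal is sound and strictly more detailed than the paper, which simply appeals to the reference.
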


Next, we note that there is an easy-to-compute formula for the order of a given Chevalley group:
\begin{theorem}                                     \label{thm:order}
    For $\Phi$ of rank~$d$, the order of the group $\Gu{\Phi}{\F}$ is of the form $q^{\Theta(d^2)} = p^{\Theta(d^2 m)}$, where the constant hidden in the $\Theta(\cdot)$ depends only on~$\Phi$.
    Moreover there is a precise formula for $\abs{\Gu{\Phi}{\F}}$ that can easily be computed in $\poly(d, \log p, m)$ time; see, e.g.~\cite[Thm.~25]{Ste16}.
    (All of this is also true of  $\Gad{\Phi}{\F}$.)
\end{theorem}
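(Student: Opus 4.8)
The plan is to invoke the classical order formula for Chevalley groups and then do elementary estimates. Write $N = \abs{\Phi^+}$ for the number of positive roots of~$\Phi$, and let $d_1, \dots, d_d \geq 2$ be the degrees of the fundamental invariants of the Weyl group of~$\Phi$ (so $d_i = m_i + 1$, where the $m_i$ are the exponents). Steinberg's formula \cite[Thm.~25]{Ste16} gives
\[
    \abs{\Esc{\Phi}{\F}} \;=\; q^{N} \prod_{i=1}^{d} (q^{d_i} - 1),
\]
and I would combine this with two standard facts: the exponent identity $\sum_{i=1}^d (d_i - 1) = N$ (equivalently $\sum_i d_i = N + d$), and the explicit values $N = \binom{d+1}{2}$, $d^2$, $d^2$, $d(d-1)$ for types $A_d$, $B_d$, $C_d$, $D_d$ respectively --- so $N = \Theta(d^2)$ on each infinite family --- while for the finitely many exceptional types both $N$ and~$d$ are fixed constants.

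Taking $\log_q$ of the order formula and using $d_i - 1 \leq \log_q(q^{d_i} - 1) \leq d_i$ (valid since $q \geq 2$), I would get
\[
    2N \;\leq\; \log_q \abs{\Esc{\Phi}{\F}} \;=\; N + \sum_{i=1}^d \log_q(q^{d_i} - 1) \;\leq\; 2N + d,
\]
so $\log_q \abs{\Esc{\Phi}{\F}} = \Theta(N) = \Theta(d^2)$, with implied constants absolute on each infinite family (hence depending only on the type of~$\Phi$); equivalently $\abs{\Esc{\Phi}{\F}} = q^{\Theta(d^2)} = p^{\Theta(d^2 m)}$. For the adjoint group, $\abs{\Gad{\Phi}{\F}} = \abs{\Esc{\Phi}{\F}}/\abs{\Center{\Phi}{\F}}$ with $\abs{\Center{\Phi}{\F}} \leq d+1$ (in fact $\leq 4$ unless $\Phi = A_d$) by \Cref{def:adj}; this changes $\log_q$ of the order by at most $\log_q(d+1)$, which is negligible against $d^2$, so the same asymptotics hold.

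For the computational claim, note that $q = p^m$ is stored in $\Theta(m\log p)$ bits; the integers $N$ and $d_1, \dots, d_d$ are computable in $\poly(d)$ time from the closed forms above (or a lookup table for the exceptional types) and satisfy $d_i \leq N + d = O(d^2)$. I would then compute $q^N$ and each $q^{d_i}$ by repeated squaring --- each being an $O(d^2 m\log p)$-bit integer obtained with $O(\log d)$ integer multiplications --- and multiply the $d+1$ integers $q^N, q^{d_1}-1, \dots, q^{d_d}-1$ together (and, for the adjoint case, divide by $\abs{\Center{\Phi}{\F}}$). Since integer multiplication of $b$-bit numbers costs $\poly(b)$ and we only ever handle $O(d^2 m\log p)$-bit integers, the total running time is $\poly(d, \log p, m)$. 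I do not expect any genuine obstacle here: the content is entirely in correctly quoting the order formula together with the classification data for the invariant degrees, and in checking that the estimates $q^{d_i}-1 \in [q^{d_i-1}, q^{d_i}]$ combine to pin the exponent down to $\Theta(d^2)$.
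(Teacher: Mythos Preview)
Your proposal is correct, and in fact it supplies considerably more detail than the paper does: in the paper this theorem is not proved at all but is simply stated as a known fact, with a pointer to \cite[Thm.~25]{Ste16} for the exact order formula. Your argument---quoting Steinberg's formula $\abs{\Esc{\Phi}{\F}} = q^N \prod_i (q^{d_i}-1)$, using $\sum_i(d_i-1)=N$ and $N=\Theta(d^2)$ for the classical families, sandwiching $\log_q(q^{d_i}-1)$ between $d_i-1$ and $d_i$, and then checking that the arithmetic on $O(d^2 m\log p)$-bit integers runs in the claimed time---is a clean and complete way to unpack the citation, and there is no discrepancy in approach to discuss.
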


Finally, we appeal to the work of Cohen, Murray, and Taylor~\cite{CMT04} to show that one can efficiently construct and compute within Chevalley groups:
\begin{theorem}                                     \label{thm:computing}
    \textnormal{(\cite{CMT04}, see especially Sec.~8.1.)}
     For $\Phi$ of rank~$d$ and $\F = \F_{p^m}$, there is a canonical representation (``Bruhat normal form'') for each element of $\Gu{\Phi}{\F}$, encoded by a bit-string of length $\poly(d, \log q, m)$.  One can pass between this form, a natural matrix representation, and an expression in the Steinberg presentation --- and also compute group products and inverses --- via deterministic $\poly(d, \log q, m)$-time algorithms.
     (Since $k$th roots of unity can also be computed efficiently (\Cref{thm:fields}), the $O(d)$-size center~$Z$ of $\Gu{\Phi}{\F}$ can also be constructed efficiently, and hence this whole theorem is also true for~$\Gad{\Phi}{\F}$.)
\end{theorem}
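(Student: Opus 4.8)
The plan is to reduce the statement to the algorithmic framework for finite groups of Lie type developed by Cohen, Murray, and Taylor~\cite{CMT04}, supplying only the easy bookkeeping needed to read off the claimed encoding length and a short extra argument for the adjoint group. The starting point is the Bruhat decomposition: every $g \in \Gu{\Phi}{\F}$ has a unique expression $g = u\,h\,\dot{w}\,v$, where $u$ lies in $U = \la x_\alpha(t) : \alpha \in \Phi^+ \ra$, the element $h = \prod_{\alpha \in \Pi} h_\alpha(t_\alpha)$ lies in the (diagonal) torus, $w$ is a Weyl group element with a fixed lift $\dot{w}$ built from the $n_\alpha(1)$, and $v$ lies in $U \cap \dot{w}^{-1} U^- \dot{w}$. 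By \Cref{prop:steinberg17}, $u$ and $v$ are each a product of at most $\abs{\Phi^+} = O(d^2)$ root elements $x_\alpha(t)$, each carrying one field element; $h$ carries $d$ nonzero field elements; and $w$ is one of $\abs{W} = \exp(O(d\log d))$ Weyl elements. Since each element of $\F = \F_{p^m}$ is encoded in $\Theta(m)$ bits (\Cref{thm:fields}), the whole normal form fits in $\poly(d, \log q, m)$ bits.

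I would then describe the group operations. To multiply two elements, write both in Bruhat normal form, concatenate, and repeatedly apply (i) the Chevalley commutator formula (\Cref{fact:chevcom}) to pull root elements of $U^-$ leftward past those of $U^+$, recording the crossing terms via the structure constants, and (ii) the conjugation rules $n_\alpha(1)\, x_\beta(u)\, n_\alpha(1)^{-1} = x_{w_\alpha(\beta)}(\pm u)$ and $h_\alpha(t)\, x_\beta(u)\, h_\alpha(t)^{-1} = x_\beta(t^{c_{\alpha\beta}} u)$ to slide torus and Weyl factors into place. Each elementary move costs $\poly(\log q, m)$ field operations, and only $\poly(d)$ moves are needed since the Weyl length of the accumulated Weyl element never exceeds $\abs{\Phi^+}$; this is exactly the content of \cite[Sec.~8.1]{CMT04}, and inverses are handled identically starting from $g^{-1} = v^{-1}\dot{w}^{-1} h^{-1} u^{-1}$. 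Converting to the natural matrix representation means multiplying out $O(d^2)$ explicit $O(d)\times O(d)$ matrices (\cite[Sec.~11.3]{Car89} for classical types, \cite{HRT01} for exceptional); converting back is a structured Gaussian elimination that peels off $U$, reads $w$ from the permutation pattern, then recovers $h$ and $v$, again as in \cite{CMT04}.

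For the adjoint group $\Gad{\Phi}{\F} = \Gu{\Phi}{\F}/\Center{\Phi}{\F}$, recall from \Cref{def:adj} that $\Center{\Phi}{\F}$ consists of certain products $\prod_{\alpha \in \Pi} h_\alpha(t_\alpha)$ with the $t_\alpha$ roots of unity, and $\abs{\Center{\Phi}{\F}} = O(d)$. Since all relevant roots of unity are computable in $\poly(k,m)$ time (\Cref{thm:fields}), one can list $\Center{\Phi}{\F}$ explicitly; a canonical coset representative is obtained by reducing the torus part of the Bruhat form modulo this finite list, so every operation in $\Gad{\Phi}{\F}$ is an operation in $\Gu{\Phi}{\F}$ followed by an $O(d)$-cost normalization.

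The only genuine obstacle is the correctness-and-termination analysis of the multiplication algorithm in the second step --- arranging the interplay between the commutator formula and the Weyl-group action so that it provably terminates in polynomial time at the unique normal form. That analysis is exactly what \cite{CMT04} carries out, so here it suffices to cite them; the novel content is just the encoding-length accounting and the $O(d)$-overhead reduction to the adjoint case.
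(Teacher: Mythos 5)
The paper does not prove \Cref{thm:computing} itself; it imports the result directly from Cohen--Murray--Taylor~\cite{CMT04} (especially Sec.~8.1), adding only the parenthetical remark about the adjoint quotient. Your proposal correctly reconstructs the content of that citation: the Bruhat normal form $g = u\,h\,\dot{w}\,v$, the bit-length accounting via \Cref{prop:steinberg17} and \Cref{thm:fields}, the normalization procedure based on the commutator formula and torus/Weyl conjugation rules, and the $O(d)$-overhead passage to $\Gad{\Phi}{\F}$ via the explicitly listable center. You are also appropriately careful to flag that the termination/correctness analysis of the normalization algorithm is the genuinely nontrivial part, and that this is exactly what~\cite{CMT04} supplies --- which is the honest reading, since the paper's ``proof'' is just that citation. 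In short, your argument matches the paper's approach; it simply unpacks the citation rather than treating it as a black box.
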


\section{The Construction}\label{sec:construct}
For the rest of the paper we fix a field $\F$ of size $p^m$ where $p > 3$, an irreducible root system $\Phi$ of rank at least $2$, and a set of simple roots $\Pi = \{\alpha_1, \dots, \alpha_d\} \subset \Phi$. With this in mind, $x_\alpha(t)$ refers to the corresponding root element of $\Esc{\Phi}{\F}$.

\begin{definition}
    For $S \subseteq \Phi$ and $d \in \N$, let $X_{S,d} = \langle x_\alpha(t) : \alpha \in S, t \in \F, \deg(t) \le d \rangle$. For shorthands we write $X_S = X_{S,1}$ and also $X_{\alpha,d} = X_{\{\alpha\},d}$.
\end{definition}

\begin{definition}\label{def:rootchoice}
    Recalling $\Pi = \{\alpha_1, \dots, \alpha_d\}$, we define  $\calS$ to be the following particular set of roots:
     \begin{equation}   \label{eqn:ourS}
        \calS = \Pi \cup \left\{-(\alpha_1 + \cdots + \alpha_d)\right\}.
     \end{equation}
     (The last of these is a root by by \Cref{fact:simplerootsum}.)
\end{definition}

\begin{remark}  \label{rem:matr}
    Since $\Pi$ is a basis, it follows that every subset of $\calS$ of cardinality~$d$ is linearly independent.
\end{remark}

\begin{definition}
    For each $\alpha \in \calS$, we introduce the following subgroup of $\Gu{\Phi}{\F}$:
    \[
        H_\alpha= X_{\calS \setminus \{ \alpha\}}.
    \]
\end{definition}

Finally, we can introduce our coset complex:
\begin{definition}
    $\displaystyle \complex = \complex_m := \CC{\Gu{\Phi}{\F}}{(H_\alpha)_{\alpha \in \calS}}.$
\end{definition}

\begin{theorem}\label{thm:main}
    For $d = \rank(\Phi)$, it holds that $\complex$ is a $d$-dimensional pure simplicial complex, where:
\begin{enumerate}
\item $\abs{\complex(0)} = p^{\Theta(m)}$, where the constant hidden by $\Theta(\cdot)$ depends only on~$\Phi$; moreover, the family that arises as $m \to \infty$ is strongly explicit.
\rnote{would be good to quantify.  For example, I believe that when you increment $m$ by $+1$, the size of $\complex(0)$ grows by $p^{\poly(d)}$, where the polynomial depends on~$\Phi$.}
\item Every vertex participates in at most $\Delta = \Delta(\Phi, p) = p^{\Theta(1)}$ maximal faces,  where the $\Theta(\cdot)$ constant (independent of~$m$) depends only on~$\Phi$ (indeed, it is $\Theta(d^2)$).
\item If $p > 3$,\footnote{Recall \Cref{foot:5}.} then $\Link_\sigma(\complex)$ is connected for all $\sigma \in \complex(i)$, $i \leq d-2$.
\item If $\Phi \neq G_2$ and $p > 2$, then for all $\sigma \in \complex(d-2)$ it holds that $K_\sigma$ is a $p^2$-regular bipartite graph with $\lambda_2(K_\sigma) \leq \sqrt{2/p}$.
\item $\Esc{\Phi}{\F}$ acts simply transitively on the maximal faces of $\complex$ (and this is also true if one constructs $\complex$ from $\Gad{\Phi}{\F}$ rather than $\Esc{\Phi}{\F}$).
\end{enumerate}
\end{theorem}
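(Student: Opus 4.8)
The plan is to establish the five items essentially independently, leaning on the combinatorial facts about coset complexes (Facts~\ref{fact:connected}, \ref{fact:link}, \ref{fact:type-preserving}) and the structure theory of Chevalley groups assembled in \Cref{sec:prelim}. Items~1 and~2 are really counting statements. For item~1, I would use \Cref{prop:steinberg17}: after choosing a set of simple roots $\Pi'$ with $\calS \setminus \{\alpha\} \subseteq (\Pi')^+$ (which exists by \Cref{fact:simplechoice}, using \Cref{rem:matr}), the set $S$ of roots expressible as nonnegative combinations of $\calS \setminus \{\alpha\}$ satisfies conditions (i) and (ii) of that proposition, so $|H_\alpha|$ is exactly $\prod_{\beta} |X_{\{\beta\},\deg}|$ over an appropriate multiset of degrees, giving $|H_\alpha| = p^{\Theta(m)}$ with a $\Phi$-dependent constant; then $|\complex(0)| = \sum_\alpha |\Esc{\Phi}{\F}|/|H_\alpha| = p^{\Theta(m)}$ by \Cref{thm:order}. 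Strong explicitness follows from \Cref{thm:computing}: one can enumerate coset representatives and test face membership (i.e., nonempty intersection of cosets) in $\poly(m) = \polylog(n)$ time. For item~2, the number of maximal faces through a fixed vertex $gH_\alpha$ equals the number of maximal faces through $H_\alpha$ (by transitivity of the $G$-action), which is $|H_\alpha \cdot (\text{chamber}) | = |H_\alpha| / |H_{[d+1]}|$; bounding the degrees of the polynomial entries shows each vertex sees at most $p^{\Theta(d^2)}$ chambers. Here one must also verify $\complex$ is pure and $(d+1)$-partite, which is immediate from the definition of coset complex.

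For item~3, I would apply \Cref{fact:link}: the link of a face of type $T \neq \emptyset$ is $\CC{H_T}{(H_{T\cup\{i\}} : i \notin T)}$, and by \Cref{fact:connected} this is connected iff $\langle H_{T \cup \{i\}} : i \notin T\rangle = H_T$. So connectivity of all links of $\sigma \in \complex(i)$, $i \le d-2$, reduces to showing that for every $\emptyset \neq T \subsetneq \calS$ with $|\calS \setminus T| \ge 2$, the groups $H_{T \cup \{\alpha\}} = X_{\calS \setminus (T \cup \{\alpha\})}$, as $\alpha$ ranges over $\calS \setminus T$, generate $H_T = X_{\calS \setminus T}$. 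Writing $\Psi = \calS \setminus T$ (a set of $\ge 2$ roots), this says $\langle X_{\Psi \setminus \{\alpha\}} : \alpha \in \Psi\rangle = X_\Psi$. Each generating subgroup on the left contains every root subgroup $X_{\beta}$ for $\beta \in \Psi$, except possibly $X_\alpha$ — but since $|\Psi| \ge 2$, every $\beta \in \Psi$ lies in $\Psi \setminus \{\alpha\}$ for some $\alpha \neq \beta$, so the left side contains $X_\beta$ for all $\beta \in \Psi$, and hence all of $X_\Psi$. Wait — one must be careful that $X_\Psi$ is generated by the degree-$\le 1$ root elements $x_\beta(t)$ for $\beta \in \Psi$ together with higher-degree elements $x_\gamma(t)$ for $\gamma$ a nonnegative combination of $\Psi$; the higher-degree generators for composite roots $\gamma = \sum n_i \beta_i$ arise via the Chevalley commutator formula (\Cref{fact:chevcom}) and \Cref{fact:prefixsum}, which lets one build $x_\gamma(\text{monomial})$ from commutators of the $x_{\beta_i}$'s, keeping each prefix-sum a genuine root. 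This is where the hypothesis $p > 3$ enters: the structure constants lie in $\{\pm 1, \pm 2, \pm 3\}$, and to extract $x_\gamma(t)$ with arbitrary $t$ of the right degree from the commutator product one needs these constants invertible in $\F$. I expect this generation argument — tracking exactly which $x_\gamma(t)$ one can produce, and with which degrees — to be the main obstacle; it is presumably handled by \Cref{cor:connectedlinks} referenced in the outline.

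For item~4, by \Cref{fact:link} and the fact that links depend only on type (together with \Cref{rem:2rootspan}, which pins down the rank-2 sub-root-system spanned by any two roots of $\calS \setminus \sigma$), every $K_\sigma$ for $\sigma \in \complex(d-2)$ is (the $1$-skeleton of) a $2$-dimensional coset complex built from a rank-2 root system that is $A_2$, $B_2$, or $C_2$ — never $G_2$, since $\Phi \neq G_2$ and by \Cref{rem:2rootspan} a $G_2$ subsystem forces $\Phi \cong G_2$. So it suffices to analyze the three explicit $2$-dimensional complexes (the $\SL_3$ and $\SP_4$ examples of \Cref{sec:eg}, plus one more), showing each $K_\sigma$ is a $p^2$-regular bipartite graph; bipartiteness comes from the $(d+1)$-partite structure restricted to the two relevant parts, and $p^2$-regularity from counting, via \Cref{prop:steinberg17}, the cosets of one vertex subgroup meeting a fixed coset of the other. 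The bound $\lambda_2(K_\sigma) \le \sqrt{2/p}$ is the spectral heart of the paper: as announced, one shows the square $K_\sigma^2$ (or a suitable bipartite-to-one-side reduction) is a Cayley graph of an \emph{abelian} group — this is where $\Phi \neq G_2$ is essential, since the relevant unipotent group has nilpotency class $\le 2$ so its commutator subgroup is central and the ``squared link'' collapses to abelian — expresses the eigenvalues as character sums over $\F$ indexed by the root data, and bounds the nontrivial characters by the Schwartz--Zippel lemma (the character sum is $\sum_t \psi(\text{poly}(t))$ over $t$ in a degree-bounded subspace of $\F$, whose magnitude is controlled by the polynomial's degree); squaring the graph squares the eigenvalue bound, yielding $\sqrt{2/p}$. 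I would defer this to \Cref{sec:explink} and just cite it here. Finally, item~5 is \Cref{fact:type-preserving}: the action is transitive on maximal faces always, and simply transitive iff $\bigcap_{\alpha \in \calS} H_\alpha = \{1\}$; but $\bigcap_\alpha H_\alpha = \bigcap_\alpha X_{\calS \setminus \{\alpha\}} \subseteq X_\emptyset = \{1\}$, or more carefully, it is contained in $U^+ \cap U^-$ for a suitable choice of simple roots via \Cref{prop:uniint} (picking $\Pi'$ so that $\Pi \subseteq (\Pi')^+$ while the negative root $-(\alpha_1+\cdots+\alpha_d)$ lies in $(\Pi')^-$ after re-choosing, or directly observing the intersection of all the $X_{\calS\setminus\{\alpha\}}$ is trivial since no single root element survives in all of them). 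The parenthetical claim about $\Gad{\Phi}{\F}$ then follows from \Cref{rem:get-to-adjoint} together with \Cref{obs:centerint}, which checks $Z \cap H_\alpha = \{1\}$ for all $\alpha$.
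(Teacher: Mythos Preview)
Your overall architecture matches the paper's, but item~3 has a substantial gap. You restrict to types $T \neq \emptyset$, yet the statement includes $i = -1$ (the empty face), whose link is $\complex$ itself; connectedness here is the assertion $X_\calS = \Esc{\Phi}{\F}$, which is \Cref{thm:rootgen} and is the hardest piece of the construction. One must show that degree-$\le 1$ root elements for the $d+1$ roots in $\calS$ generate \emph{all} $x_\gamma(t)$ for every $\gamma \in \Phi$ and every $t \in \F$ of \emph{arbitrary} degree --- the paper does this via a rank-$2$ case analysis (\Cref{rk2gen}) combined with \Cref{fact:prefixsum}, and then a degree-doubling bootstrap using \Cref{fact:rootdecomp} and \Cref{lem:powerspan}. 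Your ``Wait'' paragraph touches on commutators producing higher-degree elements but never addresses how to reach all of~$\F$. Separately, you freely write $H_T = X_{\calS \setminus T}$; but $H_T$ is by definition the \emph{intersection} $\bigcap_{\alpha \in T} X_{\calS \setminus \{\alpha\}}$, and identifying it with $X_{\calS \setminus T}$ is \Cref{subint}, which in turn rests on the graded normal form of \Cref{substruct}. (A minor slip in item~1: you write $|H_\alpha| = p^{\Theta(m)}$, but it is $p^{\Theta(1)}$ independent of~$m$ --- this is \Cref{subsize}, and is precisely what makes item~2 go through.)

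Your item~4 explanation also contains a conceptual error. You claim the squared link is abelian because ``the relevant unipotent group has nilpotency class $\le 2$,'' but in the $B_2$ case the group $X_{\{\alpha,\beta\}}$ has nilpotency class~$3$: with $\alpha$ short one has $[x_{\alpha+\beta}(\cdot), x_\alpha(\cdot)] = x_{2\alpha+\beta}(\cdot) \neq 1$, so the commutator subgroup is not central. The actual mechanism (Case~3 of \Cref{thm:linkexpand}) is more delicate: one must choose the \emph{correct side} of the bipartition. On the $X_{\{\alpha,\beta\}}/X_\alpha$ side (with $\alpha$ short), the coset representatives $x_\beta(\cdot)\, x_{\alpha+\beta}(\cdot)\, x_{2\alpha+\beta}(\cdot)$ pairwise commute because no pairwise sum of $\beta$, $\alpha+\beta$, $2\alpha+\beta$ is a root --- the paper explicitly remarks that the \emph{other} side would not have been abelian. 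This side-dependent commutativity, not a global nilpotency bound, is what makes the character-sum approach work.
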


By \Cref{thm:trickle}, we conclude our final goal:
\begin{corollary}
    Fixing $\Phi \neq G_2$ of rank $d \geq 2$, $p>3$ prime, and taking $m \to \infty$, the sequence $(\complex_m)$ forms a strongly explicit $d$-dimensional, $\Delta$-bounded degree ($\Delta = p^{\Theta(d^2)}$), $\lambda$-spectral HDX family, where
	\[\lambda \le \frac{1}{\sqrt{p/2} - d + 1}.\]
    (Hence for large~$p$, we have $\lambda \sim 1/\Delta^{\Theta(1/d^2)}$.)
    Moreover, the universal Chevalley group $\Esc{\Phi}{\F}$ acts simply transitively on $\complex_m$'s maximal faces.
\end{corollary}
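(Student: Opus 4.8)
The statement is essentially immediate once \Cref{thm:main} is in hand: the plan is to feed its five parts into the trickling down \Cref{thm:trickle} and then repackage the conclusion. First I would dispose of the degenerate range. The asserted bound $\lambda \le \frac{1}{\sqrt{p/2}-d+1}$ is $\ge 1$ (indeed negative or undefined) unless $\sqrt{p/2}-d+1 > 1$, i.e.\ $p > 2d^2$; and a ``$\lambda$-spectral HDX'' with $\lambda \ge 1$ asserts nothing, since random-walk eigenvalues never exceed $1$. So I would assume $p > 2d^2$ and set $\gamma := \sqrt{2/p}$, noting that then $\gamma < \sqrt{2/(2d^2)} = 1/d$, which is exactly the quantitative hypothesis of \Cref{thm:trickle}.

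The remaining hypotheses of \Cref{thm:trickle} are handed to us directly by \Cref{thm:main}: since $p > 3$, part~(3) gives that $\Link_\sigma(\complex_m)$ is connected for every face $\sigma$ of dimension $\le d-2$; and since $\Phi \neq G_2$ and $p > 3 > 2$, part~(4) gives $\lambda_2(K_\sigma) \le \sqrt{2/p} = \gamma$ for every $\sigma \in \complex_m(d-2)$. Applying \Cref{thm:trickle} with this $\gamma$ then yields that each $\complex_m$ is a $\lambda$-spectral HDX with
\[
  \lambda \;\le\; \frac{\gamma}{1-(d-1)\gamma} \;=\; \frac{1}{1/\gamma-(d-1)} \;=\; \frac{1}{\sqrt{p/2}-d+1},
\]
which is precisely the claimed bound.

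To upgrade this to an HDX \emph{family}, I would invoke parts~(1)--(2) of \Cref{thm:main}: each $\complex_m$ is pure of dimension $d = \rank(\Phi)$ with $\abs{\complex_m(0)} = p^{\Theta(m)}$, and every vertex lies in at most $\Delta = p^{\Theta(d^2)}$ maximal faces, with the $\Theta(\cdot)$ constants depending only on~$\Phi$; part~(1) also records strong explicitness, the required $\polylog$-time description algorithm coming from the efficient field arithmetic and Bruhat-normal-form computations in $\Esc{\Phi}{\F}$ collected in \Cref{sec:computation}. Hence $(\complex_m)_m$ is a strongly explicit $d$-dimensional, $\Delta$-bounded degree ($\Delta = p^{\Theta(d^2)}$), $\lambda$-spectral HDX family. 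The simply transitive action of $\Esc{\Phi}{\F}$ on maximal faces is part~(5) of \Cref{thm:main}; internally this rests on the defining feature of $\calS$ from \Cref{def:rootchoice} (every $d$-element subset is linearly independent, \Cref{rem:matr}) together with the unipotent-subgroup structure in \Cref{prop:steinberg17} and \Cref{prop:uniint}, which force $\bigcap_{\alpha \in \calS} H_\alpha = \{1\}$ and hence simple transitivity via \Cref{fact:type-preserving}.

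Finally, for the parenthetical asymptotic claim: with $\Phi$ (hence $d$) fixed and $p \to \infty$, the relation $\Delta = p^{\Theta(d^2)}$ gives $p = \Delta^{\Theta(1/d^2)}$, and then $\lambda \le \frac{1}{\sqrt{p/2}-d+1} \sim \sqrt{2/p} = \Delta^{-\Theta(1/d^2)}$, i.e.\ $\lambda \sim 1/\Delta^{\Theta(1/d^2)}$. I do not anticipate any genuine obstacle in this deduction --- all the substance sits in \Cref{thm:main}, above all the character-sum/Schwartz--Zippel estimate behind part~(4) and the connectivity computation behind part~(3). The one point that needs attention is the bookkeeping done at the outset, namely verifying $\gamma \le 1/d$ so that \Cref{thm:trickle} applies, which coincides exactly with the range of $p$ in which the asserted bound on $\lambda$ is meaningful.
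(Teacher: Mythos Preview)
Your proposal is correct and follows essentially the same approach as the paper, which simply says ``By \Cref{thm:trickle}'' and leaves the reader to unpack the five parts of \Cref{thm:main}; you have carried out that unpacking correctly, including the check that $\gamma = \sqrt{2/p} \le 1/d$ exactly in the range where the stated bound on~$\lambda$ is nontrivial.
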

We add that the results all remain true if uses the (simple) adjoint Chevalley groups $\Gad{\Phi}{\F}$ in place of~$\Esc{\Phi}{\F}$.

\subsection{Global connectivity of the coset complex}

The main goal of this section is to show that the subgroups $H_\alpha$ for $\alpha \in \calS$ generate $\Esc{\Phi}{\F}$. By \Cref{fact:connected}, this is necessary to ensure that the 1-skeleton of $\complex$ is connected.

\begin{theorem}\label{thm:rootgen}
    Let $S \subseteq \Phi$ be a subset of $\rank(\Phi)+1$ roots where $S^+ = \Phi$. Then $X_S = \Gu{\Phi}{\F}$.
\end{theorem}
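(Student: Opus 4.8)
The plan is to show that the subgroup $X_S = \la x_\alpha(t) : \alpha \in S, t \in \F \ra$ contains every root subgroup $\la x_\gamma(t) : t \in \F\ra$ for $\gamma \in \Phi$; since these generate $\Esc{\Phi}{\F}$ (they occur among the Steinberg generators, and the $h_\alpha(t)$, $n_\alpha(t)$ are by definition words in the $x_\alpha$'s), that suffices. The hypothesis $S^+ = \Phi$ means every positive root (with respect to the ordering implicitly determined by $S$, which makes sense after invoking \Cref{fact:simplechoice}) is a nonnegative-integer combination of elements of $S$. So first I would use \Cref{fact:simplechoice} to fix a set of simple roots $\Pi$ with $S \subseteq \Pi^+$; then $S^+ = \Phi$ forces $\Phi^+ \subseteq S^+ \subseteq \Phi^+$, i.e.\ $S^+ = \Phi^+$, and in particular $\Pi \subseteq S^+$. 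Wait — one must be careful: $S^+ = \Phi$ as stated includes negative roots too, which can only happen if $S$ itself contains roots of both signs relative to any ordering; but $|S| = d+1$ and $S$ spans, so $S$ is $\Pi$ together with one extra root, and the extra root must be negative (this is exactly the shape of $\calS$ in \Cref{def:rootchoice}). I would handle this by choosing the ordering so that $\Pi \subseteq S$ are positive and the remaining element of $S$ is the (negative) root $-\theta$ for some $\theta \in \Phi^+$.

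The engine of the argument is \Cref{fact:prefixsum}: any $\gamma \in \Phi^+$ that is a nonnegative combination of roots in $S$ can be written $\gamma = \sum_{j=1}^m \alpha_{i_j}$ with every prefix-sum a root. Combined with the Chevalley commutator formula (\Cref{fact:chevcom}), this lets me build $x_\gamma(\cdot)$ inside $X_S$ by induction on height: if $\gamma = \delta + \alpha_i$ with $\delta, \alpha_i \in \Phi$, $\alpha_i \in S$, and (inductively) $X_\delta \subseteq X_S$, then the commutator $[x_\delta(t), x_{\alpha_i}(u)]$ equals $x_\gamma(\Const{\delta}{\alpha_i}{1}{1} tu) \cdot (\text{higher-height terms})$, and the higher terms — being $x_{i\delta + j\alpha_i}$ with $i+j > 2$, hence of strictly larger height than $\gamma$ — can be cleared by downward induction, or more cleanly one inducts starting from the top. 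Since $p > 3$, the structure constants $\Const{\cdot}{\cdot}{1}{1} \in \{\pm1,\pm2,\pm3\}$ are invertible in $\F$, so we recover all of $X_\gamma$. This gives $X_{\Phi^+} \subseteq X_S$, i.e.\ $U^+ \subseteq X_S$, and in particular $x_\theta(t) \in X_S$ for all $t$.

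Now to get the negative roots: $X_S$ also contains $x_{-\theta}(t)$ for all $t$ (the extra generator). From $x_\theta(t)$ and $x_{-\theta}(-t^{-1})$ I form $n_\theta(t) = x_\theta(t)x_{-\theta}(-t^{-1})x_\theta(t) \in X_S$, which represents a Weyl-group reflection $w_\theta$. Conjugation by $n_\theta(\pm1)$ sends root subgroups to root subgroups: $n_\alpha(1) x_\beta(t) n_\alpha(1)^{-1} = x_{w_\alpha(\beta)}(\pm t)$ (a standard relation in the Steinberg presentation). Since the Weyl group is generated by the simple reflections $w_{\alpha_i}$, $\alpha_i \in \Pi \subseteq S$, and since we already have $n_{\alpha_i}(1) \in X_S$ for each simple $\alpha_i$ (built from $x_{\alpha_i}$ and $x_{-\alpha_i}$ — but we only directly have $x_{-\theta}$, not $x_{-\alpha_i}$!), I need one more step. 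The cleanest route: the single reflection $w_\theta$ already moves the positive system nontrivially, so $n_\theta(1) U^+ n_\theta(1)^{-1}$ is a unipotent subgroup for a different Borel, and together with $U^+$ it generates $\Esc{\Phi}{\F}$ by the Bruhat decomposition / the fact that $\Esc{\Phi}{\F}$ is generated by any two opposite Borels — or more elementarily, $\la U^+, n_\theta(1)\ra$ contains $x_{-\theta}$ hence all $n_{\alpha_i}(1)$ can be reached by conjugating $x_{\alpha_i}$, $x_{-w(\alpha_i)}$ appropriately once we observe the orbit of $\theta$ under $\la U^+, n_\theta\ra$-conjugation is all of $\Phi$ when $\Phi$ is irreducible. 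I would make this precise by noting the set of $\gamma$ with $x_\gamma(\cdot) \in X_S$ is closed under the reflections $w_\alpha$ for which $n_\alpha \in X_S$, is nonempty ($\theta$ and all of $\Phi^+$ are in it), and pull in negative simple roots one at a time.

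The main obstacle is precisely this last transition from "all positive root subgroups plus one negative root subgroup" to "all of $\Esc{\Phi}{\F}$": verifying that the Weyl reflection $n_\theta$ together with $U^+$ suffices to generate, i.e.\ that we can manufacture $x_{-\alpha_i}$ for every simple $\alpha_i$. I expect the right tool is the standard fact (from \cite{Ste16}) that $\Esc{\Phi}{\F}$ is generated by $U^+$ together with $n_\alpha(1)$ for a single root $\alpha$ whose reflection, with the simple reflections, generates $W$ — or, failing a clean citation, an explicit induction: writing $\theta = \sum n_i\alpha_i$ and using $n_\theta$ to reflect, one checks the $\la X_S\ra$-closed set of roots must be $W$-invariant once it contains $\theta^\perp$-worth of positive roots, and $W$-invariance plus containing $\Phi^+$ gives all of $\Phi$. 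Everything else — the height induction, clearing higher terms, invertibility of structure constants — is routine given \Cref{fact:prefixsum}, \Cref{fact:chevcom}, and $p>3$.
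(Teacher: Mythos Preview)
Your proposal has a genuine gap stemming from a misreading of the definition of~$X_S$. In the paper, $X_S$ is shorthand for $X_{S,1} = \langle x_\alpha(t) : \alpha \in S,\ t \in \F,\ \deg(t) \le 1\rangle$: the generators are restricted to field elements of degree at most~$1$ (viewing $\F = \F_p[x]/(f)$). Your opening line drops this constraint, and the entire proposal proceeds as though the full root subgroups $\langle x_\alpha(t) : t \in \F\rangle$ for $\alpha \in S$ were already available. With only degree-$\le 1$ generators, your height induction via commutators produces $x_\gamma(t)$ only for $\deg(t)$ bounded by a constant depending on~$\Phi$ (roughly $\hgt(\gamma)$); since $\F$ contains elements of degree up to $m-1$ with $m \to \infty$, you never reach the full root subgroup this way. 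The Weyl-group conjugation in your second half does not help either: conjugation by $n_\theta(1)$ sends $x_\beta(t)$ to $x_{w_\theta(\beta)}(\pm t)$, preserving $\deg(t)$.

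The paper's proof addresses exactly this degree issue, and its overall structure is also simpler than your positive/negative split. First, the hypothesis $S^+ = \Phi$ already says that \emph{every} root --- positive or negative --- is a nonnegative-integer combination of elements of~$S$, so \Cref{fact:prefixsum} together with \Cref{rk2gen} gives $X_{\gamma,1} \subseteq X_S$ for all $\gamma \in \Phi$ directly, with no Weyl-group detour. Second, and this is the crucial missing ingredient, the proof runs a \emph{doubling} induction: assuming $X_{\alpha,2^i} \subseteq X_S$ for all $\alpha \in \Phi$, write any root $\gamma = \alpha + \beta$ with $\alpha,\beta \in \Phi$ (possible by \Cref{fact:rootdecomp} since $\rank(\Phi) \ge 2$) and apply \Cref{rk2gen} to get $X_{\gamma,2^{i+1}} \subseteq \langle X_{\alpha,2^i}, X_{\beta,2^i}\rangle \subseteq X_S$. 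Iterating until $2^i \ge m$ yields every full root subgroup, hence all of~$\Esc{\Phi}{\F}$.
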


The particular set of roots $\calS$ we selected in \Cref{eqn:ourS} has the desired property, as the following shows:
\begin{proposition}\label{prop:sspan}
    For $\calS$ as in \textnormal{\Cref{eqn:ourS}} we have $\calS^+ = \Phi$.
\end{proposition}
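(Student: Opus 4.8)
The goal is to show that every root $\gamma \in \Phi$ can be written as a nonnegative integer combination of the roots in $\calS = \Pi \cup \{-\rho\}$, where $\rho = \alpha_1 + \cdots + \alpha_d$ and $\Pi = \{\alpha_1,\dots,\alpha_d\}$ is our fixed simple set. Since $\Pi$ is simple, every root $\gamma$ is \emph{either} a nonnegative combination of $\Pi$ (if $\gamma$ is positive) \emph{or} a nonpositive combination of $\Pi$ (if $\gamma$ is negative). The positive roots are therefore trivially in $\calS^+$, and the whole content is in handling the negative roots. The plan is: given a negative root $\gamma$, write $-\gamma = \sum_i n_i \alpha_i$ with all $n_i \in \N$, and observe that
\[
    \gamma = -\sum_i n_i \alpha_i = -\rho - \sum_i (n_i - 1)\alpha_i + (\text{something}),
\]
so the natural move is $\gamma = (-\rho) + \sum_i (1 - n_i)\alpha_i$. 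This expresses $\gamma$ in terms of $\calS$, but the coefficients $1 - n_i$ are only guaranteed nonnegative when $n_i \le 1$, i.e., when $-\gamma$ has all its simple-root coefficients equal to $0$ or $1$. So the main obstacle is exactly the negative roots $\gamma$ whose negatives have some coefficient $n_i \ge 2$ (these exist in $B_d, C_d, D_d, F_4, E_6, E_7, E_8$ — the non-simply-laced or higher-rank cases — so this case genuinely arises and must be dealt with).

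To handle the general negative root, I would argue as follows. Fix a negative root $\gamma$ and set $\delta = -\gamma - \rho$, a vector in the root lattice; I claim $\delta$ lies in $\Phi \cup \{0\}$ or can be further decomposed. Better: use induction on the height $\hgt_\Pi(-\gamma)$. If $-\gamma = \rho$ then $\gamma = -\rho \in \calS \subseteq \calS^+$, done. Otherwise, I want to peel off a simple root: by Fact \ref{fact:prefixsum} applied to the expression $-\gamma = \sum_i n_i \alpha_i$, we can order the summands so that every prefix sum is a root; in particular there is some $\alpha_{i_0}$ with $n_{i_0} > 0$ and $-\gamma - \alpha_{i_0} \in \Phi \cup \{0\}$, hence (being a positive combination that is nonzero, unless $-\gamma = \alpha_{i_0}$) a positive root of strictly smaller height. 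Now $\gamma + \alpha_{i_0}$ is a negative root (or zero) of smaller "negative height"; by induction $\gamma + \alpha_{i_0} \in \calS^+$ (the zero case: $\gamma = -\alpha_{i_0}$, and then I need $-\alpha_{i_0} \in \calS^+$ separately — see below). Then $\gamma = (\gamma + \alpha_{i_0}) + (-\alpha_{i_0})$, and if I can show each negated simple root $-\alpha_{i} \in \calS^+$, I am done by adding nonnegative combinations.

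So the crux reduces to: \textbf{every negated simple root $-\alpha_j$ lies in $\calS^+$.} Here I would use that $-\rho \in \calS$ and write $-\alpha_j = (-\rho) + \sum_{i \ne j} \alpha_i$, which is manifestly a nonnegative combination of elements of $\calS$ — \emph{provided} this is literally a statement about lattice vectors and not roots; but $\calS^+ = \Phi \cap \{\text{nonneg combos of } \calS\}$ only requires $-\alpha_j$ itself to be a root (it is) and to be such a combination (it is, by the displayed identity, since each coefficient is $1 \ge 0$). Thus $-\alpha_j \in \calS^+$ for all $j$, which also cleans up the base case $\gamma = -\alpha_{i_0}$ in the induction above. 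Assembling: positive roots are in $\Pi^+ \subseteq \calS^+$; negated simple roots are in $\calS^+$ by the identity $-\alpha_j = -\rho + \sum_{i\ne j}\alpha_i$; and a general negative root is, by the height induction using Fact \ref{fact:prefixsum}, a sum of a (smaller) negative root and some $-\alpha_{i_0}$, hence ultimately a nonnegative combination of negated simple roots and possibly $-\rho$ — all of which are in $\calS^+$ — so it is in $\calS^+$ too. The one subtlety to double-check is that $\calS^+$ is closed under the operation "add an element of $\calS$ and land back in $\Phi$", which is immediate from the definition of $\Psi^+$, so the induction goes through cleanly; the non-simply-laced cases are thus absorbed with no extra work beyond the single identity for $-\alpha_j$.
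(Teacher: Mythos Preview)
Your argument is correct, but it is considerably more elaborate than what the paper does. You reduce to the negated simple roots via a height induction invoking Fact~\ref{fact:prefixsum}, and then handle each $-\alpha_j$ by the identity $-\alpha_j = -\rho + \sum_{i\neq j}\alpha_i$. This works, but the detour is unnecessary: you already noticed that $\gamma = (-\rho) + \sum_i (1-n_i)\alpha_i$ fails only because some coefficients $1-n_i$ may be negative. The paper simply fixes this by using \emph{several} copies of $-\rho$ instead of one: for a negative root $\gamma = -\sum_i n_i \alpha_i$, pick any integer $r \ge \max_i n_i$ and write
\[
    \gamma \;=\; r\cdot\bigl(-(\alpha_1+\cdots+\alpha_d)\bigr) \;+\; \sum_{i=1}^d (r-n_i)\,\alpha_i,
\]
which is a nonnegative integer combination of $\calS$ on the nose. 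Your induction, when unwound, produces exactly such an expression (each application of the identity for $-\alpha_{i_0}$ contributes one more copy of $-\rho$), so the two proofs are morally the same; the paper's version just skips the recursion and the appeal to Fact~\ref{fact:prefixsum}.
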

\begin{proof}
    We  have $\calS^+ \supseteq \Pi^+$, and so $\calS^+$ certainly contains all positive roots in~$\Phi$ (recall \Cref{def:simple}).
    It remains to show that $\calS^+$ contains each negative root~$\gamma \in \Phi$.
    Writing $\gamma = -n_1 \alpha_1 - \cdots -n_d \alpha_d$, it follows that we can reexpress it as
    \[
        \gamma = r (-(\alpha_1 + \cdots + \alpha_d)) + r_1 \alpha_1 + \cdots + r_d \alpha_d
    \]
    for a sufficiently large positive integer~$r$, and positive integers $r_1, \dots, r_d$.
    Thus indeed $\gamma \in \calS^+$.
\end{proof}
\begin{example}
    $A_d$ is the set of vectors $\{e_i -e_j, i \neq j\} \subset \R^d$. A set of simple roots is given by $ \Pi = \{e_i-e_{i+1} : i \in [d]\}$; in this case $-\sum_{\alpha \in \Pi} \alpha = e_d-e_1$. It is straightforward to check that $S = \{e_i-e_{i+1} : i \in [d]\}\cup \{e_d-e_1\} \subset A_d$ satisfies the hypothesis of \Cref{thm:rootgen}. This is the set of roots implicitly used in~\cite{KO18}.
\end{example}
\begin{remark}  \label{rem:alt}
    There are other choices of $S$ besides our $\calS$ from \Cref{eqn:ourS} that satisfy the condition of \Cref{thm:rootgen}.
    These can be used to obtain slightly different constructions.
    For example, referring to \Cref{fig:root-eg} one see that in~$B_2$ one can take $S = \{\alpha, \beta, -\beta-2\alpha\}$, or in~$G_2$ one can take $S = \{\alpha, \alpha+\beta, -2\alpha - \beta\}$.
\end{remark}

We will require the following (presumably known) fact:
\begin{lemma}\label{lem:powerspan}
    For $i,j,d_1,d_2 \in \N$ with $\mathrm{char}(\F) > \max(i,j)$, write $d = i d_1 + j d_2$. Then 
    \[
        \F[x]^{\leq d} = \mathrm{span}\{ f^i g^j : f \in \F[x]^{\leq d_1},\ g \in \F[x]^{\leq d_2}\}.
    \]
    where $\F[x]^{\leq k}$ represents the polynomials of degree at most~$k$.
\end{lemma}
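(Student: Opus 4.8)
The plan is to prove this by a dimension count combined with an explicit argument showing the spanning set has full rank. First I would observe that both sides of the claimed equality are subspaces of the $(d+1)$-dimensional space $\F[x]^{\leq d}$: the right-hand side is manifestly a subspace (it is a span), and every product $f^i g^j$ with $\deg f \le d_1$, $\deg g \le d_2$ has degree at most $id_1 + jd_2 = d$, so it is contained in $\F[x]^{\leq d}$. Thus it suffices to show the span on the right has dimension at least $d+1$, i.e.\ that the monomials $1, x, x^2, \dots, x^d$ all lie in it.

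The key step is to produce, for each target monomial $x^k$ with $0 \le k \le d$, an explicit finite linear combination of products $f^i g^j$ equal to it. A clean way to do this is via finite differences / a Vandermonde-type argument. Consider products of the special form $(x - a)^i (x - b)^j$ where $a, b$ range over field elements (or, if $\F$ is too small, over elements of a polynomial ring extension — but since we only need polynomials in one formal variable $x$ with coefficients in $\F$, we can instead work with the binomials $f_a(x) = x - a$ purely formally, substituting distinct values from $\F$ only after we have set up the linear system, or pass to $\F[x]$-linear combinations whose scalars are themselves built from a large enough auxiliary set). More robustly: fix $i+j$ distinct scalars, or use the hypothesis $\chr(\F) > \max(i,j)$ to guarantee the needed binomial coefficients $\binom{i}{\cdot}$ and $\binom{j}{\cdot}$ are invertible. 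The cleanest route I would take is this: it is enough to show each of the spaces $\F[x]^{\le d_1} \cdot (\text{fixed degree-exactly-}jd_2\text{ polynomial})$ behaves well; concretely, I would first reduce to showing $\mathrm{span}\{f^i : f \in \F[x]^{\le d_1}\} = \F[x]^{\le i d_1}$ (the case $j = 0$ or $d_2 = 0$), and then handle the general case by noting $f^i g^j = (fg')^i \cdot (\text{correction})$ type manipulations, or more simply by the substitution trick below.

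For the reduced claim $\mathrm{span}\{f^i : \deg f \le d_1\} = \F[x]^{\le i d_1}$: write $f = c_0 + c_1 x + \dots + c_{d_1} x^{d_1}$ and expand $f^i$ multinomially; by taking $f = (1 + tx^{d_1})$ and varying $t$ one recovers (via the invertibility of $\binom{i}{\ell}$ for $\ell \le i < \chr \F$) the monomials $x^{\ell d_1}$, and more generally by taking $f = (1 + t x^{a})$ for $a \le d_1$ one gets $x^{\ell a}$ for $\ell \le i$; since $\gcd$ considerations and the freedom $0 \le a \le d_1$, $0 \le \ell \le i$ let $\ell a$ hit every integer in $[0, i d_1]$, we get all monomials. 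One then multiplies by an analogous span in the $g$ variable. I expect the main obstacle to be handling the case where $\F$ is small (so one cannot simply plug in enough distinct field elements for a Vandermonde argument): the fix is to keep everything as formal polynomial identities with integer (hence prime-field, hence $\F$-) coefficients, using only the hypothesis $\chr(\F) > \max(i,j)$ to divide by the binomial coefficients that appear — no large supply of scalars is actually needed. Once that subtlety is dispatched, the rest is bookkeeping: combine the one-variable spanning results for $f^i$ and $g^j$, multiply the two spanning families, and check the product family spans $\F[x]^{\le d}$ by the degree additivity $id_1 + jd_2 = d$ together with the fact that products of spanning sets span the tensor-like product space, which here is all of $\F[x]^{\le d}$ by the monomial count.
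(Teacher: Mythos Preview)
Your overall strategy --- reduce to the one-factor claim $\mathrm{span}\{f^i : \deg f \le d_1\} = \F[x]^{\le i d_1}$, do the same for $g^j$, and then multiply the two spanning families --- is sound, and the final multiplication step is fine (bilinearity gives $\mathrm{span}(A)\cdot\mathrm{span}(B)=\mathrm{span}(AB)$, and monomials $x^a x^b$ with $a\le id_1$, $b\le jd_2$ clearly cover all of $\F[x]^{\le d}$). But your proof of the one-factor claim has a genuine gap. From $(1+tx^a)^i$ you extract, after a Vandermonde in~$t$, the monomials $x^{\ell a}$ for $0\le \ell\le i$ and $0\le a\le d_1$; you then assert that the set $\{\ell a\}$ covers every integer in $[0,i d_1]$. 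This is false already for $i=2$, $d_1=3$: the products $\ell a$ with $\ell\in\{0,1,2\}$, $a\in\{0,1,2,3\}$ are $\{0,1,2,3,4,6\}$, missing~$5$. So polynomials of the restricted shape $1+tx^a$ are not enough, and some further idea is needed to capture the ``mixed'' exponents.

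The paper fills exactly this hole with the higher-order polarization identity (equivalently, Ryser's permanent formula applied to a rank-one matrix): for indeterminates $x_{a_1},\dots,x_{a_i}$,
\[
    x_{a_1}\cdots x_{a_i} \;=\; \frac{1}{i!}\sum_{s\in\{0,1\}^i}(-1)^{i+|s|}\Bigl(\sum_{\ell=1}^i s_\ell\,x_{a_\ell}\Bigr)^i,
\]
valid since $\chr\F>i$. Substituting $x_c\mapsto x^c$ writes any monomial $x^{a_1+\cdots+a_i}$ (with each $a_\ell\le d_1$) as an $\F$-linear combination of $i$th powers $F^i$ with $\deg F\le d_1$; since every $e\le id_1$ decomposes as such a sum, this proves the one-factor claim directly. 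The paper then does the analogous thing for~$j$ and multiplies, exactly as you proposed. So your outline is the same as the paper's; the missing ingredient is polarization (or any other device that produces genuine \emph{sums} of distinct powers of~$x$ inside~$f$, not just a single monomial~$x^a$).
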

\begin{proof}
It suffices to establish that $x^e$ is in the span, for any $e \leq d$.
Express $e = a_1 + \cdots + a_i + b_1 + \cdots + b_j$, with each $a$ being a natural number at most~$d_1$ and each $b$ being a natural number at most~$d_2$.
Now note that the monomial
\begin{equation} \label{eqn:aaa}
    x_{a_1} x_{a_2} \cdots x_{a_i} x_{b_1} x_{b_2} \cdots x_{b_j}
\end{equation}
becomes equal to $x^e$ if each indeterminate $x_c$ is substituted with~$x^c$.
Next, we use the identity
\[
    x_{a_1} x_2 \cdots x_{a_i} = \frac{1}{i!} \sum_{s \in \{0,1\}^i } (-1)^{|s|+i} \left (\sum_{\ell=1}^i s_\ell x_{a_\ell} \right)^i,
\]
with the constant $\frac{1}{i!}$ being sensible in the field~$\F$ since $\mathrm{char}(\F) > i$.
(This is the ``higher order polarization identity'', or Ryser's formula applied to the matrix where every row is $\begin{bmatrix} x_{a_1} & x_{a_2} & \cdots & x_{a_i} \end{bmatrix}$.)
Multiplying this against the analogous identity with the $b$'s (and using $\mathrm{char}(\F) > j$), we get that~\eqref{eqn:aaa} can be expressed as a linear combination of multivariate polynomials $F^i G^j$, where each~$F$ is a linear combination of $x_c$'s with $c \leq d_1$ and each~$G$ is a linear combination of $x_c$'s with $c \leq d_2$.
Now substituting $x_c = x^c$ yields the desired univariate expression for~$x^e$.
%
%
%
\end{proof}

A key goal now is to establish the below \Cref{rk2gen}.  We remark that several times it will use \Cref{lem:powerspan}; in each application we will have ``$i$'' and ``$j$'' at most~$3$, less than $\mathrm{char}(\F) = p > 3$ as required.
\begin{lemma}\label{rk2gen}
    Fix roots $\beta \neq -\alpha$  and any $d_1,d_2 \in \N$. Then
    \[
        \langle X_{\alpha,d_1},X_{\beta,d_2} \rangle= \langle X_{i\alpha + j\beta,id_1+jd_2}  : i,j \in \N, i\alpha+j\beta \in \Phi \rangle.  
    \]
\end{lemma}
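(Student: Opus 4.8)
The plan is to prove the two inclusions separately, with the $\supseteq$ direction being the routine one and the $\subseteq$ direction requiring the real work. For $\supseteq$: each generator $x_{i\alpha+j\beta}(t)$ with $\deg(t) \le id_1 + jd_2$ must be shown to lie in $\langle X_{\alpha,d_1}, X_{\beta,d_2}\rangle$. The natural tool is the Chevalley commutator formula (\Cref{fact:chevcom}): commuting $x_\alpha(f)$ with $x_\beta(g)$ for $f$ of degree $\le d_1$ and $g$ of degree $\le d_2$ produces a product of terms $x_{i\alpha+j\beta}(C^{\alpha,\beta}_{ij} f^i g^j)$, ordered by increasing $i+j$. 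To extract a \emph{single} root element $x_\gamma(t)$ for $\gamma = i_0\alpha + j_0\beta$ with $t$ of degree up to $i_0 d_1 + j_0 d_2$, I would induct on $i+j$: the lowest term $i=j=1$ is immediately accessible, and then having already obtained all $X_{i\alpha+j\beta, id_1+jd_2}$ for smaller $i+j$, I can left-multiply the commutator by inverses of those lower terms to isolate the next layer. The key arithmetic input is \Cref{lem:powerspan}: the set $\{f^i g^j : \deg f \le d_1, \deg g \le d_2\}$ spans all of $\F[x]^{\le id_1+jd_2}$ (using $\chr \F = p > 3 \ge i,j$), so products $x_\gamma(C \cdot f^i g^j)$ and their inverses, multiplied together, generate all of $X_{\gamma, id_1+jd_2}$ — here one also uses that $C^{\alpha,\beta}_{i_0 j_0} \in \{\pm1,\pm2,\pm3\}$ is a unit mod $p$, so it can be divided out.

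For the $\subseteq$ direction I must show every generator $x_\alpha(f)$ ($\deg f \le d_1$) and $x_\beta(g)$ ($\deg g \le d_2$) lies in the right-hand group $R := \langle X_{i\alpha+j\beta, id_1+jd_2} : i,j \in \N, i\alpha+j\beta\in\Phi\rangle$. This is immediate, since $(i,j) = (1,0)$ and $(0,1)$ are included in the index set (with $i\alpha+j\beta = \alpha$ resp.\ $\beta$, which are roots), so $X_{\alpha,d_1}$ and $X_{\beta,d_2}$ are literally among the generators of $R$. Hence $\langle X_{\alpha,d_1}, X_{\beta,d_2}\rangle \subseteq R$ is trivial, and the only content is the reverse inclusion described above.

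The main obstacle is the inductive extraction in the $\supseteq$ direction: the Chevalley commutator formula gives a \emph{product} of root elements over all $(i,j)$ with $i\alpha+j\beta\in\Phi$, not a clean single element, and these products are ordered, so I have to carefully peel off layers. The subtlety is that when I commute $x_\alpha(f)$ with $x_\beta(g)$, I only directly control the coefficients $C f^i g^j$, which trace out the span $\F[x]^{\le id_1+jd_2}$ only \emph{after} taking linear combinations — i.e.\ after multiplying several such commutators and their translates within the abelian root subgroup $X_\gamma$. So the argument is: (1) by induction on $i+j$, assume $X_{i'\alpha+j'\beta, i'd_1+j'd_2} \subseteq \langle X_{\alpha,d_1},X_{\beta,d_2}\rangle$ for all $i'+j' < N$; (2) for each $\gamma = i\alpha+j\beta \in \Phi$ with $i+j = N$, use the commutator formula and multiply by inverses of the (already-available) lower-layer terms to land inside $X_{\gamma, id_1+jd_2}$ with value $C^{\alpha,\beta}_{ij} f^i g^j$ for arbitrary $f,g$; (3) since $C^{\alpha,\beta}_{ij}$ is invertible mod $p$ and, by \Cref{lem:powerspan}, $\{f^ig^j\}$ spans $\F[x]^{\le id_1+jd_2}$, the subgroup generated by all these elements (using that $x_\gamma(s)x_\gamma(s') = x_\gamma(s+s')$) is all of $X_{\gamma, id_1+jd_2}$. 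One should also handle the degenerate cases cleanly: if no root $i\alpha+j\beta$ with $i,j\ge 1$ exists then $[x_\alpha(f),x_\beta(g)] = 1$ and both sides reduce to $\langle X_{\alpha,d_1},X_{\beta,d_2}\rangle$, consistent with the claim. Finally, I would note the hypothesis $\beta \ne -\alpha$ is exactly what is needed for the commutator formula to apply.
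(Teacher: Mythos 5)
Your overall toolkit matches the paper's — commutator formula, \Cref{lem:powerspan} to reduce from arbitrary $w$ to linear combinations of monomials $t^iu^j$, the abelian structure of root subgroups to pass from one nonzero constant $c$ to all of $\F_p$, and the observation that the structure constants are units mod~$p$. But there is a genuine gap in the "peel off layers" step, and it sits right at the base case of your proposed induction. You claim that in $[x_\alpha(f),x_\beta(g)] = \prod_{i,j>0} x_{i\alpha+j\beta}(C^{\alpha,\beta}_{ij} f^i g^j)$ (ordered by increasing $i+j$), "the lowest term $i=j=1$ is immediately accessible." That is true only when $\alpha,\beta$ span an $A_2$ subsystem. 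In the $B_2$ case the commutator is $x_{\alpha+\beta}(\cdot)\,x_{2\alpha+\beta}(\cdot)$, and in the $G_2$ cases there are three or four factors; you cannot isolate the lowest factor without cancelling the \emph{higher} ones, which your bottom-up induction on $i+j$ has not yet provided. Left-multiplying by inverses of already-known lower-level elements only strips the head of the product, leaving the entire tail — a product, not a single root element — so it does not isolate "the next layer" unless that layer is the very last one.

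The fix (which is what the paper does) is to run the extraction \emph{top-down}. Iterating the commutator — e.g.\ passing from $[x_\alpha(t),x_\beta(u)]$ to $[[x_\alpha(t),x_\beta(u)],x_\alpha(t)]$ and so on — kills the lower-height factors and isolates the factor of maximal $i+j$; once that is in hand, one descends, multiplying by inverses of the now-available high-level elements to extract the lower ones. Because only $A_2$, $B_2$, and $G_2$ can arise as the rank-$2$ subsystem spanned by $\alpha,\beta$, the paper simply does a four-way case analysis on the set $R' = \{i\alpha+j\beta\in\Phi : i,j\ge 1\}$, in each case working from the top root down and keeping track of which factors commute so they may be deleted from or moved past intermediate commutators. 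Your sketch also has a small slip in the opening sentence (it says $\supseteq$ is the routine direction and $\subseteq$ the hard one, which is reversed from what you then correctly carry out), and the application of \Cref{lem:powerspan} is cleanest if you state it as a reduction up front — "it suffices to handle $w = ct^iu^j$ for one nonzero $c\in\F_p$" — before doing any commutator manipulations, rather than interleaving the two.
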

\begin{proof}
    The inclusion $\subseteq$ is immediate by taking $(i,j) \in \{(1,0), (0,1)\}$, so it suffices to prove the reverse inclusion~$\supseteq$.
    The case $\beta = \alpha$ is trivial, so we may assume that $\alpha, \beta$ span some $2$-dimensional subspace~$H$.
    Let $R \coloneqq \{i\alpha + j\beta \in \Phi : i,j \in \N\}$, a subset of the $2$-dimensional root system $\Phi' = \Phi \cap H$.
    If $R = \{\alpha,\beta\}$ only then the lemma is immediate. 
    Otherwise, $R$ must also contain $\alpha + \beta$ (using \Cref{fact:prefixsum}) and hence $\Phi'$ is isomorphic to $A_2$, $B_2$, or $G_2$ as explained in \Cref{rem:2rootspan}.
    This allows us to classify the possibilities for~$R$; 
    with the assistance of \Cref{fig:root-eg}, we see there are four cases, namely $R = \{\alpha,\beta\} \cup R'$ for $R'$ equal to\dots
    \[
        1.\ \{\alpha+\beta\}, \quad 
        2.\ \{\alpha+\beta, 2\alpha+\beta\}, \quad
        3.\ \{\alpha+\beta, 2\alpha+\beta,\alpha+2\beta\}, \quad
        4.\ \{\alpha+\beta, 2\alpha+\beta, 3\alpha+\beta, 3\alpha +2\beta\}.
    \]
    In each case, we need to show for every $\gamma =i\alpha + j\beta \in R'$ that $x_{\gamma}(w) \in \langle X_{\alpha,d_1}, X_{\beta,d_2}\rangle$ for all $w \in \F$ of degree at most $d = id_1 + jd_2$.  By virtue of \Cref{lem:powerspan} (and using $\mathrm{char}(\F) > 3 \geq i,j$), it suffices to show this for~$w$'s that are linear combinations of field elements of the form~$t^iu^j$, where $t$~has degree~$d_1$ and $u$ has degree~$d_2$.  Further, since $x_\gamma(r+s) = x_\gamma(r)x_\gamma(s)$, it suffices to handle $w$~of the form $c t^i u^j$ for arbitrary $c \in \F_p$.  Finally, it suffices to handle just one specific $c \neq 0$, because if $x_\gamma(c t^i u^j)$ is in $\langle X_{\alpha,d_1}, X_{\beta,d_2}\rangle$ then so too is its~$k$th power $x_\gamma(c t^i u^j)^k = x_\gamma(kc t^i u^j)$, and $kc$ varies over all $\F_p$ as $k$ varies in~$\N$.  We will always use a~$c$ which is the product of structure constants $C^{\alpha',\beta'}_{i',j'}$, and such are never~$0$ in $\F_p$ because $1 \leq |C^{\alpha',\beta'}_{i',j'}| \leq 3  < p$.
    
    Summarizing, for fixed $t, u \in \F$ of degree at most $d_1, d_2$ (respectively), it suffices to show the following in Cases~1--4: For each  $\gamma = i\alpha + j\beta \in R'$ we have $x_\gamma(c t^iu^j) \in \langle x_{\alpha}(t), x_{\beta}(u) \rangle$ for some product of structure constants~$c$.
    
    \paragraph{Case 1:} $R' = \{\alpha+\beta\}$. This case arises when $\Phi' = A_2$ and $\angle(\alpha,\beta) = 120^\circ$, or when $\Phi' = B_2$ and $\alpha,\beta$ are short roots with $\angle(\alpha,\beta) = 90^\circ$, or when $\Phi' = G_2$ and $\alpha,\beta$ are short roots with $\angle(\alpha,\beta) = 60^\circ$.  We handle $\gamma = \alpha + \beta$ via the commutator formula
    $
        [x_\alpha(t),x_\beta(u)] = x_{\alpha+\beta}(C_{1,1}^{\alpha,\beta}tu).
    $
    
    \paragraph{Case 2:} $R' = \{ \alpha+\beta, 2\alpha + \beta\}$, which arises for $\Phi' = B_2$.  We first treat the root $\gamma = 2\alpha+\beta$.  By the commutator formula we have
    \[
        [[x_\alpha(t),x_\beta(u)],x_\alpha(t)] = [x_{\alpha+\beta}(C_{1,1}^{\alpha,\beta}tu) x_{2\alpha+\beta}(C_{2,1}^{\alpha,\beta}t^2u), x_\alpha(t)].
    \]
    In  this latter  commutator we can delete~$x_{2\alpha+\beta}(C_{2,1}^{\alpha,\beta}t^2u)$ because it commutes with the other two elements. (This is since no root is a nontrivial $\N$-linear combination involving $2\alpha+\beta$.)  Thus
    \begin{equation}    \label{C2a}
        [[x_\alpha(t),x_\beta(u)],x_\alpha(t)] = 
        [x_{\alpha+\beta}(C_{1,1}^{\alpha,\beta}tu), x_\alpha(t)] = 
        x_{2\alpha+\beta}(C_{1,1}^{\alpha+\beta,\alpha}C_{1,1}^{\alpha,\beta}t^2u).
    \end{equation}
    Thus $\gamma = 2\alpha+\beta$ is handled.  As for $\gamma = \alpha + \beta$, the commutator formula gives 
    \[
        [x_\alpha(t),x_\beta(u)]\cdot x_{2\alpha+\beta}(-C_{2,1}^{\alpha,\beta}t^2u) =  x_{\alpha+\beta}(C_{1,1}^{\alpha,\beta}tu)x_{2\alpha+\beta}(C_{2,1}^{\alpha,\beta}t^2u)\cdot x_{2\alpha+\beta}(-C_{2,1}^{\alpha,\beta}t^2u)
        = x_{\alpha+\beta}(C_{1,1}^{\alpha,\beta}tu),
    \]
    and so $\gamma = \alpha+\beta$ is also handled (since we already know $x_{2\alpha+\beta}(-C_{2,1}^{\alpha,\beta}t^2u)$ is in $\langle x_\alpha(t), x_\beta(u) \rangle$ via \Cref{C2a}).

    \paragraph{Case 3:} $R' = \{\alpha+\beta, 2\alpha+\beta, \alpha+2\beta\}$. This case only arises for $\Phi' = G_2$. We start by treating $\gamma = 2\alpha+\beta$.  We have 
    \[
        [[x_\alpha(t),x_\beta(u)],x_\alpha(t)] = [x_{\alpha+\beta}(C_{1,1}^{\alpha,\beta}tu) y, x_\alpha(t)] \quad \text{for } y = x_{2\alpha+\beta}(C_{2,1}^{\alpha,\beta}t^2u)x_{\alpha+2\beta}(C_{3,1}^{\alpha,\beta}tu^2),
    \]
    and similar to Case~2 we can delete $y$ from this commutator as it commutes with the other two elements (by virtue of the height of $2\alpha+\beta$ and $\alpha + 2\beta$).  Hence
    \[
        [[x_\alpha(t),x_\beta(u)],x_\alpha(t)] =  [x_{\alpha+\beta}(C_{1,1}^{\alpha,\beta}tu), x_\alpha(t)] = x_{2\alpha+\beta}(C_{1,1}^{\alpha,\beta}C_{1,1}^{\alpha+\beta,\alpha}t^2u)
    \]
    and we've handled $\gamma = 2\alpha + \beta$. The case of $\gamma = \alpha = 2\beta$ is similar.  Finally the treatment of  $\gamma = \alpha + \beta$ is similar to Case~2; it follows from
    \[[x_\alpha(t),x_\beta(u)]x_{\alpha+2\beta}(-C_{3,1}^{\alpha,\beta}tu^2)x_{2\alpha+\beta}(C_{2,1}^{\alpha,\beta}t^2u) = x_{\alpha+\beta}(C_{1,1}^{\alpha,\beta}tu).\]

    \paragraph{Case 4:} $R' = \{\alpha+\beta, 2\alpha+\beta,3\alpha+\beta, 3\alpha+2\beta\}$. This case only arises for $\Phi' = G_2$.   To reduce clutter in this case, we will sometimes abbreviate $x_{i\alpha+j\beta}(c t^i u^j)$ to $x_{i\alpha+j\beta}$.
    We start with
    \begin{equation}
        [x_\alpha(t),x_\beta(u)] = x_{\alpha+\beta} \cdot x_{2\alpha+\beta} \cdot x_{3\alpha+\beta} \cdot x_{3\alpha+2\beta},  \label{eqn:ugh0} 
    \end{equation}
    which  implies
    \[
        [[x_\alpha(t), x_\beta(u)], x_\beta(u)] = 
        [ x_{\alpha+\beta}\cdot x_{2\alpha+\beta}\cdot x_{3\alpha+\beta}, x_\beta ], 
    \]
    where we deleted the $x_{3\alpha+2\beta}$ element since it commutes with everything else.  Now since $x_\beta$ commutes with $x_{\alpha+\beta}$ and $x_{2\alpha+\beta}$, we get 
    \[
        [ x_{\alpha+\beta}\cdot x_{2\alpha+\beta}\cdot x_{3\alpha+\beta}, x_\beta ] = [x_{3\alpha+\beta},x_\beta] = x_{3\alpha+2\beta} = x_{3\alpha+2\beta}(C_{1,1}^{3\alpha+\beta,\beta} C_{3,1}^{\alpha,\beta} t^3 u^2),
    \]
    where in the last step we explicitly wrote in the argument to $x_{3\alpha+2\beta}$ that arises.  Thus we have  handled $\gamma = 3\alpha + 2\beta$.  Taking care of $\gamma = 3\alpha + \beta$ is somewhat more tedious.  Considerations similar to the above lead us to 
    \[
        [[x_\alpha(t), x_\beta(u)], x_\alpha(t)] = 
        [ x_{\alpha+\beta}\cdot x_{2\alpha+\beta}, x_\alpha ],
    \]
    which in turn equals
    \begin{equation}    \label{eqn:ugh1}
        x_{2\alpha+\beta}(-C_{2,1}^{\alpha,\beta} t^2 u) \cdot [x_{\alpha+\beta},x_\alpha]\cdot x_{2\alpha+\beta}(C_{2,1}^{\alpha,\beta} t^2 u) \cdot [x_{2\alpha+\beta},x_\alpha],
    \end{equation}
    where we explicitly wrote in the arguments to $x_{2\alpha+\beta}$ that arise.  We now observe that when the commutator rule is twice applied in the above, the resulting elements are $x_{2\alpha+\beta} \cdot x_{3\alpha+2\beta} \cdot x_{3\alpha+\beta}$ (first commutator) and $x_{3\alpha+\beta}$ (second commutator), and these all commute with the $x_{2\alpha+\beta}(\pm C_{2,1}^{\alpha,\beta} t^2 u)$ in \Cref{eqn:ugh1}.  Thus said $x_{2\alpha+\beta}(\pm C_{2,1}^{\alpha,\beta} t^2 u)$ cancel out, and we end up deducing that 
    \begin{multline}    \label{eqn:ugh2}
         [[x_\alpha(t), x_\beta(u)], x_\alpha(t)] \\ =    
         x_{2\alpha+\beta}(C_{1,1}^{\alpha+\beta,\alpha}C_{1,1}^{\alpha,\beta}t^2u)
         x_{3\alpha+2\beta}(C_{2,1}^{\alpha+\beta,\alpha}C_{1,1}^{\alpha,\beta} t^3u^2)
         x_{3\alpha+\beta}((C_{1,2}^{\alpha+\beta,\alpha}C_{1,1}^{\alpha,\beta} +C_{1,1}^{2\alpha+\beta,\alpha}C_{2,1}^{\alpha,\beta})t^3u).
    \end{multline}
    Finally, we take one more commutator with $x_\alpha(t)$. The latter two elements in the above commute with $x_\alpha(t)$ and thus may be deleted; we are left with
    \[
        [[[x_\alpha(t), x_\beta(u)], x_\alpha(t)], x_\alpha(t)] =[x_{2\alpha+\beta}(C_{1,1}^{\alpha+\beta,\alpha}C_{1,1}^{\alpha,\beta}t^2u), x_{\alpha(t)}] = x_{3\alpha+\beta}(C_{1,1}^{2\alpha+\beta,\alpha}C_{1,1}^{\alpha+\beta,\alpha}C_{1,1}^{\alpha,\beta}t^3u).
    \]
    Thus we have handled $\gamma = 3\alpha+\beta$.  Since $\gamma = 3\alpha+2\beta$ has also been treated, we get $\gamma = 2\alpha+\beta$ from \Cref{eqn:ugh2}, and then $\gamma = \alpha+\beta$ from \Cref{eqn:ugh0}.
\end{proof}

We may now complete our goal for this section:
\begin{proof}[Proof of \Cref{thm:rootgen}]
We first show that $X_{\alpha} = X_{\alpha,1} \subseteq X_S$ for all $\alpha \in \Phi$.
Since we are assuming $S^+ = \Phi$, we can write $\alpha = \sum_{\beta \in S} n_\beta \beta$ with $n_\beta \in \N$.
Then by \Cref{fact:prefixsum} we can write $\alpha = p_{i_1}+p_{i_2}  + \cdots + p_{i_\ell}$ with $p_{i_j} \in S$ and where all prefix sums are roots.
Clearly we may assume that $p_{j} \neq -(p_1 + \cdots + p_{j-1})$ does not occur for any~$j$, as otherwise the first $j$ terms could be excised from the expression for~$\alpha$.
Then by \Cref{rk2gen} it follows that $X_{p_{i_1} + p_{i_2}} \subseteq \langle X_{p_{i_1}},X_{p_{i_2}} \rangle$, $X_{p_{i_1}+p_{i_2}+p_{i_3}} \subseteq \langle X_{p_{i_1}},X_{p_{i_2}},X_{p_{i_3}} \rangle$, and so on, eventually yielding $X_\alpha \subseteq \langle X_\beta : \beta \in \Phi \rangle$.

Now suppose by induction on $i \geq 0$ that $X_{\alpha,2^i} \subseteq X_S$ for all $\alpha \in \Phi$. By \Cref{fact:rootdecomp}, for any  root $\gamma \in \Phi$ we can write $\gamma = \alpha + \beta$ for some $\alpha, \beta \in \Phi$, and it follows from \Cref{rk2gen} that $X_{\gamma, 2^i+2^i} \subseteq \langle X_{\alpha,2^i},X_{\beta,2^i} \rangle$. Thus indeed $X_{\gamma,2^{i+1}} \subseteq X_S$, completing the induction.
\end{proof}

\subsection{Structure of the links}
In this section we describe the structure of the subgroups $X_T$ where $T \subset \mathcal{S}$. This will be used to show that the links of all faces of $\complex$ are connected.

We will first need a ``graded" version of \Cref{prop:steinberg17}.
\begin{proposition}\label{substruct}
    Fix any ordering~$\prec$ of the roots $\Phi$, and let $\Psi \subseteq \Phi$ be linearly independent.  Then the elements of $X_{\Psi}$ are in $1$-$1$ correspondence with expressions of the form $\prod_{\gamma \in \Psi^+} x_{\gamma}(t_\gamma)$ with  $x_\gamma(t_\gamma) \in X_{\gamma,\hgt_\Psi(\gamma)}$ (and the product taken in order~$\prec$).
\end{proposition}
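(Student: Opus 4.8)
The plan is to deduce this "graded" refinement from \Cref{prop:steinberg17} by choosing a clever superset of roots on which that uniqueness result applies, and then tracking the degree bounds. Here is the setup. Write $\F = \F_p[x]/(f)$ and recall $X_{\gamma,d} = \langle x_\gamma(t) : \deg(t) \le d\rangle$. The key observation is that, since the additive group of $\F$ as a $\F_p$-vector space is spanned by $1, x, x^2, \dots, x^{m-1}$, we have $x_\gamma(t) = \prod_{k} x_\gamma(c_k x^k)$ for the coefficients $c_k \in \F_p$ of $t$; so $X_{\gamma,d}$ is generated by the "monomial" root elements $x_\gamma(x^k)$ for $k \le d$ (together with $x_\gamma(c)$, $c\in\F_p$, absorbed into $k=0$). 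This lets me think of a single root subgroup graded by degree.

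First I would reduce to the case $\Psi = \Pi$ a full simple set (via \Cref{fact:simplechoice}, embed $\Psi$ into $\Pi^+$ for some choice of simple roots $\Pi$; the heights $\hgt_\Psi$ and $\hgt_\Pi$ agree on $\Psi^+$ when $\Psi \subseteq \Pi$ by linear independence and the definition of height). Then $\Psi^+ = \Phi^+$ is the set of positive roots, and $X_\Psi = U^+$ is the full unipotent subgroup. Now I apply \Cref{prop:steinberg17} with $S = \Phi^+$ (which satisfies (i) trivially and (ii) since positive roots are never negatives of positive roots): every element of $U^+$ is uniquely $\prod_{\gamma \in \Phi^+} x_\gamma(t_\gamma)$ in the fixed order $\prec$. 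This is the "ungraded" statement; I still need to prove that $x_\gamma(t_\gamma)$ can be taken in $X_{\gamma,\hgt(\gamma)}$, i.e. $\deg(t_\gamma) \le \hgt(\gamma)$, when the input element lies in $X_\Psi$ built from the degree-$1$ generators $x_{\alpha_i}(t)$, $\deg(t)\le 1$.

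The heart of the argument is the degree bound, and this is the step I expect to be the main obstacle. I would prove it by induction on height using the Chevalley commutator formula (\Cref{fact:chevcom}): when one collects a product of generators $x_{\alpha_i}(t_i)$ (with $\deg t_i \le 1$) into the canonical ordered form, every newly created factor $x_{i\alpha+j\beta}(C^{\alpha,\beta}_{ij} t^i u^j \cdots)$ has argument of degree at most $i\deg(t) + j\deg(u)$, and by induction $\deg(t) \le \hgt(\alpha)$, $\deg(u)\le\hgt(\beta)$ for the roots $\alpha,\beta$ already processed, so the new argument has degree $\le i\,\hgt(\alpha) + j\,\hgt(\beta) = \hgt(i\alpha+j\beta)$ (heights add under this decomposition since everything is a positive combination of the $\alpha_i$). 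One must check the rewriting process terminates with control on degrees — a standard "collection" argument ordering roots by increasing height, exactly as in the proof of \Cref{prop:steinberg17} in \cite{Ste16}, but now carrying the degree superscript through each commutator move. Conversely, any expression $\prod_{\gamma} x_\gamma(t_\gamma)$ with $\deg(t_\gamma)\le\hgt_\Psi(\gamma)$ lies in $X_\Psi$: each $x_\gamma(t_\gamma)$ with $\gamma = \sum n_i\alpha_i$, $\deg(t_\gamma) \le \sum n_i$, is a product of commutators of the degree-$1$ generators by (the $\supseteq$ direction of) \Cref{rk2gen} applied iteratively along a prefix-sum decomposition of $\gamma$ from \Cref{fact:prefixsum} — precisely the mechanism used in the proof of \Cref{thm:rootgen}. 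Uniqueness of the representation is inherited directly from \Cref{prop:steinberg17}, since the graded expressions form a subset of the ungraded ones. Combining the two inclusions with uniqueness gives the claimed bijection.
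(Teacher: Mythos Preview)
Your overall strategy matches the paper's: you correctly identify that uniqueness comes from \Cref{prop:steinberg17}, that the degree bound is preserved through the collection process (what the paper calls a ``well-bounded'' expression, invariant under each application of the commutator formula), and that the reverse inclusion follows from \Cref{rk2gen} applied along a \Cref{fact:prefixsum} decomposition. These are exactly the paper's three ingredients.

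However, your reduction step is a genuine gap. You invoke \Cref{fact:simplechoice} to ``reduce to the case $\Psi = \Pi$ a full simple set,'' but that fact only yields $\Psi \subseteq \Pi^+$, not $\Psi \subseteq \Pi$. An arbitrary linearly independent set of roots need not be (or extend to) a simple set: e.g.\ in $A_2$ with simple roots $\alpha_1,\alpha_2$, the set $\Psi = \{\alpha_1,\ \alpha_1+\alpha_2\}$ is linearly independent but makes an acute angle, so it is not simple in any base. Consequently your subsequent claims ``$\Psi^+ = \Phi^+$'' and ``$X_\Psi = U^+$'' are false in general, and $\hgt_\Psi$ need not agree with~$\hgt_\Pi$ on~$\Psi^+$.

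The fix is to drop the reduction entirely and apply \Cref{prop:steinberg17} directly with $S = \Psi^+$: linear independence of~$\Psi$ guarantees both hypotheses (closure under root-sums is immediate from the definition of~$\Psi^+$; and if $\alpha,-\alpha \in \Psi^+$ then summing the two nonnegative expressions for them forces $\alpha = 0$). The paper does precisely this. Your collection argument then goes through verbatim on~$\Psi^+$, with heights taken relative to~$\Psi$. One further point you glossed over: the statement allows an \emph{arbitrary} ordering~$\prec$, not just one compatible with heights, whereas your ``collection by increasing height'' only directly handles the latter case; the paper spends an extra paragraph passing to arbitrary~$\prec$ via a normality argument for the filtration subgroups $B_i = X_{\gamma_i,\hgt(\gamma_i)} \cdots X_{\gamma_m,\hgt(\gamma_m)}$.
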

\begin{proof}
    We first prove that every expression of the given form is indeed in~$X_\Psi$.  Precisely, we show by induction on~$h$ that $X_\Psi$ contains all subgroups $X_{\gamma,h}$ with $h = \hgt(\gamma)$.
    The base case of $h = 1$ is immediate.
    For general~$h$, take any $\gamma \in \Psi^+$ with height~$h$ and write $\gamma = \alpha + \beta$ with $\alpha,\beta \in \Psi^+$ of height smaller than~$h$.  (This is possible by \Cref{fact:prefixsum}.)  Now it follows from \Cref{rk2gen} that $X_{\gamma,\hgt(\gamma)} = X_{\gamma,\hgt(\alpha)+\hgt(\beta)} \subseteq \langle X_{\alpha, \hgt(\alpha)}, X_{\beta, \hgt(\beta)} \rangle$, and this is in $X_\Psi$ by induction.  

    We next show that every element in~$X_\Psi$ has a unique expression of the given form.
    In fact, it suffices to show existence, since uniqueness follows from \Cref{prop:steinberg17} (note that $\Psi^+$ satisfies its hypotheses). 
    Let us say that an expression of the form
    \begin{equation}    \label{eqn:wellb}
        x_{\gamma_1}(u_1) x_{\gamma_2}(u_2)\cdots x_{\gamma_m}(u_m)
    \end{equation}
    with $\gamma_i \in \Psi^+$ is \emph{well-bounded} if each $u_i$ has degree at most $\hgt(\gamma_i)$.
    The desired existence result is that every $z \in X_{\Psi}$ has a well-bounded expression as above, where $\gamma_1, \dots, \gamma_m$ list the elements of~$\Psi^+$ in the order~$\prec$.  (We remark that it doesn't matter whether we are allowing consecutive duplicate $\gamma_i$'s in this list, since $x_{\gamma}(u) x_{\gamma}(u') = x_{\gamma}(u+u')$ and this preserves well-boundedness.)

    To show this existence, it actually suffices to repeat the existence proof in \Cref{prop:steinberg17}.  At a high level, this works because that proof ultimately only uses the commutator formula, and applications of the commutator formula preserve well-boundedness.  That is, starting from an arbitrary $z \in X_\Psi$, by definition we may express $z$ as in \Cref{eqn:wellb} with each $\gamma_i \in \Psi$ and each $u_i$ of degree at most~$1$.  This is well-bounded.  Then  an application of the commutator formula switches some  consecutive $x_{\gamma}(u)x_{\gamma'}(u')$ to $x_{\gamma'}(u')x_{\gamma}(u)[x_{\gamma}(u),x_{\gamma'}(u')]$, and this commutator is the product of elements of the form $x_{i\gamma + j \gamma'}(C_{i,j}^{\gamma,\gamma'} u^i (u')^j)$.  But this product is indeed well-bounded, presuming the former expression was well-bounded.

    For completeness, we sketch why the existence result in \Cref{prop:steinberg17} only relies on the commutator formula.
    We prefer to first follow the existence result in \cite[Thm.~5.3.3]{Car89}, which assumes that the order~$\prec$ is consistent with heights (meaning $\hgt_{\Psi}(\alpha) \leq \hgt_{\Psi}(\beta)$ implies $\alpha \prec \beta$).  Under this assumption, we may repeatedly reorder consecutive products $x_{\gamma}(u)x_{\gamma'}(u')$ whenever $\gamma' \prec \gamma$, as described above.  Notice that the new products of elements of the form $x_{i\gamma + j \gamma'}(C_{i,j}^{\gamma,\gamma'} u^i (u')^j)$ that arise are have $\hgt(i\gamma + j \gamma') > \hgt(\gamma), \hgt(\gamma')$.  Because of this, and the height-respecting property of~$\prec$, this process must eventually terminate with a (well-bounded) expression like \Cref{eqn:wellb} where the roots~$\gamma_i$ are in the order~$\prec$ (and any missing root $\gamma \in \Phi^+$ can be inserted via $x_{\gamma}(0)$).  
    
    It remains to treat the case that the root order~$\prec$ does \emph{not} necessarily respect heights. For this we appeal to \cite[Lem.~18]{Ste16}, the associated component of the proof of \Cref{prop:steinberg17}.
    It says that it suffices to check --- when $\gamma$ \emph{is} a height-respecting order, and  $\Psi^+ = \{\gamma_1 \prec \gamma_2 \prec \cdots \prec \gamma_m\}$ --- that each subgroup of the form
    \[
        B_i \coloneqq X_{\gamma_i, \hgt(\gamma_i)} \cdot X_{\gamma_{i+1}, \hgt(\gamma_{i+1})} \cdots X_{\gamma_r, \hgt(\gamma_m)} 
    \]
    is normal in $X_\Psi$.  To see this, take a generic well-bounded expression
    \[
        y = x_{\gamma_i}(t_i) x_{\gamma_{i+1}}(t_{i+1})\cdots x_{\gamma_{m}}(t_m)
    \]
    in~$B_i$ and consider conjugating it by an arbitrary well-bounded expression~$w$ as in \Cref{eqn:wellb}.  We have $w^{-1} y w = y [y,w]$, and expanding the commutator yields a well-bounded expression consisting only of $x_{\gamma}(v)$'s where $\hgt(\gamma) \geq \hgt(\gamma_i)$.  Now as in the previous argument, this may be further rearranged into a well-bounded expression in~$B_i$, showing that $B_i$ is closed under conjugation and hence normal.
\end{proof}

We have the following immediate consequence:
\begin{corollary}\label{subsize}
    Let $\Psi$ be a set of linearly independent roots. Then
$\displaystyle |X_{\Psi}| = \prod_{\alpha \in \Psi^+} p^{\hgt_{\Psi}(\alpha) + 1}.$
\end{corollary}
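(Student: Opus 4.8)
The plan is to read the count off directly from \Cref{substruct}. That proposition says that, once an order~$\prec$ on $\Phi$ is fixed, the elements of $X_\Psi$ are in bijection with the formal products $\prod_{\gamma \in \Psi^+} x_\gamma(t_\gamma)$ (taken in order~$\prec$) subject only to the constraint $x_\gamma(t_\gamma) \in X_{\gamma,\hgt_\Psi(\gamma)}$ for each~$\gamma$. Since the factors range independently, this immediately gives $|X_\Psi| = \prod_{\gamma \in \Psi^+} |X_{\gamma,\hgt_\Psi(\gamma)}|$, and it remains only to evaluate the size of a single subgroup $X_{\gamma,h}$.

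For that step I would use the first Steinberg relation $x_\gamma(t)x_\gamma(u) = x_\gamma(t+u)$, which makes the root subgroup $\la x_\gamma(r) : r \in \F\ra$ isomorphic to $(\F,+)$ via $r \mapsto x_\gamma(r)$. In particular $x_\gamma$ is injective on~$\F$, so $X_{\gamma,h} = \la x_\gamma(t) : \deg(t)\le h\ra$ is carried to the additive subgroup of~$\F$ generated by $\{t \in \F : \deg(t)\le h\}$. But that set is already closed under addition — for $m$ large enough that $h < m$ it is exactly the $\F_p$-span of $1,x,\dots,x^h$ inside $\F_p[x]/(f)$ — hence it equals $\{t\in\F:\deg(t)\le h\}$, a set of size $p^{h+1}$. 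Therefore $|X_{\gamma,h}| = p^{h+1}$, and substituting $h = \hgt_\Psi(\gamma)$ yields $|X_\Psi| = \prod_{\gamma\in\Psi^+} p^{\hgt_\Psi(\gamma)+1}$.

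I do not expect a genuine obstacle here: the substantive content lies entirely in \Cref{substruct}, and what remains is a bookkeeping computation, with the minor point that the bijection (and hence the count) is independent of the chosen order~$\prec$. The only caveat worth flagging is the implicit hypothesis that $m$ exceeds $\hgt_\Psi(\alpha)$ for every $\alpha \in \Psi^+$ — equivalently, $m$ is at least the height of the highest root — which certainly holds in the $m \to \infty$ regime of interest; otherwise the set $\{t\in\F:\deg(t)\le h\}$ could be all of~$\F$ and the corresponding factor would contribute $p^m$ rather than $p^{h+1}$.
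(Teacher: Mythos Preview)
Your argument is correct and matches the paper's approach exactly: the paper states the corollary as an ``immediate consequence'' of \Cref{substruct} without further proof, and your write-up simply spells out the bookkeeping (the bijection from \Cref{substruct} together with $|X_{\gamma,h}| = p^{h+1}$) that the paper leaves implicit. Your caveat about needing $m$ to exceed the maximum height is a valid technical point the paper glosses over, but as you note it is harmless in the $m\to\infty$ regime.
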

Importantly, $|X_{\Psi}|$ can be bounded independently of $m$ (where recall $|\F|=p^m$). This will imply that a vertex in $\complex$ belongs to just $p^{O(1)}$ faces where the $O(1)$ does  not depend on $m$.\\

The proceeding normal form result also helps us show the following:
\begin{proposition}\label{subint}
Let $\Psi$ and $\Psi'$ be sets of linearly independent roots. Then $X_\Psi \cap X_{\Psi'} = X_{\Psi \cap \Psi'}$.
\end{proposition}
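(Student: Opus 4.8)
The inclusion $X_{\Psi\cap\Psi'}\subseteq X_\Psi\cap X_{\Psi'}$ is immediate, since $X_T\subseteq X_S$ whenever $T\subseteq S$. For the reverse inclusion the plan is: given a common element $z$, take the two normal forms supplied by \Cref{substruct} --- one over $\Psi^+$ and one over $\Psi'^+$ --- and show that they must coincide. (I will use the structure of the root set $\calS$ from \Cref{eqn:ourS}, which is the relevant setting here; some such hypothesis on $\Psi,\Psi'$ is needed, since the heights $\hgt_\Psi$ and $\hgt_{\Psi'}$ have to be made to agree on roots common to $\Psi^+$ and $\Psi'^+$.)

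The combinatorial core is the claim that $\Psi^+\cap\Psi'^+=(\Psi\cap\Psi')^+$, and that $\hgt_\Psi(\gamma)=\hgt_{\Psi'}(\gamma)=\hgt_{\Psi\cap\Psi'}(\gamma)$ for $\gamma$ in this common set. Since $\calS=\Pi\cup\{-(\alpha_1+\cdots+\alpha_d)\}$ with $\Pi$ a basis, the only linear relation among the elements of $\calS$ (up to scaling) is $\sum_{\alpha\in\calS}\alpha=0$. So if $\gamma\in\Psi^+\cap\Psi'^+$ has $\N$-expansions $\sum_{\alpha\in\Psi}n_\alpha\alpha=\gamma=\sum_{\alpha\in\Psi'}m_\alpha\alpha$, then extending $n,m$ by zero off $\Psi,\Psi'$ and subtracting forces $n_\alpha-m_\alpha$ to be a constant; a nonzero constant would force $\Psi=\calS$ or $\Psi'=\calS$, contradicting linear independence. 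Hence $n=m$, and this vector is supported on $\Psi\cap\Psi'$, which shows $\gamma\in(\Psi\cap\Psi')^+$ with all three heights equal to $\sum_\alpha n_\alpha$.

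Now fix $z\in X_\Psi\cap X_{\Psi'}$ and any total order $\prec$ on $\Phi$; by \Cref{substruct}, $z=\prod_{\gamma\in\Psi^+}x_\gamma(t_\gamma)=\prod_{\gamma\in\Psi'^+}x_\gamma(s_\gamma)$ with $\deg t_\gamma\le\hgt_\Psi(\gamma)$ and $\deg s_\gamma\le\hgt_{\Psi'}(\gamma)$. It suffices to prove $t_\gamma=s_\gamma$ for all $\gamma$ (extending both by zero off their index sets): then the common coefficient vector is supported on $\Psi^+\cap\Psi'^+=(\Psi\cap\Psi')^+$ with $\deg t_\gamma\le\min(\hgt_\Psi(\gamma),\hgt_{\Psi'}(\gamma))=\hgt_{\Psi\cap\Psi'}(\gamma)$, so $z\in X_{\Psi\cap\Psi'}$ by \Cref{substruct}. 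This is easy when $\Psi\cup\Psi'\ne\calS$: as the uniform combination is the only convex combination of $\calS$-elements equal to $0$, we have $0\notin\mathrm{conv}(\Psi\cup\Psi')$, so $\Psi\cup\Psi'$ lies in an open halfspace; the argument of \Cref{fact:simplechoice} then yields a base $\Pi_*$ with $\Psi^+,\Psi'^+\subseteq\Pi_*^+$, so $z$ lies in the positive unipotent $\langle x_\gamma(t):\gamma\in\Pi_*^+\rangle$, whose elements factor \emph{uniquely} over $\Pi_*^+$ by \Cref{prop:steinberg17} (its hypotheses hold for $\Pi_*^+$); padding both expansions of $z$ to $\Pi_*^+$ and comparing forces $t_\gamma=s_\gamma$.

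The remaining case $\Psi\cup\Psi'=\calS$ is, I expect, the main obstacle. It genuinely occurs --- e.g.\ $\Psi=\calS\setminus\{\alpha_i\}$ and $\Psi'=\calS\setminus\{\alpha_j\}$, exactly the configuration one iterates to conclude $\bigcap_{\alpha\in\calS}H_\alpha=\{1\}$ --- and here $\Psi^+\cup\Psi'^+$ can contain a root together with its negative, so no single positive unipotent contains both and \Cref{prop:steinberg17} cannot be applied to one set. I would instead pass to a faithful matrix realization of $\Esc{\Phi}{\F}$ as in the proof of \Cref{prop:uniint}: use \Cref{fact:simplechoice} to pick bases with $\Psi\subseteq\Pi_\Psi^+$ and $\Psi'\subseteq\Pi_{\Psi'}^+$, so $X_\Psi$ and $X_{\Psi'}$ sit inside unitriangular subgroups for two (in general distinct) flags, and then establish the standard identity $\langle x_\gamma(t):\gamma\in\Phi_1^+\rangle\cap\langle x_\gamma(t):\gamma\in\Phi_2^+\rangle=\langle x_\gamma(t):\gamma\in\Phi_1^+\cap\Phi_2^+\rangle$ for any two bases $\Pi_1,\Pi_2$ (with $\Phi_i^+$ their positive systems). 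That identity --- provable by a height induction from \Cref{prop:steinberg17}, or by direct matrix computation as in the $\SL_3$ example --- together with the degree bookkeeping above via \Cref{substruct} closes the proof; pinning it down cleanly with the available tools is the step I would allot the most effort to.
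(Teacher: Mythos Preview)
Your overall plan---compare the two normal forms from \Cref{substruct} and show their coefficients must agree---is the paper's plan too, and your combinatorial claim $\Psi^+\cap\Psi'^+=(\Psi\cap\Psi')^+$ with matching heights is exactly what the paper invokes at the end (your observation that this step really relies on the $\calS$-structure, not bare ``linear independence,'' is well taken). Where you diverge is in the group-theoretic core: you split on whether $\Psi\cup\Psi'$ sits inside a single positive system, and in the bad case $\Psi\cup\Psi'=\calS$ you propose establishing the general two-Borel identity $\langle x_\gamma(t):\gamma\in\Phi_1^+\rangle\cap\langle x_\gamma(t):\gamma\in\Phi_2^+\rangle=\langle x_\gamma(t):\gamma\in\Phi_1^+\cap\Phi_2^+\rangle$, which you leave as a sketch.

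The paper sidesteps this with one move that handles both of your cases uniformly. Choose a base $\Pi$ with $\Psi\subseteq\Pi^+$ only---do not try to accommodate $\Psi'$ as well. In the $\Psi'^+$-normal form of $z$, order the roots so those in $\Pi^+$ come first and those in $\Pi^-$ last, writing $z=U\cdot L$. Since the $\Psi^+$-normal form $g$ of $z$ lies entirely in the upper unipotent for~$\Pi$, the equation $U^{-1}g=L$ places $L$ in $U^+\cap U^-=\{1\}$ by \Cref{prop:uniint}. Now both expressions for $z$ sit inside a single unipotent, and uniqueness there (\Cref{prop:steinberg17}) forces $t_\gamma=s_\gamma$ for every~$\gamma$. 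This is precisely the mechanism that would prove your two-Borel intersection identity, but applied directly to the situation at hand; so your ``main obstacle'' dissolves with no case split and no auxiliary lemma. Your route makes the underlying two-Borel fact explicit, which is of independent interest; the paper's route buys brevity and a complete argument from the already-established \Cref{prop:uniint}.
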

\begin{proof}
    By \Cref{fact:simplechoice} we may choose a set $\Pi$ of simple roots with $\Psi \subseteq \Pi^+$. We apply \Cref{substruct} to any $g \in X_\Psi$ and $h \in X_{\Psi'}$, writing them as $g = \prod_{\alpha \in \Psi^+} x_\alpha(t_\alpha)$ and $h = \prod_{\alpha \in \Psi'^+} x_\alpha(u_\alpha)  = U \cdot L$, where we have ordered $h$ as a product $U$ of root elements in $\Pi^+$ times a product $L$ of root elements in $\Pi^-$. Now supposing $g=h$, we get $U^{-1}g=L$. But by \Cref{prop:uniint}, the only way this equality can hold is if $L = 1$. Hence we have $\prod_{\alpha \in \Psi^+} x_\alpha(t_\alpha) = \prod_{\alpha \in \Psi'^+} x_\alpha(u_\alpha)$, where on both sides $\alpha$ is ranging in $\Pi^+$; hence by uniqueness of these expressions (assuming the products are taken in the same order), equality holds just when $t_\alpha = u_\alpha$ for all $\alpha$. So the elements of $X_{\Psi} \cap X_{\Psi'}$ are exactly the elements of the form
\[\prod_{\alpha \in \Psi^+ \cap \Psi'^+} x_\alpha(f_\alpha) \]
    where $\deg(f_\alpha) \le \min(\hgt_{\Psi}(\alpha),\hgt_{\Psi'}(\alpha))$. But note that $\Psi^+ \cap \Psi'^+=(\Psi \cap \Psi')^+$ and $\hgt_\Psi(\alpha) = \hgt_{\Psi'}(\alpha)=\hgt_{\Psi \cap \Psi'}(\alpha)$ for $\alpha \in (\Psi \cap \Psi')^+$ due to linear independence. So any such an element belongs to $X_{\Psi \cap \Psi'}$ (using \Cref{substruct} again), which proves the proposition.
\end{proof}
\begin{observation}\label{obs:centerint}
In fact, $Z \cdot X_{\Psi} \cap Z \cdot X_{\Psi'} = Z \cdot X_{\Psi \cap \Psi'}$, where $Z$ denotes the center of $\Esc{\Phi}{\F}$. The proof proceeds in the same fashion: Under the matrix identification of \Cref{prop:uniint}, $Z$~consists of diagonal matrices. Thus if $D_1 U^{-1}g = D_2 L$ with $D_1$ and $D_2$ diagonal, $L$ lower-unitriangular and $U^{-1}g$ upper-unitriangular, we must have $D_1=D_2$ and $L=U^{-1}g=1$. This implies $\prod_{\alpha \in \Psi^+} x_\alpha(t_\alpha) = \prod_{\alpha \in \Psi'^+} x_\alpha(u_\alpha)$, and the rest of the proof follows as before.
\end{observation}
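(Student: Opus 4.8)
The plan is to transcribe the proof of \Cref{subint} almost verbatim, the only new ingredient being that the central factor $Z$ is harmless because, in the matrix realization of \Cref{prop:uniint}, $Z$ lies inside the diagonal torus. First I would dispose of the inclusion $\supseteq$: since $\Psi\cap\Psi' \subseteq \Psi$ and $\Psi\cap\Psi'\subseteq\Psi'$ we have $X_{\Psi\cap\Psi'}\subseteq X_\Psi$ and $X_{\Psi\cap\Psi'}\subseteq X_{\Psi'}$, so $Z\cdot X_{\Psi\cap\Psi'}\subseteq Z\cdot X_\Psi \cap Z\cdot X_{\Psi'}$. For the reverse inclusion, take $z$ in the intersection and write $z=D_1 g=D_2 h$ with $D_1,D_2\in Z$, $g\in X_\Psi$, $h\in X_{\Psi'}$. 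By \Cref{fact:simplechoice} pick a base $\Pi$ of $\Phi$ with $\Psi\subseteq\Pi^+$, and realize $\Esc{\Phi}{\F}$ as matrices as in \Cref{prop:uniint}, so $U^+\coloneqq\la x_\alpha(t):\alpha\in\Pi^+\ra$ is upper-unitriangular, $U^-\coloneqq\la x_\alpha(t):\alpha\in\Pi^-\ra$ is lower-unitriangular, and (from the generators $h_\alpha(\cdot)$ of the center, see \Cref{def:adj}) $Z$ consists of diagonal matrices. Then $g\in X_\Psi\subseteq U^+$, and using \Cref{substruct} with an order $\prec$ of $\Phi$ that lists all roots of $\Pi^+$ before all roots of $\Pi^-$ — every root of $\Psi'^+$ lying in $\Pi^+$ or $\Pi^-$ — I can factor $h=U\cdot L$ with $U\in U^+$ a product of $x_\alpha(u_\alpha)$ over $\alpha\in\Psi'^+\cap\Pi^+$ and $L\in U^-$ a product over $\alpha\in\Psi'^+\cap\Pi^-$.

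Now $D_1 g=D_2 U L$ rearranges to $L=U^{-1}D g$ where $D\coloneqq D_2^{-1}D_1$ is diagonal; the right-hand side is upper-triangular with diagonal equal to that of $D$, while the left-hand side is lower-unitriangular, so $D$ is the identity (hence $D_1=D_2$), whence $U^{-1}g$, being simultaneously upper- and lower-unitriangular, equals $1$, so $L=1$ and $g=U=h$. Thus $g=h$ is a single element of $U^+$ with a normal-form expression $\prod_{\alpha\in\Psi^+}x_\alpha(t_\alpha)$ coming from $g\in X_\Psi$ and another $\prod_{\alpha\in\Psi'^+}x_\alpha(u_\alpha)$ coming from $h\in X_{\Psi'}$ (the $\Pi^-$-part of the latter being trivial since $L=1$). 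By the uniqueness in \Cref{prop:steinberg17} applied to $\Pi^+$, these expressions coincide term by term: the support is forced into $\Psi^+\cap\Psi'^+=(\Psi\cap\Psi')^+$ with $t_\alpha=u_\alpha$ there, and $\deg t_\alpha\le\min(\hgt_\Psi(\alpha),\hgt_{\Psi'}(\alpha))=\hgt_{\Psi\cap\Psi'}(\alpha)$ by linear independence. So \Cref{substruct} identifies $g=h$ as lying in $X_{\Psi\cap\Psi'}$, and therefore $z=D_1 g\in Z\cdot X_{\Psi\cap\Psi'}$, as desired.

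The one place that deserves care — and what I would regard as the main (mild) obstacle — is the factorization $h=U\cdot L$: I need \Cref{substruct} to legitimately supply a well-bounded expression for $h$ in the sign-separating order $\prec$, with the height (hence degree) bounds preserved under the reordering. Everything else is a mechanical insertion of the central diagonal factor into the argument of \Cref{subint}, the key new observation being simply that a product $(\text{upper-unitriangular})\cdot(\text{diagonal})\cdot(\text{upper-unitriangular})$ that equals a lower-unitriangular matrix must be the identity, which pins down both $D_1=D_2$ and $g=h$ in one stroke.
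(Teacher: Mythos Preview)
Your proof is correct and follows essentially the same approach as the paper: insert the central diagonal factor into the argument of \Cref{subint}, then use the upper/lower triangular decomposition to force $D_1=D_2$ and $L=U^{-1}g=1$, after which the original uniqueness argument applies verbatim. You are somewhat more explicit than the paper (spelling out the easy inclusion, the sign-separating order needed for the factorization $h=UL$, and the reason $Z$ is diagonal), but the substance is identical.
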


Combining \Cref{subint} with \Cref{fact:link} lets us understand the structure of the links in~$\complex$:
\begin{theorem} \label{thm:linkstruct}
    Let $\sigma \in \complex$ be a face of type $T \subsetneq \calS$.
    Then the link of~$F$ is isomorphic to the coset complex $\CC{X_{\calS \setminus T}}{(X_{\calS \setminus T \setminus \{\alpha\}} : \alpha \in \calS \setminus T)}$.
\end{theorem}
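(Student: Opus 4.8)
The plan is to apply \Cref{fact:link} directly to the coset complex $\complex = \CC{\Gu{\Phi}{\F}}{(H_\alpha)_{\alpha \in \calS}}$, and then rewrite the resulting intersection subgroups using \Cref{subint}. Recall that in the notation of \Cref{fact:link}, for a type $T \subseteq \calS$ the subgroup $H_T$ is $\bigcap_{\alpha \in T} H_\alpha$. Since by definition $H_\alpha = X_{\calS \setminus \{\alpha\}}$, and each $\calS \setminus \{\alpha\}$ is a set of linearly independent roots by \Cref{rem:matr}, we may invoke \Cref{subint} iteratively: $H_T = \bigcap_{\alpha \in T} X_{\calS \setminus \{\alpha\}} = X_{\bigcap_{\alpha \in T}(\calS \setminus \{\alpha\})} = X_{\calS \setminus T}$. (To apply \Cref{subint} repeatedly we use that every subset of $\calS$ is again linearly independent, by \Cref{rem:matr}, so each partial intersection is of the form $X_\Psi$ for a linearly independent $\Psi$.)

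With this identification in hand, \Cref{fact:link} says that the link of a face $\sigma$ of type $T$ is isomorphic to the coset complex $\CC{H_T}{(H_{T \cup \{\alpha\}} : \alpha \notin T)}$. Substituting $H_T = X_{\calS \setminus T}$ and $H_{T \cup \{\alpha\}} = X_{\calS \setminus (T \cup \{\alpha\})} = X_{(\calS \setminus T) \setminus \{\alpha\}}$, this is exactly $\CC{X_{\calS \setminus T}}{(X_{(\calS \setminus T) \setminus \{\alpha\}} : \alpha \in \calS \setminus T)}$, which is the claimed statement. One should note that \Cref{fact:link} is stated for the \emph{indexed} coset complex — the index set here is $\calS$ rather than $[d+1]$, but this is a purely cosmetic relabeling, and the type $T$ is a subset of $\calS$; one just has to be mildly careful that $T \subsetneq \calS$ so that $\calS \setminus T \neq \emptyset$ and the link is a genuine (nonempty-dimensional) complex.

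The only real content beyond bookkeeping is the repeated application of \Cref{subint}, so there is essentially no obstacle — the statement is a formal consequence of \Cref{fact:link} and \Cref{subint}. The one point requiring a sentence of care is checking that each intermediate intersection $\bigcap_{\alpha \in T'}(\calS \setminus \{\alpha\})$, for $T' \subseteq T$, is a set of linearly independent roots so that \Cref{subint} applies at each stage; but this is immediate since any subset of $\calS$ has cardinality at most $|\calS| = d+1$ and, by \Cref{rem:matr}, every $d$-element (hence every smaller) subset of $\calS$ is linearly independent. I would therefore write the proof as: identify $H_T = X_{\calS \setminus T}$ via \Cref{subint} and \Cref{rem:matr}, then quote \Cref{fact:link} and substitute.
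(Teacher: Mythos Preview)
Your argument is essentially the paper's: compute $H_U = X_{\calS \setminus U}$ by iterating \Cref{subint} (using \Cref{rem:matr} for linear independence of each $\calS \setminus \{\alpha\}$ and its subsets), then plug into \Cref{fact:link}.

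There is one small omission. The statement allows $T = \emptyset$ (the empty face), but \Cref{fact:link} is stated only for $T \neq \emptyset$, and your intersection formula $H_T = \bigcap_{\alpha \in T} X_{\calS \setminus \{\alpha\}}$ is vacuous there (recall the paper's convention that $H_\emptyset = \langle H_\alpha : \alpha \in \calS \rangle$, not an intersection). For $T = \emptyset$ the link is $\complex$ itself, and the claimed isomorphism reduces to $X_{\calS} = \Gu{\Phi}{\F}$; the paper handles this case separately by invoking \Cref{thm:rootgen} together with \Cref{prop:sspan}. Add a sentence covering this and the proof is complete.
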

\begin{proof}
    For $T = \emptyset$, this is the combination of \Cref{thm:rootgen} and \Cref{prop:sspan}.  Otherwise, by virtue of \Cref{fact:link} it suffices to show that for any $U \subseteq \calS$,
    \[
        H_U = \bigcap_{\alpha \in U} H_\alpha =  \bigcap_{\alpha \in U} X_{\calS \setminus \{\alpha\}} = X_{\calS \setminus U}.
    \]
    But this follows from \Cref{subint} after recalling (\Cref{rem:matr}) that $\calS \setminus \{\alpha\}$ is linearly independent for any~$\alpha$.
\end{proof}

Finally, whenever $|T| \leq d-1$ the sets $\calS \setminus T \setminus \{\alpha\}$ are nonempty, and so we may therefore conclude using \Cref{fact:connected}:
\begin{corollary}\label{cor:connectedlinks}
    For all $\sigma \in \complex(i)$ with $i \le d-2$, $K_\sigma$ is connected.
\end{corollary}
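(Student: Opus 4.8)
The plan is to reduce the statement to a routine group-generation fact, using the structural results already established. First I would fix $\sigma \in \complex(i)$ with $i \le d-2$ and let $T \subsetneq \calS$ be its type, so $|T| = i+1 \le d-1$. By \Cref{thm:linkstruct}, the link $\Link_\sigma(\complex)$ --- whose $1$-skeleton is exactly $K_\sigma$ --- is isomorphic to the coset complex $\CC{X_{\calS \setminus T}}{(X_{\calS \setminus T \setminus \{\alpha\}} : \alpha \in \calS \setminus T)}$. By \Cref{fact:connected}, this coset complex (hence $K_\sigma$) is connected if and only if the subgroups $X_{\calS \setminus T \setminus \{\alpha\}}$, for $\alpha$ ranging over $\calS \setminus T$, together generate $X_{\calS \setminus T}$, so it remains only to verify that generation.

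The key observation is that this generation is automatic once $i \le d-2$. Setting $\Psi = \calS \setminus T$, the hypothesis gives $|\Psi| = (d+1) - (i+1) \ge 2$; this is precisely where the dimension bound enters, and it guarantees that each index set $\Psi \setminus \{\alpha\}$ is nonempty. Since $|\Psi| \ge 2$, every root $\beta \in \Psi$ lies in $\Psi \setminus \{\alpha\}$ for some $\alpha \in \Psi$ (take any $\alpha \neq \beta$), so the root subgroup $X_{\beta,1}$ is contained in $X_{\Psi \setminus \{\alpha\}}$. Hence $\langle X_{\Psi \setminus \{\alpha\}} : \alpha \in \Psi\rangle$ contains $\langle X_{\beta,1} : \beta \in \Psi\rangle = X_{\Psi}$; the reverse inclusion is trivial, so equality holds and \Cref{fact:connected} applies. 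In the degenerate case $i = -1$, where $T = \emptyset$ and the link is all of $\complex$, the identical bookkeeping reduces matters to $X_{\calS} = \Gu{\Phi}{\F}$, which is \Cref{thm:rootgen} together with \Cref{prop:sspan}.

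I would not expect any real obstacle in this final step: all of the substance lives upstream, in \Cref{substruct}, \Cref{subint}, and \Cref{thm:linkstruct} (whose proofs go through \Cref{rk2gen}, which is where the characteristic hypothesis $p > 3$ is genuinely consumed). The one point deserving a moment of care is the numerology --- checking that $i \le d-2$ is exactly the condition making $|\calS \setminus T| \ge 2$, so that the subgroups in the link's coset-complex presentation collectively involve every root of $\calS \setminus T$ and therefore generate the ambient group $X_{\calS \setminus T}$.
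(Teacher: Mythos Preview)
Your proposal is correct and follows essentially the same approach as the paper: invoke \Cref{thm:linkstruct} to identify the link as the coset complex $\CC{X_{\calS\setminus T}}{(X_{\calS\setminus T\setminus\{\alpha\}})}$, then apply \Cref{fact:connected} by observing that when $|T|\le d-1$ the sets $\calS\setminus T\setminus\{\alpha\}$ are nonempty, so their associated subgroups jointly contain every generator of $X_{\calS\setminus T}$. Your separate treatment of $i=-1$ is harmless but unnecessary, since \Cref{thm:linkstruct} already covers $T=\emptyset$ and the same nonemptiness argument goes through uniformly.
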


\begin{remark}
The fact that $\complex$ and all of its links of dimension at most $d-2$ are connected is equivalent to saying that $\complex$ is \emph{strongly gallery connected} \cite[Rem.~2.1]{KO18}.
\end{remark}

\subsection{Expansion of links}\label{sec:explink}

\begin{definition}
    For $\alpha,\beta \in \Phi$ with $\alpha \neq -\beta$ we use the shorthand $\CC{\alpha}{\beta} = \CC{X_{\{\alpha,\beta\}} }{(X_\alpha,X_\beta)}$.
\end{definition}
It follows from \Cref{thm:linkstruct} that the link of every $(d-2)$-dimensional face in our complex~$\complex$ is isomorphic to $\CC{\alpha}{\beta}$ for distinct $\alpha,\beta \in \calS$.  The main goal of this section is to show that the bipartite skeleton graphs of these $\CC{\alpha}{\beta}$ are good expanders.  (For this we will not even need to recall our specific choice of~$\calS$.) Combined with \Cref{thm:trickle} and the connectivity result \Cref{cor:connectedlinks}, it follows that all links of $\complex$ are good expanders.

We begin with a simple observation:
\begin{proposition}
    For $\alpha \neq -\beta$, the (skeleton of) $\CC{\alpha}{\beta}$ is a $p^2$-regular bipartite (multi)graph.
\end{proposition}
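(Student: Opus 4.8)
The plan is to compute both the vertex set and the edges of $\CC{\alpha}{\beta} = \CC{X_{\{\alpha,\beta\}}}{(X_\alpha,X_\beta)}$ directly from the definition of a coset complex, using the normal-form results established earlier. First I would pin down the ``ambient'' group $X_{\{\alpha,\beta\}}$ and the two subgroups $X_\alpha = X_{\alpha,1}$, $X_\beta = X_{\beta,1}$: by the very first Steinberg relation, $X_\alpha = \la x_\alpha(t) : \deg(t) \le 1\ra$ is isomorphic to the additive group $\F[x]^{\le 1}$, which has exactly $p^2$ elements (coefficients of $1$ and $x$, each ranging over $\F_p$); likewise $|X_\beta| = p^2$. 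Since $\{\alpha,\beta\}$ is linearly independent (as $\alpha \neq -\beta$ and they are not parallel — or if $\beta = \alpha$ the statement is trivial since then $X_{\{\alpha,\beta\}} = X_\alpha$ and the complex degenerates, so I would note $\alpha \ne \pm\beta$ is the interesting case), \Cref{substruct} applies.

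Next, the key structural point: in $\CC{G}{(H_1,H_2)}$ with $G = X_{\{\alpha,\beta\}}$, $H_1 = X_\alpha$, $H_2 = X_\beta$, the maximal (here: $1$-dimensional) faces are the pairs $\{gX_\alpha, gX_\beta\}$, and by \Cref{fact:type-preserving} the group $G$ acts transitively on them. The skeleton graph is bipartite with parts $G/X_\alpha$ and $G/X_\beta$, and its edges are the maximal faces (with multiplicity equal to the number of maximal faces spanning a given pair of cosets). To get regularity I would count, for a fixed coset $gX_\alpha$, how many maximal faces contain it: these correspond to cosets $g'X_\beta$ with $gX_\alpha \cap g'X_\beta \ne \emptyset$, and since such a $g'$ can be taken in $gX_\alpha$, the number of distinct such $\beta$-cosets is $|X_\alpha| / |X_\alpha \cap X_\beta|$. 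By \Cref{subint}, $X_\alpha \cap X_\beta = X_{\{\alpha\}\cap\{\beta\}} = X_\emptyset = \{1\}$ (here $X_\emptyset$ is the subgroup generated by the empty set, i.e. trivial — I'd state this cleanly, noting $\{\alpha\}\cap\{\beta\}=\emptyset$ since $\alpha\ne\beta$). Hence each vertex in the $G/X_\alpha$ side has exactly $|X_\alpha| = p^2$ faces through it, and symmetrically $p^2$ on the other side; counted with multiplicity in the skeleton multigraph this gives $p^2$-regularity. Bipartiteness is automatic from the $2$-partite structure of any coset complex with two subgroups.

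The one subtlety I'd want to handle carefully is the multigraph-versus-simple-graph issue: a priori two cosets $gX_\alpha$ and $g'X_\beta$ could be joined by several maximal faces (several values of $h \in G$ with $hX_\alpha = gX_\alpha$ and $hX_\beta = g'X_\beta$), which is exactly why the $1$-skeleton is defined as a multigraph. The number of such $h$'s is $|X_\alpha \cap X_\beta| = 1$ by \Cref{subint} again, so in fact the skeleton is a genuine simple $p^2$-regular bipartite graph, but since the proposition only claims ``(multi)graph'' I need not belabor this — I'd just remark that the multiplicities are all $1$. I expect the main (very mild) obstacle is simply being careful about which ambient group we are in and invoking \Cref{subint} with the right sets; there is no real difficulty, as all the hard work — the normal form of \Cref{substruct} and the intersection formula \Cref{subint} — has already been done. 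In short: the skeleton is the bipartite graph on $X_{\{\alpha,\beta\}}/X_\alpha \sqcup X_{\{\alpha,\beta\}}/X_\beta$ with an edge for each common-intersection pair, it is $p^2$-regular because $|X_\alpha| = |X_\beta| = p^2$ and $X_\alpha \cap X_\beta = \{1\}$, and it is bipartite by construction.
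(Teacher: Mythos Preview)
Your proposal is correct and follows essentially the same approach as the paper: both arguments hinge on \Cref{subint} giving $X_\alpha \cap X_\beta = \{1\}$, from which the degree count $p^2$ follows immediately (the paper packages this step via \Cref{fact:link}, while you unpack the coset counting directly, but the content is identical). Your additional remarks on the multiplicity being~$1$ and the degenerate case $\alpha=\beta$ are sound but not needed for the statement as claimed.
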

\begin{proof}
    From \Cref{fact:link}, the link of a vertex in $X_{\alpha,\beta}/X_\beta$ is isomorphic to $\CC{X_\beta}{X_\alpha \cap X_\beta} = \CC{X_\beta}{1}$, where we used \Cref{subint}.  
    But this is equivalent to saying the neighborhood of a vertex in the skeleton is a set of size $|X_\beta| = p^2$ (recalling \Cref{subsize}).  The same consideration holds for vertices in $X_{\alpha,\beta}/X_\alpha$.
\end{proof}

The key idea we will use in understanding the expansion of  the links $\CC{\alpha}{\beta}$ will be to look at the graph-theoretic \emph{square}, $\CC{\alpha}{\beta}^2$, of (the skeleton of) $\CC{\alpha}{\beta}$.  Since $\CC{\alpha}{\beta}$ is connected and bipartite, we know that its random walk matrix has isolated ``trivial'' eigenvalues of~$\pm 1$, and all other eigenvalues are between $\pm \lambda_2(\CC{\alpha}{\beta})$.  Thus if we exclude from $\CC{\alpha}{\beta}^2$ the ``trivial'' eigenvalue~$1$, its maximum eigenvalue will be $\lambda_2(\CC{\alpha}{\beta})^2$, the square of what we wish to bound.  In fact, since $\CC{\alpha}{\beta}$ is bipartite, $\CC{\alpha}{\beta}^2$ will have two disconnected components corresponding to the two parts of $\CC{\alpha}{\beta}$.  It is a simple and well-known linear algebra fact that these two components have the same eigenvalues (possibly up to some eigenvalues of~$0$).
Hence it suffices for us to bound the eigenvalues of $\CC{\alpha}{\beta}^2$ on only \emph{one} of the two sides, $X_{\alpha,\beta}/X_\alpha$ or $X_{\alpha,\beta}/X_\beta$.  

As we will now show, whenever $\Phi \neq G_2$, at least one of these two sides is an abelian Cayley graph. (Interestingly, we do not know that both sides are.)  Thus we can understand the eigenvalues by elementary methods.  We discuss a potential approach to handling the~$G_2$ case in \Cref{sec:further}.

\begin{theorem}\label{thm:linkexpand}
        Let $\alpha, \beta \in \Phi \neq G_2$, with $\alpha \neq -\beta$. 
        Then the nontrivial eigenvalues of $\CC{\alpha}{\beta}^2$ are at most $2/p$; hence $\lambda_2(K_\sigma) \leq \sqrt{2/p}$ for every $\sigma \in \complex(d-2)$.
\end{theorem}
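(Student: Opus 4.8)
The plan is to exploit the fact, promised in the text, that at least one side of the bipartite coset complex $\CC{\alpha}{\beta}$ — say the side $X_{\{\alpha,\beta\}}/X_\alpha$ — becomes, after passing to the graph-theoretic square, a Cayley graph of an \emph{abelian} group; then bound the nontrivial eigenvalues via character sums. First I would use \Cref{substruct} to get an explicit normal form for elements of $X := X_{\{\alpha,\beta\}}$: every element is uniquely $\prod_{\gamma \in \{\alpha,\beta\}^+} x_\gamma(t_\gamma)$ with $\deg(t_\gamma) \le \hgt(\gamma)$, where the roots $\gamma$ run over $\{\alpha,\beta\}^+ = \{i\alpha + j\beta \in \Phi : i,j \ge 0\}$ and I order the product so that $\alpha$ comes first. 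The vertices of one side of the square are the cosets $X/X_\alpha$; I would identify $X/X_\alpha$ with the group $Q := X / \langle\langle X_\alpha \rangle\rangle^{\mathrm{normal closure}}$... more precisely, since $X_\alpha = X_{\{\alpha\},1}$ need not be normal, I instead use the standard description: the square of $\CC{\alpha}{\beta}$ on the $X/X_\alpha$ side has $gX_\alpha \sim g'X_\alpha$ iff $g^{-1}g' \in X_\alpha X_\beta X_\alpha$, i.e. the step set is the double coset $X_\alpha X_\beta X_\alpha$ pushed down. The key structural claim is that, modulo $X_\alpha$, this quotient is abelian when $\Phi \ne G_2$ — concretely because the roots $i\alpha+j\beta$ with $i \ge 1$ collapse to the identity in $X/X_\alpha$ (they lie in the ``$X_\alpha$-direction'' in a suitable sense), leaving only root subgroups $X_{j\beta}$ and $X_{\alpha+j\beta}$ (for $j$ in a small range determined by the rank-$2$ subsystem $A_1$, $A_2$, or $B_2$), and the Chevalley commutator formula, read via \Cref{rk2com}, shows these surviving pieces commute modulo $X_\alpha$. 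This is where I expect to lean hardest on ruling out $G_2$: in $G_2$ there are roots like $2\alpha+\beta$ and $3\alpha+2\beta$ whose commutator does not vanish in the quotient.

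Granting the abelian structure, write $K_\sigma^2$ restricted to one side as a Cayley graph $\Cay(\Gamma, S)$ with $\Gamma$ an elementary-abelian-by-something abelian $p$-group of the form $\bigoplus_\gamma \F^{\le \hgt(\gamma)}$ (additive group), and $S$ the image of $X_\beta X_\alpha$ (or rather the appropriate double-coset step set) — a set of size $p^2$ coming from choosing $x_\beta(u) x_\alpha(t)$ and pushing down, which will be parametrized by $(t,u) \in \F^{\le 1} \times \F^{\le 1}$, i.e. by $p^2$ pairs, with the $\F^{\le 1}$-valued coordinates entering linearly and higher-degree coordinates entering as the monomials $t^i u^j$ forced by the commutator formula. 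The eigenvalues of $\Cay(\Gamma,S)$ are then $\widehat{S}(\chi) = \frac{1}{|S|}\sum_{s \in S} \chi(s)$ over characters $\chi$ of $\Gamma$; a character is specified by field elements $c_\gamma$ paired against the $\gamma$-coordinate, and $\chi$ evaluated on the step indexed by $(t,u)$ becomes $\psi\big(\sum_\gamma c_\gamma \cdot C_\gamma\, t^{i_\gamma} u^{j_\gamma}\big)$ for $\psi$ a fixed additive character of $\F_p$ composed with trace, and $C_\gamma$ a nonzero structure constant. So each nontrivial eigenvalue is $\frac{1}{p^2}\sum_{t,u \in \F^{\le 1}} \psi(P(t,u))$ where $P$ is a nonzero polynomial of degree at most $3$ (at most $4$ actually for $3\alpha+2\beta$ — but that's $G_2$ only, excluded) in the two ``degree-$\le 1$'' field variables $t,u$, which range over a set of size $p$ each.

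The main obstacle — really the heart of the argument — is bounding this exponential sum $\sum_{t,u \in \F^{\le 1}} \psi(P(t,u))$ by $O(p)$ uniformly over nonzero $P$. This is exactly where the paper says it ``boils down to Schwartz--Zippel.'' The plan is: for fixed $u$, the inner sum $\sum_{t \in \F^{\le 1}} \psi(P(t,u))$ equals $p$ if $t \mapsto P(t,u)$ is constant on the $\F_p$-subspace $\F^{\le 1}$ (equivalently the relevant coefficients vanish) and has small absolute value otherwise — but since $\F^{\le 1}$ is an $\F_p$-subspace, not all of $\F$, I cannot use the clean ``$\psi$ sums to zero'' fact directly; instead I would note that summing $\psi$ of a polynomial over an $\F_p$-subspace $V$ gives $|V|$ iff the polynomial is constant on $V$ after reduction, else the sum has modulus $\le$ (something like $\deg$) $\cdot\,|V|^{1/2}$... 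Actually the cleaner route the authors hint at: the number of pairs $(t,u) \in (\F^{\le 1})^2$ with $P(t,u) = 0$ (in $\F$) is at most $(\deg P)\cdot p$ by Schwartz--Zippel applied over $\F$ together with the fact that a degree-$k$ univariate polynomial has $\le k$ roots — so one bounds the sum by a counting/cancellation argument: group by the value $v = P(t,u)$, show no fiber is too large unless $P$ is degenerate, and in the degenerate cases $P$ reduces to a nonzero polynomial in fewer variables, recurse. Carrying this through to the clean bound $|\widehat S(\chi)| \le 2/p$ for every nontrivial $\chi$ — note $2/p$, not $O(1/p)$, so the constant matters and reflects that $P$ has degree $\le 3$ giving at most $2p$ zeros on the relevant set — is the one genuinely non-routine computation, and I'd handle the ``which polynomials $P$ arise, and are they nonzero'' bookkeeping case-by-case over the three subsystem types $A_1, A_2, B_2$ using \Cref{rk2com}. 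Squaring back, $\lambda_2(K_\sigma) \le \sqrt{2/p}$, as claimed.
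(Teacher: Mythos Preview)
Your high-level plan matches the paper's: restrict to one side of the bipartite graph $\CC{\alpha}{\beta}$, recognize its square as an abelian Cayley graph, compute eigenvalues as character sums, and invoke Schwartz--Zippel after a case split over the rank-$2$ subsystems. But there are two genuine gaps in the execution.

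First, your justification for abelianness is wrong. You write that ``the roots $i\alpha+j\beta$ with $i \ge 1$ collapse to the identity in $X/X_\alpha$,'' but only the root subgroup $X_\alpha$ itself is being quotiented out; the subgroups $X_{\alpha+\beta}$, $X_{2\alpha+\beta}$ survive intact as coordinates of the coset representatives. The actual reason the $X_\alpha$-side is abelian in the $B_2$ case --- and only that side --- is that with $\alpha$ chosen to be the \emph{short} root, the surviving root subgroups $X_\beta, X_{\alpha+\beta}, X_{2\alpha+\beta}$ pairwise commute, because none of $\beta+(\alpha+\beta)$, $\beta+(2\alpha+\beta)$, $(\alpha+\beta)+(2\alpha+\beta)$ lies in $\Phi$. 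On the $X_\beta$-side one would have $X_\alpha$ and $X_{\alpha+\beta}$ present, and $[x_\alpha, x_{\alpha+\beta}]$ is nontrivial; the paper flags this choice of side explicitly.

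Second, and more seriously, you treat the character sum as $\tfrac{1}{|S|}\sum_{t,u}\psi(P(t,u))$ for a \emph{generic} polynomial $P$ of degree $\le 3$, and then reach for Weil-type square-root cancellation or for zero-counting. Neither delivers $2/p$: a generic cubic exponential sum in four $\F_p$-variables can be as large as $\Theta(p^{3.5})$ (take $Q = a^3$), and counting zeros of $P$ simply does not bound $\sum \psi(P)$. The paper's mechanism is different and sharper. With $\alpha$ short, every root $i\alpha+j\beta \in \{\alpha,\beta\}^+\setminus\{\alpha\}$ has $j=1$, so after writing $f_1 = a+bx$, $f_2 = c+dx$, every coordinate of the Cayley step is \emph{linear} in $(a,b)$. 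The exponent therefore has the form $a\cdot h(c,d) + b\cdot h'(c,d)$; the inner average over $(a,b)$ is \emph{exactly} the indicator $\mathbf{1}[h(c,d)=h'(c,d)=0]$, and the eigenvalue equals $\Pr_{c,d}[h=h'=0]$. Only at this point does Schwartz--Zippel enter, bounding that probability by $(\deg h)/p \le 2/p$ since $h,h'$ are at most quadratic in $(c,d)$. Your proposal never isolates this linearity-in-one-variable structure, and without it the argument does not close with the stated constant.
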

\begin{proof}
    When it is relevant, we will follow the convention of calling the shorter of the two roots~$\alpha$ and the longer~$\beta$.  
    Then, with foresight toward Case~3 below, we choose to study the $X_{\alpha,\beta}/X_\alpha$ side of $\CC{\alpha}{\beta}^2$.  
    
    By virtue of \Cref{substruct}, we can describe coset representatives for $X_{\alpha,\beta}/X_\alpha$ fairly simply; fixing an ordering for the roots in which $\alpha$ is last, we can take as coset representatives precisely those elements of the form
    \begin{equation}    \label{eqn:coset1}
        g = \prod\{x_{i \alpha + j \beta}(t_{ij}) : (i,j) \in \N \times \N \setminus \{(1,0)\},\ i\alpha+j\beta \in \Phi,\ \deg(t_{ij}) \leq i + j\}.
    \end{equation}
    Moreover, the $p^2$ neighbors (counted with multiplicity) of vertex $gX_\alpha$ in the squared (multi)graph $\CC{\alpha}{\beta}^2$ are the following cosets:
    \[
        \parens*{g \cdot x_{\alpha}(f_0) \cdot x_\beta(f_1)}X_{\alpha}, \quad \text{for } f_0, f_1 \in \F \text{ of degree at most $1$}.
    \]
    Via the commutator formula one sees that the associated coset representatives are
    \begin{equation}    \label{eqn:coset2}
        g \cdot x_{\alpha}(f_0) \cdot x_\beta(f_1) \cdot x_{\alpha}(-f_0)  = g \cdot x_\beta(f_1) \cdot [x_\beta(f_1),x_{\alpha}(-f_0)] = g \cdot x_\beta(f_1) \cdot \prod_{\substack{i,j \in \Z_+ \\ i \alpha + j \beta \in \Phi}} x_{i\alpha+j\beta}(C_{ij}^{\beta,\alpha} (-f_0)^i f_1^j).
    \end{equation}

    By \Cref{rem:2rootspan}, either $\alpha+\beta \notin \Phi$, or the root subsystem of $\Phi$ spanned by $\alpha$ and $\beta$ is one of $A_2$, $B_2$, or $G_2$. We will skip the case when $\alpha$ and $\beta$ span $G_2$, as it only arises when $\Phi = G_2$. Now as in the proof of \Cref{rk2gen}, we will do case analysis on the possible sets $R = \{i,j : i\alpha + j\beta \in \Phi\}$.

    \paragraph{Case 1: $R = \{\alpha,\beta\}$.}   If $\alpha+\beta \notin \Phi$, then $X_\alpha$ and $X_\beta$ commute by \Cref{fact:chevcom}, and it is easy to check that $\CC{\alpha}{\beta}$ is in fact the complete $p^2$-regular bipartite graph; hence the nontrivial eigenvalues of $\CC{\alpha}{\beta}^2$ are all~$0$.
    

    \paragraph{Case 2: $\{\alpha,\beta,\alpha+\beta\}$.}
    It was shown in~\cite{KO18}, and alternatively in~\cite[Corollary 5.6]{HS19}, that $\lambda_2(\CC{\alpha}{\beta}) = \sqrt{1/p}$; equivalently, the nontrivial eigenvalues of $\CC{\alpha}{\beta}$ are at most~$1/p$. Here we give a different proof of this fact, the strategy of which will be generalized in Case~3.
    
    From \Cref{eqn:coset1,eqn:coset2} we have that a typical coset representative 
    $
        g = x_\beta(t_{01})\cdot x_{\alpha+\beta}(t_{11})
    $
    is connected in $\CC{\alpha}{\beta}^2$ to the following coset representatives, for $f_0, f_1 \in \F$ of degree at most~$1$:
    \[
        x_\beta(t_{01})\cdot x_{\alpha+\beta}(t_{11})\cdot x_{\beta}(f_1) \cdot x_{\alpha+\beta}(-C_{11}^{\beta,\alpha} f_0f_1) = x_{\beta}(t_{01}+f_1) \cdot x_{\alpha+\beta}(t_{11} - C_{11}^{\beta,\alpha} f_0f_1).
    \]
    Reparameterizing with $f_2 = -C_{11}^{\beta,\alpha}f_0$ (and recalling $C_{11}^{\beta,\alpha} \neq 0$), it is evident that $\CC{\alpha}{\beta}^2$ is an abelian Cayley group, wherein each vertex is a pair $(\ell,q) $ with $\ell$~linear and $q$~quadratic, hence $(\ell,q) \cong \F_p^5$, and with edges involve adding a pair $(f_1,f_1f_2)$ for $f_1,f_2$ linear. With~$x$ denoting the field indeterminate, we can write $f_1 = a+bx$ and $f_2 = c + dx$; then the $X_{\alpha,\beta}/X_\alpha$ side of our graph~$\CC{\alpha}{\beta}^2$ may be identified as an abelian Cayley graph on~$\F_p^5$ with symmetric generating set 
    \[
        \{(a,b,ac,ad+bc,bd) : a,b,c,d \in \F_p^4\}.
    \]
    Then it is well known that the eigenvalues of this graph are given by the exponential sums
    \begin{align}
        &\mathrel{\phantom{=}} \E_{\ba,\bb,\bc,\bd \sim \F_p}\bracks*{\Exp_p(r_1\ba + r_2 \bb + r_3\ba\bc + r_4(\ba\bd+\bb\bc) + r_5\bb\bd)}  \nonumber \\
        &= \E_{\bc,\bd}\bracks*{
            \E_{\ba}\bracks*{\Exp_p(\ba \cdot h(\bc,\bd))}
            \E_{\bb}\bracks*{\Exp_p(\bb \cdot h'(\bc,\bd))}} \label{yo}
    \end{align}
    for $r_1, \dots, r_5 \in \F_p$, where $\Exp_p(z) = e^{2\pi i z/p}$, and
    \[
        h(c,d) = r_1 + r_3c + r_4d, \qquad h'(c,d) = r_2 + r_4c + r_5d.
    \]
    Notice whenever the outcome $\bc,\bd$ has $h(\bc,\bd) \neq 0$, the  quantity $\E_{\ba}\bracks*{\Exp_p(\ba \cdot h(\bc,\bd))}$ inside \Cref{yo} becomes~$0$.  On the other hand, if $h(\bc,\bd) = 0$ then this quantity is~$1$.  Similar considerations hold for~$h'$, and we conclude that the eigenvalue in \Cref{yo} is precisely
    \[
        \Pr_{\bc,\bd}\bracks*{h(\bc,\bd) = h'(\bc,\bd) = 0}.
    \]
    Of course if $r_1 = \cdots = r_5 = 0$ then $h,h'$ are formally~$0$ and the above is the trivial eigenvalue of~$1$. 
    But otherwise, at least one of $h,h'$ is nonzero --- say, $h$ --- and, being an  affine linear polynomial over~$\F_p$, it has $\Pr_{\bc,\bd}[h(\bc,\bd)] \leq 1/d$.  This shows that indeed the nontrivial eigenvalues of $\CC{\alpha}{\beta}^2$ are at most~$1/p$.

    \paragraph{Case 3: $R = \{\alpha,\beta,\alpha+\beta, 2\alpha+\beta\}$.}  
    As mentioned earlier, here we have named the shorter root~$\alpha$ and the longer root~$\beta$.
    From \Cref{eqn:coset1,eqn:coset2} we have that a typical coset representative 
    $
        g = x_\beta(t_{01})\cdot x_{\alpha+\beta}(t_{11}) \cdot x_{2\alpha+\beta}(t_{21})
    $
    is connected in $\CC{\alpha}{\beta}^2$ to the following coset representatives, for $f_0, f_1 \in \F$ of degree at most~$1$:
    \begin{multline*}
        x_\beta(t_{01})\cdot x_{\alpha+\beta}(t_{11})\cdot x_{2\alpha+\beta}(t_{21}) \cdot x_{\beta}(f_1) \cdot x_{\alpha+\beta}(-C_{11}^{\beta,\alpha} f_0f_1) \cdot x_{2\alpha+\beta}(C_{12}^{\beta,\alpha} f_0^2f_1) \\
        = x_\beta(t_{01} + f_1)\cdot x_{\alpha+\beta}(t_{11}-C_{11}^{\beta,\alpha} f_0f_1))\cdot x_{2\alpha+\beta}(t_{21}+C_{12}^{\beta,\alpha} f_0^2f_1).
    \end{multline*}
    (We remark that had we looked at the $X_{\alpha,\beta}/X_\beta$ side of $\CC{\alpha}{\beta}^2$, we would not have gotten all of the commutativity in the above calculation.)
    Reparameterizing again with $f_2 = -C_{11}^{\beta,\alpha} f_0$, this is
    \[
        x_\beta(t_{01} + f_1)\cdot x_{\alpha+\beta}(t_{11}+f_1f_2) \cdot x_{2\alpha+\beta}(t_{21}+C f_1f_2^2)
    \]
    for some constant $C \neq 0$ in~$\F_p$.
    Similar to Case~2, we see that this is an abelian Cayley graph on $\F_p^9$ with symmetric generating set
    \begin{align*}
        \{(a,b,ac,ad+bc,bd,Cac^2,C(bc^2+2acd),C(2bcd+ad^2),Cbd^2)
        : a,b,c,d \in \F_p\}.
        \end{align*}
        As before, the eigenvalues of this graph are given by
        \begin{multline*}
         \E_{\ba,\bb,\bc,\bd \in \F_p}\bigl[\Exp_p( r_1\ba+r_2\bb+r_3\ba\bc+r_4(\ba\bd+\bb\bc) + r_5\bb\bd +r_6C\ba\bc^2+ \\
        r_7C(\bb\bc^2+2\ba\bc\bd)+r_8C(2\bb\bc\bd+\ba\bd^2)+r_9C\bb\bd^2)\bigr]
        \end{multline*}
        \begin{equation} 
            \mathrel{=}  \E_{\bc,\bd}\bracks*{
            \E_{\ba}\bracks*{\Exp_p(\ba \cdot h(\bc,\bd))}
            \E_{\bb}\bracks*{\Exp_p(\bb \cdot h'(\bc,\bd))}}, \label{sum2}
        \end{equation}
        for all $r_1, \ldots, r_9 \in \F_p$, where
        \begin{align*}
        h(c,d) &= r_1+r_3c+r_4d+Cr_6c^2+2Cr_7cd+Cr_8d^2,\\
        h'(c,d) &= r_2+r_4c+r_5d+Cr_7c^2+2Cr_8cd+Cr_9d^2.
        \end{align*}
        The argument is now the same as in Case~2, except we reason that if $h$ is nonzero, then $\Pr_{\bc,\bd}[h(\bc,\bd)] \leq 2/p$ by Schwartz--Zippel, since now~$h$ is quadratic.
    \end{proof}

\begin{remark}\label{g2graphs}
When $\Phi = G_2$ two other graphs can arise as $\CC{\alpha}{\beta}$. The squares of these graphs are not Cayley graphs of abelian groups, and so the previous approach fails. For completeness we now give explicit description of the squared graphs $\CC{\alpha}{\beta}^2$ restricted to the vertices on the side $X_{\alpha,\beta}/X_\alpha$.

From \Cref{fig:root-eg} we see that if $\alpha, \beta \in G_2$ and $\alpha+\beta \in G_2$ then $\angle(\alpha,\beta) \in \{60^\circ, 120^\circ,150^\circ\}$. If $\angle(\alpha,\beta) = 60^\circ$ or if $\angle(\alpha,\beta) = 120^\circ$ and $\alpha$ and $\beta$ are long roots, the analysis is the same as in Case 2 of the previous proof. There are two remaining cases: (I)~$\alpha$ and $\beta$ are simple roots and $\angle(\alpha,\beta) = 150^\circ$ as in \Cref{fig:root-eg}; (II)~$\angle(\alpha,\beta) = 120^\circ$ and $\alpha$ and $\beta$ are short roots.

\paragraph{Case I: $\angle(\alpha,\beta) = 150^\circ$.} 
From \Cref{eqn:coset1}, a typical coset representative in $X_{\alpha,\beta}/X_\alpha$ is 
\[
    g=x_\beta(t_{01}) \cdot x_{\alpha+\beta}(t_{11})\cdot x_{2\alpha+\beta}(t_{21})\cdot x_{3\alpha+\beta}(t_{31}) \cdot x_{3\alpha+2\beta}(t_{32})
\]
with $\deg(t_{ij}) \leq i+j$. By \Cref{eqn:coset2}, the neighbors of this coset representative in $\CC{\alpha}{\beta}^2$ are parameterized by
\[g \cdot  x_\beta(f_1-t_{01})\cdot  [x_\beta(f_1-t_{01}),x_\alpha(-f_0)]\]
for all $f_0,f_1$ of degree at most 1. Using \Cref{rk2com}, one can show that this is the multigraph with vertices $(t_{01},t_{11},t_{21},t_{31},t_{32})$ whose neighbors are parameterized by
\[(f_1+t_{01}, -f_0f_1+t_{11},f_0^2f_1+t_{21},-f_0^3f_1+t_{31},-f_1(t_{31}+3t_{21}f_0+f_0^3f_1 )+t_{32}).\]

\paragraph{Case II: $\angle(\alpha,\beta) = 120^\circ$.}
A typical coset representative in $X_{\alpha,\beta}/X_\alpha$ is \[g=x_{\beta}(t_{01})\cdot x_{\alpha+\beta}(t_{11})\cdot x_{2\alpha+\beta}(t_{21})\cdot x_{\alpha+2\beta}(t_{12}),\]
where $\deg(t_{ij}) \le i+j$. The neighbors of this coset representative are parameterized by
\[g \cdot x_{\beta}(f_1-t_{01}) \cdot [x_{\beta}(f_1-t_{01}), x_\alpha(-f_0)] \]
for all $f_0,f_1$ of degree at most 1. Using \Cref{rk2com}, one can show that this is the multigraph with vertices $(t_{01},t_{11},t_{21},t_{12})$ whose neighbors are parameterized by
\[(f_1+t_{01},-2f_0f_1+t_{11},3f_0^2f_1+t_{21},3f_1(t_{11}+f_0f_1)+t_{12}).\]
\end{remark}
\section{Concluding}
Finally we can prove \Cref{thm:main}.
\begin{proof}[Proof of \Cref{thm:main}]
\quad
\begin{enumerate}
\item By \Cref{thm:order}, we have $|\Esc{\Phi}{\F}| = p^{\Theta(m)}$.
    By \Cref{subsize}, the subgroups $H_\alpha$ have size at most $p^{O(1)}$.
    (Here the $\Theta(\cdot)$ and $O(\cdot)$ depend only on~$\Phi$.)
    Hence there are $p^{\Theta(m)}$ total cosets, and the claim that $|\complex(0)| = p^{\Theta(m)}$ follows.
    Note that as $m$ increases by~$1$, the size of the complex grows by a constant factor~$p^{O(1)}$; thus we have the linear growth in size needed for a strongly explicit family, and the \emph{exact} number of vertices~$n$ can be computed efficiently in $\poly(m) = \polylog(n)$ time (by \Cref{thm:order}).  The resulting family is strongly explicit thanks to \Cref{thm:computing}: one can and construct all the group elements in $\Esc{\Phi}{\F}$ efficiently, one can identify the vertices (cosets) explicitly and naively by listing all their elements (recall each~$H_\alpha$ has constant size), and one can compute the complex's adjacency structure (e.g., list all maximal faces to which a given vertex belongs) thanks to the efficient ($\poly(m) = \polylog(n)$ time) group arithmetic from \Cref{thm:computing}.
\item Again, by \Cref{subsize} the subgroups $H_\alpha$ have size at most $p^{O(1)}$. The number of maximal faces containing a vertex is therefore at most $p^{O(1)\cdot (d+1)} = p^{O(1)}$.
\item This is \Cref{cor:connectedlinks}.
\item This is \Cref{thm:linkexpand}.
\item From \Cref{subint}, $\bigcap_{\alpha \in \calS} H_\alpha = \{1\}$. The claim then follows from \Cref{fact:type-preserving}.  In addition, by \Cref{obs:centerint} we have that $\bigcap_{\alpha \in \calS} Z H_\alpha = Z$, and so in the quotient group $\G{\Phi}{\F}$ the subgroups $H_i Z/Z$ intersect trivially. Hence we also get a simply-transitive action for the adjoint Chevalley groups.\qedhere
\end{enumerate}

\end{proof}
\subsection{Further questions}  \label{sec:further}
As mentioned in \Cref{g2graphs}, we do not know if the $2$-dimensional complexes obtained from our construction for $\Esc{G_2}{\F}$ yield HDX families, either in the case  $\calS = \{\alpha, \beta, -\alpha-\beta\}$ (as we selected in \Cref{def:rootchoice}) or in the alternative case $\calS = \{\alpha,\alpha+\beta,-2\alpha-\beta\}$ mentioned in \Cref{rem:alt}. We note that this latter case with $\calS = \{\alpha,\alpha+\beta,-2\alpha-\beta\}$ is particularly appealing, as all vertex links are isomorphic (i.e., the $1$-skeleton is a \emph{graph of constant link}). Additionally, to the best of our knowledge none of the links in this case arise in previous HDX constructions. This is in contrast to our other constructions, which always contain some links isomorphic to those studied in \cite{KO18}.

One approach to prove the expansion of these links is to count the number of closed walks of some fixed length $k$ in one side of their square, which (by the trace method) equals the sum of the $k$th powers of the eigenvalues of their adjacency matrices. By \Cref{g2graphs}, the number of length-$k$ paths starting and ending at a fixed vertex on the side $X_{\alpha,\beta}/X_\alpha$ equals the number of solutions to the following systems of equations, corresponding to the first and second cases in \Cref{g2graphs}, respectively:
\begin{align*}
0 &= \sum_{i=1}^k g_i = \sum_{i=1}^k f_ig_i = \sum_{i=1}^k f_i^2g_i = \sum_{i=1}^k f_i^3g_i=\sum_{i=1}^k -g_i(f_i^3g_i+\sum_{j=1}^{i-1} (f_j^3g_j+3f_j^2g_jf_i)),\\
0 &=\sum_{i=1}^k g_i = \sum_{i=1}^k f_ig_i = \sum_{i=1}^k f_i^2g_i = \sum_{i=1}^k g_i(f_ig_i - 2\sum_{j=1}^{i-1} f_jg_j).
\end{align*}
Here $f_i$ and $g_i$ are linear polynomials in $\F_p[x]$. The graphs corresponding to the first and second systems have $p^n$ vertices, where $n=20$ and $n=13$, respectively. Therefore if for some particular~$k$ one could bound the number of solutions to either of these by, say, $p^{4k-n} + p^{3.99k}$, expansion of the corresponding complexes would follow. To show this it would suffice to show that the varieties defined over $\mathbb{C}$ by these systems are irreducible and of dimension at most $4k-20$ in the first case, or $4k-13$ in the second case, for some~$k$. This seems potentially tractable for a computer algebra system.

Finally, we have shown that the untwisted groups of Lie type act (simply) transitively on the maximal faces of certain HDXs. Are there more ``combinatorial" families of groups --- perhaps the symmetric group or the generalized symmetric groups --- which admit transitive actions on HDXs?
\nocite{Sch10}
\bibliography{refs}
\end{document}